\let\oldtocsection=\tocsection
\let\oldtocsubsection=\tocsubsection
\let\oldtocsubsubsection=\tocsubsubsection
\renewcommand{\tocsection}[2]{\hspace{0em}\oldtocsection{#1}{#2}}
\renewcommand{\tocsubsection}[2]{\hspace{1.8em}\oldtocsubsection{#1}{#2}}
\renewcommand{\tocsubsubsection}[2]{\hspace{4.4em}\oldtocsubsubsection{#1}{#2}}
\definecolor{linkcolor}{HTML}{e88d67} 
\definecolor{citecolor}{HTML}{e88d67} 
\definecolor{urlcolor}{HTML}{e88d67} 
\definecolor{myNewColorA}{HTML}{fec3a6}
\definecolor{myNewColorB}{HTML}{ffaf80}
\definecolor{myNewColorC}{HTML}{fb8f67}
\definecolor{seagreen}{HTML}{337180}
\definecolor{mseagreen}{HTML}{369673}
\definecolor{darksalmon}{HTML}{e88d67}
\definecolor{silver}{HTML}{bbbbbb}
\definecolor{flowerblue}{HTML}{4e77fc}
\definecolor{tomato}{HTML}{ff6347}
\definecolor{orange}{HTML}{f2b13d}
\definecolor{darkgray}{HTML}{939393}
\DeclareMathOperator{\ad}{ad}
\DeclareMathOperator{\Mat}{Mat}
\newcommand{\brackets}[1]{\left( #1 \right)}
\newcommand{\PoissonBrackets}[1]{\left\{ #1 \right\}}
\newcommand{\LieBrackets}[1]{\left[ #1 \right]}
\newcommand{\angleBrackets}[1]{\langle #1 \rangle}
\newcommand{\dsum}{\displaystyle \sum}
\newcommand{\quot}[2]{{\raisebox{.3em}{$#1\!$}\bigl/\raisebox{-.3em}{$\!#2$}}}
\newcommand{\Painleve}{Painlev{\'e} }
\newcommand{\PPainleve}{Painlev{\'e}}
\newcommand{\tomato}[1]{{\color{tomato} #1}}
\theoremstyle{plain}
\newtheorem{thm}{Theorem}[]
\newtheorem{lem}{Lemma}[]
\newtheorem{prop}{Proposition}[]
\theoremstyle{definition}
\newtheorem{defn}{Definition}[]
\theoremstyle{remark}
\newtheorem{rem}{Remark}
\begin{document}

    \title[Non-abelian $\rm{P}_2$-systems: linearizations and monodromy surfaces]{Different linearizations of~non-abelian second Painlev\'e systems and related monodromy surfaces}

    \author{Irina Bobrova}
    \noindent\address{\noindent 
    Faculty of Mathematics, HSE University, Usacheva str. 6, Moscow, Russia}
    \email{ia.bobrova94@gmail.com}

    \subjclass[2020]{Primary 34M55, 46L55; Secondary 34M56, 34M40}
    \keywords{non-abelian Painlev\'e equations, isomonodromic Lax pairs, non-abelian monodromy surfaces}
    
    \maketitle

    \begin{flushright}
    \textit{
    Dedicated to the 75th birthday of V.S. Retakh
    }
    \end{flushright}
    
    \begin{abstract}
    In this paper, we discuss a connection between different linearizations for non-abelian analogs of the second \Painleve equation. For each of the analogs, we listed the pairs of the Harnard-Tracy-Widom (HTW), Flaschka-Newell (FN), and Jimbo-Miwa (JM) types. A method for establishing the HTW-JM correspondence is suggested. For one of the non-abelian analogs, we derive the corresponding non-abelian generalizations of the monodromy surfaces related to the FN- and JM-type linearizations. A natural Poisson structure associated with these monodromy surfaces is also discussed. 
    \end{abstract}
    
    \tableofcontents
    
    \section{Introduction}

The famous differential \Painleve equations were derived more than one hundred years ago \cite{painleve1900memoire}, \cite{painleve1902equations}, \cite{gambier1910equations}. These equations have being studied in various branches of mathematics and mathematical physics and have important properties. In particular, they are related to a system of scalar differential equations \cite{fuchs1907lineare}, \cite{garnier1912equations} integrable in the sense of the Frobenius theorem. In the paper \cite{jimbo1981monodromy}, it was established that the \Painleve equations can be linearized. This fact is connected with monodromy preserving deformations related to vector bundles of rank 2. Due to the isomonodromic property, the space of solutions of the \Painleve equations can be parameterized by the monodromy data. Namely, each of the equations can be associated with the zero-locus of an affine cubic that is usually called \textit{the~monodromy~surface} (e.g., \cite{van2009moduli}). Informally speaking, the dimension of the space of solutions is equal to the dimension of the monodromy surface.

In this paper we are interested in non-abelian generalizations of the well-known monodromy surfaces related to different linearizations of the non-commutative analogs for the second \Painleve equation, obtained in~\cite{Adler_Sokolov_2020_1}. The commutative second \Painleve equation is given by
\begin{align}
    &&
    f''
    &= 2 f^3 + x f + \brackets{\theta - \tfrac12},
    &
    f(x), \,\, x, \,\, \theta 
    &\in \mathbb{C},
    &&
\end{align}
and can be rewritten as the system \cite{okamoto1980polynomial}
\begin{align}
    &&
    \label{eq:P2sys}
    \tag*{$\rm{P}_2$}
    &\left\{
    \begin{array}{lcl}
         f'
         &=& - f^2 + g - \frac12 x,  
         \\[2mm]
         g'
         &=& 2 f g + \theta,
    \end{array}
    \right.
    &
    f(x), \,\, g(x), \,\, x, \,\, \theta
    &\in \mathbb{C}.
    &&
\end{align}
This system admits several types of the isomonodromic Lax pairs (see \cite{flaschka1980monodromy}, \cite{jimbo1981monodromy}, \cite{harnad1993hamiltonian}), i.e. it is equivalent to the compatibility condition of the following linear system
\begin{align}
    \label{eq:linsys}
    \left\{
    \begin{array}{lcl}
         \partial_{\lambda} \Phi (\lambda, x)
         &=& A (\lambda, x) \, \Phi (\lambda, x)
         ,  
         \\[3mm]
         \partial_{x} \Phi (\lambda, x)
         &=& B (\lambda, x) \, \Phi (\lambda, x)
         ,
    \end{array}
    \right.
\end{align}
where $\lambda$ is the spectral parameter and the matrix $A (\lambda, x)$ may have different dependence on $\lambda$. The matrix $A (\lambda, x)$ is classified into three types: 
\begin{itemize}
    \item 
    the Harnad-Tracy-Widom (HTW) type \cite{harnad1993hamiltonian}
    \begin{align}
        \label{eq:HTWpair_form}
        A (\lambda, x)
        &= A_{1} \lambda + A_0 (x) + A_{-1} (x) \lambda^{-1},
    \end{align}
    
    \item
    the Flaschka-Newell (FN) type \cite{flaschka1980monodromy}
    \begin{align}
        \label{eq:FNpair_form}
        A (\lambda, x)
        &= A_2 \lambda^2 
        + A_{1} (x) \lambda 
        + A_0 (x) + A_{-1} (x) \lambda^{-1},
    \end{align}
    
    \item
    and the Jimbo-Miwa (JM) type \cite{jimbo1981monodromy}
    \begin{align}
        \label{eq:JMpair_form}
        A (\lambda, x)
        &= A_2 \lambda^2 
        + A_{1} (x) \lambda 
        + A_0 (x).
    \end{align}
\end{itemize}

In the papers \cite{kapaev1999note}, \cite{suleimanov2008quantizations}, and \cite{joshi2009linearization}, it was established that all these linearizations of the \ref{eq:P2sys} system are equivalent. The equivalence between the HTW- and FN-type pairs is given by a gauge transformation supplemented with a change of the spectral parameter. 

\begin{prop}[see \cite{kapaev1999note}] \label{thm:P2_HTW_FN}
Let the functions $W (\mu, x)$ and $Z (\zeta, x)$ be solutions of the linear system \eqref{eq:linsys} with \eqref{eq:HTWpair_form} and \eqref{eq:FNpair_form}, respectively.
Then the {\rm{HTW}}-type pair is gauge equivalent to the {\rm{FN}}-type pair by the Fabri-type map\rm{:}
\begin{align}
    &&
    Z (\zeta, x)
    &= G (\zeta) W (\mu, x),
    &
    G (\zeta)
    &= 
    \begin{pmatrix} 
    \zeta^{-\frac12} & - \zeta^{\frac12} 
    \\ 
    \zeta^{-\frac12} & \zeta^{\frac12} 
    \end{pmatrix}
    ,
    &
    \mu
    &= - \zeta^2.
    &&
\end{align}
\end{prop}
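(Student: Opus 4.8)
The plan is to check directly that the substitution $Z(\zeta,x)=G(\zeta)\,W(\mu,x)$ with $\mu=-\zeta^2$ intertwines the two members of the $\mathrm{HTW}$ pair with the two members of the $\mathrm{FN}$ pair, handling the deformation equation and the spectral equation separately. Write the $\mathrm{HTW}$ pair as $\partial_\mu W = A_W(\mu,x)\,W$, $\partial_x W = B_W(\mu,x)\,W$ with $A_W = A_1\mu + A_0(x) + A_{-1}(x)\mu^{-1}$, and the $\mathrm{FN}$ pair as $\partial_\zeta Z = A_Z(\zeta,x)\,Z$, $\partial_x Z = B_Z(\zeta,x)\,Z$. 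Since $G$ is independent of $x$, differentiating $Z=GW$ in $x$ gives at once
\[ \partial_x Z = G\,\partial_x W = G\,B_W(-\zeta^2,x)\,G^{-1}\,Z, \]
so that $B_Z = G\,B_W(-\zeta^2,\cdot)\,G^{-1}$ is simply the conjugate of the $\mathrm{HTW}$ deformation matrix, and the same homogeneity argument used below shows it has the $\mathrm{FN}$ deformation shape.

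For the spectral equation I would apply the chain rule, using $d\mu/d\zeta=-2\zeta$ together with $\partial_\mu W = A_W\,W$:
\[ \partial_\zeta Z = G'(\zeta)\,W + G(\zeta)\,(-2\zeta)\,A_W(-\zeta^2,x)\,W = \Big(G'(\zeta)-2\zeta\,G(\zeta)\,A_W(-\zeta^2,x)\Big)G(\zeta)^{-1}Z, \]
whence, after inserting $A_W(-\zeta^2,x) = -A_1\zeta^2 + A_0(x) - A_{-1}(x)\zeta^{-2}$,
\[ A_Z(\zeta,x) = G'G^{-1} + 2\zeta^3\,G A_1 G^{-1} - 2\zeta\,G A_0 G^{-1} + 2\zeta^{-1}\,G A_{-1}G^{-1}. \]
It then remains to expand each conjugate with the explicit $G$ and to read off that $A_Z$ has exactly the $\mathrm{FN}$ shape $A_2\zeta^2 + \widetilde A_1(x)\zeta + \widetilde A_0(x) + \widetilde A_{-1}(x)\zeta^{-1}$; a short computation gives, for instance, the anchor value $G'G^{-1}=-\tfrac{1}{2\zeta}\sigma_1$.

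The half-integer powers of $\zeta$ carried by $G$ are not the difficulty: since every entry of $G$, of $G^{-1}$, and of $G'$ is a half-integer power of $\zeta$, each product $GXG^{-1}$ of a constant matrix $X$ (and $G'G^{-1}$) automatically has entries in $\zeta^{-1}\mathbb{C}\oplus\mathbb{C}\oplus\zeta\,\mathbb{C}$, so every term above is integer-powered. The genuine obstacle is the \emph{Laurent range}: a priori the last formula produces a spurious leading term $\propto\zeta^{3}$ from $GA_1G^{-1}$ and a spurious pole $\propto\zeta^{-2}$ from $GA_{-1}G^{-1}$, both outside the $\mathrm{FN}$ window $[-1,2]$. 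Their vanishing is exactly where the explicit structure of the $\mathrm{HTW}$ coefficients of \cite{harnad1993hamiltonian} must be used: the leading matrix $A_1$ is nilpotent and strictly triangular in the gauge singled out by $G$, so that $GA_1G^{-1}$ carries only the $\zeta^{-1}$ component and $2\zeta^{3}\,GA_1G^{-1}\propto\zeta^{2}$ reproduces the $\mathrm{FN}$ head $A_2$; an analogous triangularity of $A_{-1}$ kills the $\zeta^{-2}$ tail and leaves the single pole $\zeta^{-1}$. Once these two cancellations are verified, matching the surviving $\zeta^{2},\zeta^{1},\zeta^{0},\zeta^{-1}$ coefficients identifies $A_2,\widetilde A_1,\widetilde A_0,\widetilde A_{-1}$ in terms of $A_1,A_0,A_{-1}$ and of $(f,g)$, and the conjugation giving $B_Z$ is checked in the same way. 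I expect the bookkeeping of these two degree cancellations to be the only real point, the remainder being routine substitution.
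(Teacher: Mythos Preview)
The paper does not supply its own proof of this proposition: it is stated with the attribution ``see \cite{kapaev1999note}'' and used thereafter as a known input (the text only remarks that the same Fabri-type map is applied to the non-abelian HTW pairs to produce the FN pairs listed in the appendix). There is therefore no argument in the paper to compare against.

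Your direct verification is the standard route and is correct in outline. A useful bookkeeping identity that makes the degree count transparent is that, for any constant $X=\left(\begin{smallmatrix}a&b\\ c&d\end{smallmatrix}\right)$,
\[
G\,X\,G^{-1}
=\tfrac12\begin{pmatrix}a+d & a-d\\ a-d & a+d\end{pmatrix}
+\tfrac{c}{2}\begin{pmatrix}-1&-1\\ 1&1\end{pmatrix}\zeta
+\tfrac{b}{2}\begin{pmatrix}-1&1\\ -1&1\end{pmatrix}\zeta^{-1},
\]
so the potentially spurious terms in $A_Z=G'G^{-1}+2\zeta^{3}GA_1G^{-1}-2\zeta\,GA_0G^{-1}+2\zeta^{-1}GA_{-1}G^{-1}$ are of orders $\zeta^{4}$, $\zeta^{3}$ (not only $\zeta^{3}$ as you wrote) from the $\zeta$- and $\zeta^{0}$-parts of $GA_1G^{-1}$, and of order $\zeta^{-2}$ from the $\zeta^{-1}$-part of $GA_{-1}G^{-1}$. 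Their vanishing requires precisely $(A_1)_{21}=0$, $(A_1)_{11}=(A_1)_{22}=0$, and $(A_{-1})_{12}=0$, all of which hold for the concrete $\mathrm{HTW}$ matrices of the $\mathrm{P}_2$ pair. Two small clarifications: the structural condition on $A_{-1}$ is the \emph{opposite} triangularity to that of $A_1$ (what is needed is $(A_{-1})_{12}=0$, not upper-triangularity), and the same check must be run for $B_Z=GB_W(-\zeta^{2})G^{-1}$, where the absence of a $\zeta^{-1}$ term requires $(B_0)_{12}=0$. That these structural hypotheses are genuinely used is visible in the paper itself: for the \ref{eq:P22sys} pair one has $(A_{-1})_{12}\neq 0$, and the resulting \ref{eq:P22_FNpair} pair indeed acquires a $\zeta^{-2}$ term.
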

In order to derive a transformation that maps the HTW-type pair to the JM-type pair, one should consider a generalized Laplace transformation \cite{suleimanov2008quantizations}, \cite{joshi2009linearization}.

\begin{prop}[see  \cite{joshi2009linearization}] \label{thm:P2_HTW_JM}
Let the functions $W (\mu, x)$ and $Y (\lambda, x)$ be solutions of the linear problem \eqref{eq:linsys} with \eqref{eq:HTWpair_form} and \eqref{eq:JMpair_form}, respectively.
Then the {\rm{HTW}}-type pair is equivalent to the {\rm{JM}}-type pair by the generalized Laplace transformation\rm{:}
\begin{align}
    W (\mu, x)
    &= \int_L \, 
    e^{\lambda \, \mu} \, 
    Y(\lambda, x) \, d\lambda
    .
\end{align}
\end{prop}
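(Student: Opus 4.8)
The plan is to verify directly that the integral transform intertwines the two linear systems, by substituting the ansatz $W(\mu,x)=\int_L e^{\lambda\mu}\,Y(\lambda,x)\,d\lambda$ into the HTW spectral equation $\partial_\mu W = A^{\mathrm{HTW}}(\mu,x)\,W$ with $A^{\mathrm{HTW}}$ of the form \eqref{eq:HTWpair_form}, and reducing everything to the JM equation $\partial_\lambda Y = A^{\mathrm{JM}}(\lambda,x)\,Y$ with $A^{\mathrm{JM}}$ of the form \eqref{eq:JMpair_form}. The computational engine is the operational dictionary of the Laplace transform. Differentiating under the integral sign gives $\partial_\mu W = \int_L \lambda\,e^{\lambda\mu}Y\,d\lambda$, so multiplication by $\lambda$ inside the integral corresponds to $\partial_\mu$; integration by parts gives $\int_L e^{\lambda\mu}\partial_\lambda Y\,d\lambda = \bigl[e^{\lambda\mu}Y\bigr]_{\partial L} - \mu\int_L e^{\lambda\mu}Y\,d\lambda$, so that, once the boundary term is killed by a suitable choice of contour $L$, the operator $\partial_\lambda$ corresponds to multiplication by $-\mu$.

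First I would clear the $\lambda^{-1}$ term in \eqref{eq:HTWpair_form} by multiplying the HTW spectral equation by $\mu$, obtaining $\mu\,\partial_\mu W = (A_{1}\mu^2 + A_0\mu + A_{-1})W$ in the notation of \eqref{eq:HTWpair_form}. Applying the dictionary term by term — using the correspondences $\mu W \leftrightarrow -\partial_\lambda Y$, $\mu^2 W \leftrightarrow \partial_\lambda^2 Y$, and $\mu\,\partial_\mu W \leftrightarrow -(Y + \lambda\,\partial_\lambda Y)$, each obtained by one or two integrations by parts — turns this into a differential relation in $\lambda$ for $Y$ under the integral sign. I would then eliminate $\partial_\lambda Y$ and $\partial_\lambda^2 Y$ using the JM equation \eqref{eq:JMpair_form} and its $\lambda$-derivative, so that the relation becomes purely algebraic, namely a polynomial in $\lambda$ acting on $Y$. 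Setting the coefficient of each power of $\lambda$ to zero yields a finite set of matrix identities relating the HTW coefficients to the JM coefficients; since both sets are explicit functions of $f,g,x,\theta$ for the \ref{eq:P2sys} system, this is a direct check.

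The same scheme applies to the $x$-part of \eqref{eq:linsys}: substituting the ansatz into $\partial_x W = B^{\mathrm{HTW}}W$ and using the dictionary reduces it to a relation that must follow from $\partial_x Y = B^{\mathrm{JM}}Y$, again verified by matching powers of $\lambda$. Since the two transformed equations are consistent with the common compatibility condition \ref{eq:P2sys}, this establishes that $W$ solves the HTW problem whenever $Y$ solves the JM problem. Invertibility of the correspondence, and hence genuine equivalence rather than a one-way implication, then follows from the inverse (Borel–Laplace) transform, which reconstructs $Y$ from $W$ along a dual contour.

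I expect two steps to be the main obstacles. The first is the choice of the contour $L$: it must join two directions in the $\lambda$-plane along which $e^{\lambda\mu}Y(\lambda,x)$ decays, so that every integration-by-parts boundary term $\bigl[e^{\lambda\mu}Y\bigr]_{\partial L}$ vanishes; this ties the admissible contours to the Stokes directions of the irregular singularity of the JM system at $\lambda=\infty$. The second, more structural, obstacle is a degree mismatch: the naive transform of the first-order HTW equation produces a relation that is \emph{second} order in $\lambda$ (through the $\partial_\lambda^2 Y$ term, which after substitution contributes a spurious $\lambda^4$ monomial via the square of the leading JM coefficient). Collapsing this back to the first-order JM equation forces the degenerate normal forms of the leading coefficients — the kernel of the top HTW coefficient must absorb the range of the square of the leading JM coefficient — and matching the regular-singular $A_{-1}\mu^{-1}$ term of HTW against the behaviour of $Y$ near $\lambda=0$ fixes the remaining low-order identities.
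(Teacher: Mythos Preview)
Your approach is sound and the operational Laplace dictionary is correctly set up, but it differs from the route the paper itself takes. The paper does not prove Proposition~\ref{thm:P2_HTW_JM} directly (it is cited from \cite{joshi2009linearization}, whose argument passes through an auxiliary $3\times 3$ pair linear in the spectral parameter); instead, Section~\ref{sec:HTWtoJM} presents an alternative scheme that sidesteps precisely the second-order obstacle you flag. Rather than multiplying the HTW spectral equation by the scalar $\mu$ --- which, as you note, produces $\mu^2 W \leftrightarrow \partial_\lambda^2 Y$ and hence a spurious $\lambda^4$ contribution that must be killed by rank conditions on the leading coefficients --- the paper multiplies on the left by a \emph{matrix} prefactor $C_1\mu + C_2$, chosen so that $(C_1\mu + C_2)(A_1\mu + A_0 + A_{-1}\mu^{-1}) = C_3\mu + C_4$ is already linear in $\mu$ (equation~\eqref{eq:Ccond}). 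Under the Laplace transform this yields directly the first-order system $(C_1\lambda - C_3)\,\partial_\lambda Y = (C_2\lambda - C_1 - C_4)\,Y$, and the JM form follows upon left-multiplication by $(C_1\lambda - C_3)^{-1}$. Your route would work but requires checking by hand, for each system, the cancellations of the type $A_1^{\mathrm{HTW}}(A_2^{\mathrm{JM}})^2 = 0$; the paper's route packages all such constraints into the solvability of the single algebraic condition~\eqref{eq:Ccond}, and in addition makes immediately visible when the JM pair degenerates --- namely when $C_1\lambda - C_3$ fails to be invertible, as happens for the \ref{eq:P22sys} system.
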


\begin{rem}
\label{rem:countour}
The contour $L$ is chosen to eliminate any remained terms that arise from the integration-by-parts. 
Such a choice can always be made, since the solutions of the \Painleve equations have polynomial asymptotics only. 
\end{rem}

In the matrix case, the \ref{eq:P2sys} system has three non-equivalent generalizations, labeled as the \ref{eq:P20sys}, \ref{eq:P21sys}, and \ref{eq:P22sys} systems, which have constant arbitrary matrix parameters and possess isomonodromic representations \eqref{eq:linsys} with \eqref{eq:HTWpair_form} or \eqref{eq:FNpair_form}  \cite{Adler_Sokolov_2020_1}, i.e. the systems follow from the zero-curvature condition
\begin{align}
    \label{eq:zrc}
    \partial_x A
    - \partial_{\lambda} B
    &= [B, A].
\end{align}
In general, the systems can be written in the form
\begin{align} 
\label{eq:P2matsys}
     &\left\{
    \begin{array}{lcl}
         f'
         &=& - f^2 + g - \frac12 x \, \mathbb{I} - c_1,  
         \\[2mm]
         g'
         &=& 2 g f + \beta [g, f] 
         + c_2 f + f c_3 + c_4,
    \end{array}
    \right.
\end{align}
where $f = f(x)$, $g = g(x)$ are matrix-valued functions, $c_i$, $i = 1, 2, 3, 4$ are arbitrary constant matrices, and $\beta \in \mathbb{C}$. More precisely, in \eqref{eq:P2matsys} one needs to set
\begin{align}
    \text{\ref{eq:P20sys}}:&
    &
    \beta
    &= -1
    ,
    &
    c_1
    &= b
    ,
    &
    c_2
    &= 0
    ,
    &
    c_3
    &= 0
    ,
    &
    c_4
    &= \theta \, \mathbb{I},
    &
    \theta 
    &\in \mathbb{C}
    ;
    \\[1mm]
    \text{\ref{eq:P21sys}}:&
    &
    \beta
    &= -2
    ,
    &
    c_1
    &= 0
    ,
    &
    c_2
    &= 0
    ,
    &
    c_3
    &= 0
    ,
    &
    c_4
    &= a
    ;
    \\[1mm]
    \text{\ref{eq:P22sys}}:&
    &
    \beta
    &= -3
    ,
    &
    c_1
    &= b
    ,
    &
    c_2
    &= 2 b
    ,
    &
    c_3
    &= - 2 b
    ,
    &
    c_4
    &= a,
    &
    [b, a]
    &= 2b
    .
\end{align}

Starting from these systems, instead of the matrix case, we can consider non-abelian analogs for the~\ref{eq:P2sys} system as follows (see also \cite{bobrova2022classification}, \cite{bobrova2022non}, \cite{bobrova2023classification}). Let $\mathcal{R}$ be an associative division ring with the unit $\mathbf{1}$ over the field $\mathbb{C}$. For brevity, we will omit the unit. We suppose that elements $f$, $g$, and $c_i$ are generators of $\mathcal{R}$ and the element $x$ belongs to a center $\mathcal{Z} (\mathcal{R})$ of the ring $\mathcal{R}$. We also suppose that series are allowed in $\mathcal{R}$. 
In addition to a product in $\mathcal{R}$, one is able to introduce an opposite product, the transition to which is defined by the map below. 
\begin{defn}
\label{defn:tau}
An involution $\tau: \mathcal{R} \to \mathcal{R}$ is called \textit{a transposition} if
\begin{enumerate}
    \item
    it acts trivially on the generators of $\mathcal{R}$, e.g. $\tau(f) = f$, and
    \vspace{1mm}
    
    \item 
    for any $F$, $G \in \mathcal{R}$ we have $\tau(F \, G) = \tau(G) \, \tau(F)$.
\end{enumerate}
\end{defn}
For example, $\tau\left([f, g]\right) = - [f, g]$, where $[f, g] := f g - g f$. 

Since we would like to work with systems of non-abelian ODEs, let us equip the ring by a derivation.
\begin{defn}
\label{defn:der}
An $\mathbb{C}$-linear map $\partial_x : \mathcal{R} \to \mathcal{R}$ is called \textit{a derivation of $\mathcal{R}$} if it satisfies the given properties:
\begin{enumerate}
    \item 
    $\partial_x (x) = \mathbf{1}$,
    \vspace{1mm}
    
    \item
    $\partial_x(\alpha) = 0$ for any $\alpha \in \mathbb{C}$,
    \vspace{1mm}
    
    \item
    $\partial_x (F \, G) = \partial_x (F) \, G + F \, \partial_x(G)$ for any $F$, $G \in \mathcal{R}$ (\textit{the Leibniz rule}).
\end{enumerate}
\end{defn}
For simplicity, we will use $\phantom{f}'$ instead of $\partial_x$ and denote $\partial_x(f) = f'$, $\partial_x(f') = f''$, and so on. Since $c_i$ are arbitrary non-abelian constants, $\partial_x(c_i) = 0$. 
Let $\lambda \in \mathcal{Z} (\mathcal{R})$ be \textit{a spectral parameter}. In addition to the properties given in Definition \ref{defn:der}, the derivation $\partial_{\lambda}$ must also satisfy the following properties
\begin{align}
    \partial_{\lambda} (\lambda)
    &= \mathbf{1},
    &
    \partial_{\lambda} (x)
    &= 0,
    &
    \partial_{\lambda} (f)
    &= 0,
    &
    \partial_{\lambda} (g)
    &= 0,
    &
    (\lambda)'
    &= 0.
\end{align}

For the non-abelian $\text{P}_2$ systems of the form \eqref{eq:P2matsys}, different linearizations \eqref{eq:linsys} of the HTW-, FN-, and JM-types are considered. Namely, for each of the systems we present such Lax pairs. 

We extend the list of the HTW-type pairs by those obtained in the paper \cite{Bobrova_Sokolov_2022} by the limiting transitions of the corresponding Lax pairs for the matrix $\rm{P}_4$-type systems. They are given by the \ref{eq:P20_HTWpair} pair for \ref{eq:P20sys}, \ref{eq:P21_HTWpair_2} and \ref{eq:P21_HTWpair} for \ref{eq:P21sys}, and \ref{eq:P22_HTWpair} for the \ref{eq:P22sys} system. To derive the FN-type pairs from them, one is able to use the same Fabry-type map as in Proposition \ref{thm:P2_HTW_FN}. The resulting FN-type pairs are listed in Appendix~\ref{ap:FNpairs}.

In Section \ref{sec:HTWtoJM}, we suggest a method how to construct in the non-abelian case the JM-type pairs from the HTW-pairs by a non-abelian analog of the generalized Laplace transformation given in Proposition \ref{thm:P2_HTW_JM}. Namely, we rewrite a HTW pair in a special form to which it is convenient to apply the transformation. Unlike the paper \cite{joshi2009linearization}, we avoid an auxiliary $3\times3$-pair that is linear in the spectral parameter. Similarly to Remark \ref{rem:countour}, we assume that one is able to eliminate terms which arise from the integration-by-parts. But, actually, that is an extra problem to prove that such a contour can be chosen. We leave this issue for a further research. 

\begin{rem}
    In the paper \cite{Bobrova_Sokolov_2022}, the authors suggested a method for constructing a non-abelian isomonodromic Lax pair for a given \PPainleve-type system. They used an idea of a non-abelianization of the well-known commutative pairs. Namely, we need to construct a non-abelian ansatz for a non-abelian system and determine the undefined coefficients, by using the zero-curvature condition \eqref{eq:zrc}. The main problem of this method is how to construct a non-abelian ansatz whose coefficients would be determined from the zero-curvature representation. 
\end{rem}

\begin{rem}
    There is another method based on the symmetry reductions of integrable PDEs \cite{Adler_Sokolov_2020_1}. Besides the fact that the reductions in the non-abelian case are much more complicated than in the commutative setting, it is not clear which PDE is reduced to a desired system of the \PPainleve-type.
\end{rem}

It turns out that the \ref{eq:P20sys} system has a polynomial JM-type pair, the \ref{eq:P20_JMpair} pair, which can be generalized to a fully non-commutative case (see Remark \ref{rem:fnc_P20}). The \ref{eq:P21sys} system has polynomial and non-polynomial JM-type pairs (pairs \ref{eq:P21_JMpair_2} and \ref{eq:P21_JMpair}, respectively). In the case of the \ref{eq:P22sys} system, the JM-pair is degenerate\footnote{We follow the terminology suggested in \cite{joshi2007linearization}, \cite{joshi2009linearization}.} (see the \ref{eq:P22_JMpair} pair). As a result, we present for each of the non-abelian $\rm{P}_2$ systems linearizations of the HTW, FN, and JM types. Namely, we obtain the following
\begin{thm}
The non-abelian \ref{eq:P20sys}, \ref{eq:P21sys}, and \ref{eq:P22sys} systems have linearizations of the {\rm{HTW}}, {\rm{FN}}, and {\rm{JM}} types.
\end{thm}

Section \ref{sec:monodromy_surf} is devoted to the non-abelian monodromy surfaces associated with the FN and JM type linearizations. In order to derive them, we need to determine the monodromy data that consists of the topological monodromy, the formal monodromy, and the Stokes matrices. This data is defined by a formal solution near a singular point. Using Proposition \ref{thm:irrsol}, that is a non-abelian generalization of Proposition 2.2 in \cite{jimbo1981monodromy1} and a generalization of Proposition 4.1 in \cite{Bertola2018}, one is able\footnote{Under some assumptions.} to find the monodromy data in the non-commutative case. Once the monodromy data are determined\footnote{Note that the monodromy data are integrals of motions.}, we can construct the monodromy relation that should parameterize the space of solutions of the corresponding \PPainleve-type system. The equation following from the monodromy relation is called \textit{the~monodromy~surface}.

Regarding the \ref{eq:P20sys} system, the monodromy surfaces associated with the \ref{eq:P20_FNpair} and \ref{eq:P20_JMpair} pairs can be derived. 
\begin{prop}
Let $x_i$, $i = 1, 2, 3$, and $q$ belong to $\mathcal{R}$, and $\alpha$, $\theta \in \mathbb{C}$. Then the monodromy surfaces related to the \ref{eq:P20_FNpair} and \ref{eq:P20_JMpair} pairs, respectively, are given by the equations
\begin{align}
    x_1 \, x_2 \, x_3 
    + x_1 
    + x_2 
    + x_3
    - 2 \sin(\pi \, \theta) \, q^{-1}
    &= 0,
    \\
    x_1 \, x_2 \, x_3
    - x_1 
    - x_2 \, (\alpha \, q^2)
    - x_3
    + (1 + \alpha) \, q^2
    &= 0.
\end{align}
\end{prop}
We would like to note that $q' = 0$, i.e. this is a conserved quantity for the \ref{eq:P20sys} system. In the commutative setting, $q = 1$ and, thus, the relations above become the well-known affine cubics for the second \Painleve equation. Note also that in the commutative case these equations are equivalent by a simple scaling that cannot be generalized to the non-abelian setting.
Regarding the remaining systems \ref{eq:P21sys} and \ref{eq:P22sys}, the monodromy data are not isomonodromic (it follows from Propositions \ref{thm:FNsol} and \ref{thm:JMsol}) and, thus, we cannot parameterize their solutions by the Stokes multipliers. But, in fact, one can ask about a gauge-transformation that makes the monodromy data isomonodromic. As far as the author knows, such a transformation does not exist.

Inspired by the papers \cite{mazzocco2012confluence}, \cite{chekhov2017painleve}, we devote Section \ref{sec:PoisStr} to a short discussion on a natural Poisson structure connected with the considered non-commutative monodromy surfaces. 
In the commutative setting, the monodromy surfaces can be regarded as a family of affine cubics. These cubic surfaces have a volume form $\Omega$ defined by the Poincar\'e residue formulas. This 2-form is holomorphic on the non-singular part of the affine cubic and determines the Poisson brackets on the surface in the usual way. Generalizing this construction to the non-abelian case, we start from the opposite side, by constructing non-abelian Poisson brackets associated with non-commutative cubics. However, the development of this construction requires further investigation and will be addressed elsewhere. 

\subsection*{Acknowledgments}
The author is thankful to V.~Poberezhnyi, V.~Retakh, V.~Rubtsov,
V.~Sokolov, and B. Suleimanov for lots of invaluable suggestions and comments. The author is also grateful to I. Gaiur for fruitful discussions and to I.H.E.S. for the hospitality and excellent working facilities during the author's visit. This research was partially supported by the International Laboratory of Cluster Geometry HSE, RF Government grant № 075-15-2021-608 and by Young~Russian Mathematics award. 
\newline
The author would also like to thank the referee for the suggested improvements.

\section{The HWT-JM correspondence}
\label{sec:HTWtoJM}
In this section we disscuss a method how to derive a Jimbo-Miva pair by a given pair of the Harnard-Tracy-Widom type.\footnote{This approach is motivated by the problem of construction the Jimbo-Miwa type pair for the \ref{eq:P22sys} system.} Recall that $x$, $\mu$, $\lambda \in \mathcal{Z} (\mathcal{R})$, where $\mu$ and $\lambda$ are spectral parameters, and matrices below are defined over $\mathcal{R}$.

The scheme of construction is the following. 
\begin{itemize}
\item 
Consider the linear problem of the HTW-type:
\begin{align}
    \left\{
    \begin{array}{lcl}
         \partial_{\mu} W (\mu, x)
         &=& \brackets{
         A_1 \mu + A_0 + A_{-1} \mu^{-1}
         } 
         \, W (\mu, x)
         ,  
         \\[3mm]
         \partial_{x} W (\mu, x)
         &=& \brackets{
         B_1 \mu + B_0
         }
         \, W (\mu, x)
         .
    \end{array}
    \right.
\end{align}

\item
Suppose that the first equation can be rewritten in the form
\begin{align}
    \label{eq:HTWpair_altform}
    \brackets{
    C_1 \mu + C_2
    } \, \partial_{\mu} W (\mu, x)
    &= \brackets{
    C_3 \mu + C_4
    } \, W (\mu, x).
\end{align}
So, we should solve the following matrix equation
\begin{align}
    \label{eq:Ccond}
    \brackets{
    C_3 \mu + C_4
    }
    - \brackets{
    C_1 \mu + C_2
    }
    \brackets{
    A_1 \mu + A_0 + A_{-1} \mu^{-1}
    } 
    &= 0.
\end{align}

\item
If the equation has a solution, then it is easy to find a linear problem for the function $Y (\lambda, x)$, that is related to $W (\mu, x)$ by the generalized Laplace transform:
\begin{align}
    \label{eq:genLaplacetransform}
    W (\mu, x)
    &= \int_{L} \, e^{\lambda \, \mu} \, Y (\lambda, x) \, d \lambda.
\end{align}
Namely, the linear problem for $Y (\lambda, x)$ reads as
\begin{align}
    \label{eq:JMsys}
    \left\{
    \begin{array}{rcl}
         \brackets{
         C_1 \lambda - C_3
         } \, 
         \partial_{\lambda} Y (\lambda, x)
         &=& \brackets{
         C_2 \lambda - C_1 - C_4
         } 
         \, Y (\lambda, x)
         ,  
         \\[3mm]
         \partial_{x} Y (\lambda, x)
         &=& - B_1 \, \partial_{\lambda} Y (\lambda, x)
         + B_0 \, Y (\lambda, x)
         .
    \end{array}
    \right.
\end{align}

\item
Set $C (\lambda) = C_1 \lambda - C_3$. If there exists a matrix $C^{-1}$ such that
\begin{align}
    \label{eq:invmatrix_cond}
    C \, C^{-1}
    &= C^{-1} \, C
    = \mathbb{I},
\end{align}
then the linear problem for $Y (\lambda, x)$ has the JM-type. Otherwise, according to the terminology in \cite{joshi2007linearization}, \cite{joshi2009linearization}, the function $Y (\lambda, x)$ admits \textit{a degenerate pair}.
\end{itemize}

\begin{defn}
\label{def:invmat}
If  for a generic matrix $C$ condition \eqref{eq:invmatrix_cond} holds, we will call such a matrix \textit{invertible}.
\end{defn}

Note that matrices $C_k$ are defined up to an invertible factor $C \in \Mat_n (\mathcal{R})$, namely, matrices $\tilde C_k = C \, C_{k}$ are also solutions of \eqref{eq:Ccond}. Due to this obvious observation, we have the following 
\begin{lem}
\label{thm:uniqC}
Let $C$ be an invertible generic matrix. Then a solution of \eqref{eq:Ccond} is defined up to a left factor $C$ and does not depend on it.
\end{lem}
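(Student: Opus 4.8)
The plan is to prove the two halves of the statement by direct substitution, once into the defining equation \eqref{eq:Ccond} and once into the candidate linear problem \eqref{eq:JMsys}, while being careful that in the non-abelian setting the ambiguity enters purely as a \emph{left} factor, so that the invertible matrix $C$ can be factored out and cancelled consistently.

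First I would record the precise sense in which the solution is defined up to a left factor. The left-hand side of \eqref{eq:Ccond} is $(C_3 \mu + C_4) - (C_1 \mu + C_2)(A_1 \mu + A_0 + A_{-1}\mu^{-1})$, which is left-$\mathcal{R}$-linear in the quadruple $(C_1, C_2, C_3, C_4)$. Hence replacing each $C_k$ by $\tilde C_k = C\,C_k$ reproduces exactly $C$ times the original expression, i.e. $C\cdot 0 = 0$, so $(\tilde C_1, \tilde C_2, \tilde C_3, \tilde C_4)$ is again a solution. Here I use that $\mu \in \mathbb{C}$ is central and that $C$ is independent of $\mu$, so $C$ commutes past every power of $\mu$. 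This is the ``defined up to a left factor $C$'' assertion.

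Next I would show that the resulting $Y$-problem is insensitive to this factor. Substituting $\tilde C_k = C\,C_k$ into the first line of \eqref{eq:JMsys} and using that $\lambda \in \mathbb{C}$ is central, the coefficients factor as $\tilde C_1 \lambda - \tilde C_3 = C\,(C_1 \lambda - C_3)$ and $\tilde C_2 \lambda - \tilde C_1 - \tilde C_4 = C\,(C_2 \lambda - C_1 - C_4)$, so the equation becomes $C\,(C_1 \lambda - C_3)\,\partial_\lambda Y = C\,(C_2 \lambda - C_1 - C_4)\,Y$. Invoking invertibility in the sense of Definition \ref{def:invmat} and equation \eqref{eq:invmatrix_cond}, I left-multiply by $C^{-1}$ to recover the original first line verbatim; the second line $\partial_x Y = -B_1 \partial_\lambda Y + B_0 Y$ contains no $C_k$ and is therefore untouched. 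Thus the linear problem \eqref{eq:JMsys} built from $(\tilde C_k)$ coincides with the one built from $(C_k)$, which is the ``does not depend on it'' assertion.

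The only genuine point requiring care, and the one I expect to be the main obstacle, is the non-commutativity. The cancellation works precisely because the ambiguity is a left factor and $\lambda, \mu$ are central, so $C$ sits on the outside of each coefficient and is removed by $C^{-1}$ on the left. Were the factor to act on the right, or were $C$ allowed to depend on $\lambda$ (so that it no longer commuted with $\partial_\lambda$), the cancellation in the first line of \eqref{eq:JMsys} would fail. I would therefore make explicit at the outset that $C \in \Mat_n(\mathcal{R})$ is taken $\lambda$-independent and that invertibility means the two-sided condition \eqref{eq:invmatrix_cond}, so that $C^{-1}$ is available as a left multiplier.
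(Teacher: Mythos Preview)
Your proposal is correct and matches the paper's approach. In fact the paper does not give a separate proof of this lemma at all: it simply records the observation immediately preceding the lemma that $\tilde C_k = C\,C_k$ are again solutions of \eqref{eq:Ccond}, calls this ``obvious'', and states the lemma as a consequence. Your write-up spells out both halves (left-linearity of \eqref{eq:Ccond} in the $C_k$, and cancellation of the common left factor in \eqref{eq:JMsys} via $C^{-1}$) with the appropriate care about centrality of $\lambda,\mu$ and $\lambda$-independence of $C$, so it is a strictly more detailed version of what the paper intends.
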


Following the scheme, we present for each of the non-abelian $\rm{P}_2$ systems a linear problem of the JM-type. In the cases of the \ref{eq:P20sys} and \ref{eq:P21sys} systems, these pairs are polynomial in $f$ and $g$. The second pair of the HTW-type for the \ref{eq:P21sys} system transforms into a non-polynomial pair of the JM-type. In the case of \ref{eq:P22sys}, the JM pair is degenerate.

\subsection{Case \texorpdfstring{\ref{eq:P20sys}}{P20}}
Consider the system
\begin{align}
    \label{eq:P20sys}   
    \tag*{$\rm{P}_2^0$}
    \left\{
    \begin{array}{lcl}
         f'
         &=& - f^2 + g - \frac12 x - b,  
         \\[2mm]
         g'
         &=& g f + f g + \theta,
    \end{array}
    \right.
\end{align}
where $f$, $g$, $b \in \mathcal{R}$, $x \in \mathcal{Z} (\mathcal{R})$, $\theta \in \mathbb{C}$, and $b' = 0$. This system possesses the HTW pair (see \cite{Irfan_2012}, \cite{Adler_Sokolov_2020_1}, \cite{Bobrova_Sokolov_2022})
\begin{align} \label{eq:P20_HTWpair}
    \tag*{$\rm{HTW}_2^0$}
    \left\{
    \begin{array}{lcl}
        \partial_{\mu} W (\mu, x)
        &=&
        \LieBrackets{
        \begin{pmatrix}
        0
        & 
        2
        \\[0.9mm]
        0 
        & 
        0
        \end{pmatrix}
        \mu 
        + 
        \begin{pmatrix}
        - 2 f
        & 
        2 f^2 - g + x + 2 b
        \\[0.9mm]
        - 2
        & 
        2 f
        \end{pmatrix}
        + 
        \begin{pmatrix}
        0 & 0 
        \\[0.9mm]
        - g & - \theta
        \end{pmatrix}
        \mu^{-1}
         } W (\mu, x),
        \\[5mm]
        \partial_{x} W (\mu, x)
        &=& \LieBrackets{
        \begin{pmatrix}
        0
        & 
        1
        \\[0.9mm]
        0 
        & 
        0
        \end{pmatrix}
        \mu 
        + 
        \begin{pmatrix}
        - f
        & 
        0 
        \\[0.9mm]
        - 1
        & 
        f
        \end{pmatrix}
         } W (\mu, x).
    \end{array}
    \right.
\end{align}

Following the procedure mentioned above, we obtain
\begin{prop}
The \ref{eq:P20_HTWpair} pair transforms by the map \eqref{eq:genLaplacetransform} to the following pair of the {\rm{JM}}-type{\rm:}
\begin{align} \label{eq:P20_JMpair}
    \tag*{$\rm{JM}_2^0$}
    \left\{
    \begin{array}{lcl}
        \partial_{\lambda} Y (\lambda, x)
        &=&
        \left[
        \begin{pmatrix}
        2 & 0
        \\[0.9mm]
        0 & 0
        \end{pmatrix}
        \lambda^2
        + 
        \begin{pmatrix}
        0 & - 2 f^2 + g - x - 2 b
        \\[0.9mm]
        - 2 & 0
        \end{pmatrix}
        \lambda
        \right.
        \\[3mm]
        && \qquad
        \left.
        +
        \begin{pmatrix}
        - 2 f^2 + g
        & 
        2 f^3 - f g + f (x + 2 b) + \theta - 1
        \\[0.9mm]
        - 2 f 
        & 
        2 f^2 - g + x + 2 b
        \end{pmatrix}
         \right] Y (\lambda, x),
        \\[5mm]
        \partial_{x} Y (\lambda, x)
        &=& \LieBrackets{
        \begin{pmatrix}
        1 & 0 
        \\[0.9mm] 
        0 & 0
        \end{pmatrix} 
        \lambda
        +
        \begin{pmatrix}
        0 & - f^2 + \frac12 g - \frac12 x - b
        \\[0.9mm] 
        - 1 & f
        \end{pmatrix}
         } Y (\lambda, x).
    \end{array}
    \right.
\end{align}
\end{prop}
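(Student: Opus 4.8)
The plan is to run the construction of Section~\ref{sec:HTWtoJM} on the \ref{eq:P20_HTWpair} pair and to check that the output is \ref{eq:P20_JMpair} after a harmless rescaling of the spectral parameter. First I would read off $A_1=\brackets{\begin{smallmatrix}0&2\\0&0\end{smallmatrix}}$, the middle term $A_0$, and $A_{-1}=\brackets{\begin{smallmatrix}0&0\\-g&-\theta\end{smallmatrix}}$, together with $B_1$, $B_0$, and substitute them into \eqref{eq:Ccond}. Expanding $\brackets{C_1\mu+C_2}\brackets{A_1\mu+A_0+A_{-1}\mu^{-1}}$ and collecting powers of $\mu$, the $\mu^{2}$- and $\mu^{-1}$-coefficients produce the structural constraints $C_1A_1=0$ and $C_2A_{-1}=0$; since $A_1$ and $A_{-1}$ have rank one these force the first column of $C_1$ and the second column of $C_2$ to vanish. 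The $\mu^{1}$- and $\mu^{0}$-coefficients then simply read off $C_3=C_1A_0+C_2A_1$ and $C_4=C_1A_{-1}+C_2A_0$.

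By Lemma~\ref{thm:uniqC} a solution of \eqref{eq:Ccond} is fixed only up to a left invertible factor, so I am free to take the simplest constant representatives $C_1=\brackets{\begin{smallmatrix}0&1\\0&0\end{smallmatrix}}$ and $C_2=\brackets{\begin{smallmatrix}0&0\\1&0\end{smallmatrix}}$; keeping $C_1,C_2$ scalar-valued is convenient because it removes every ordering ambiguity from the products $C_kA_j$. This yields $C_3=\brackets{\begin{smallmatrix}-2&2f\\0&2\end{smallmatrix}}$, the matrix $C_4$ with rows $(-g,-\theta)$ and $(-2f,\,2f^2-g+x+2b)$, and hence $C(\lambda)=C_1\lambda-C_3=\brackets{\begin{smallmatrix}2&\lambda-2f\\0&-2\end{smallmatrix}}$. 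This $C(\lambda)$ is upper-triangular with invertible scalar diagonal, so it is invertible in the sense of Definition~\ref{def:invmat}, with polynomial two-sided inverse $C(\lambda)^{-1}=\brackets{\begin{smallmatrix}1/2&(\lambda-2f)/4\\0&-1/2\end{smallmatrix}}$. This is the one genuinely non-commutative point of the argument, and here triangularity settles it.

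I would then assemble the spectral equation $\partial_\lambda Y=C(\lambda)^{-1}\brackets{C_2\lambda-C_1-C_4}Y$ of \eqref{eq:JMsys}. Because $C(\lambda)^{-1}$ is linear in $\lambda$ and $C_2\lambda-C_1-C_4$ is linear in $\lambda$, the product is a genuine quadratic polynomial, with no surviving cubic term. Carrying out the multiplication, and watching that the only non-commuting products keep $f$ on the left (so that $f\brackets{2f^2-g+x+2b}=2f^3-fg+f(x+2b)$ appears exactly as in \ref{eq:P20_JMpair}), reproduces the three coefficient matrices of the first line of \ref{eq:P20_JMpair} up to the overall normalization $\lambda\mapsto 2\lambda$, accompanied by multiplying the $\partial_\lambda$-equation by $2$. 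This is the trivial equivalence $Y(\lambda,x)\mapsto Y(2\lambda,x)$, which merely fixes the leading coefficient to $\brackets{\begin{smallmatrix}2&0\\0&0\end{smallmatrix}}$; I would absorb it into the normalization of the Laplace kernel once and for all.

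Finally, for the isomonodromy equation I would apply \eqref{eq:genLaplacetransform} to $\partial_xW=\brackets{B_1\mu+B_0}W$. Using $\mu\,W=-\int_L e^{\lambda\mu}\partial_\lambda Y\,d\lambda$ gives the non-standard form $\partial_xY=-B_1\partial_\lambda Y+B_0Y$ of \eqref{eq:JMsys}; substituting the spectral equation $\partial_\lambda Y=A(\lambda)Y$ and noting that $B_1=\brackets{\begin{smallmatrix}0&1\\0&0\end{smallmatrix}}$ selects only the $\lambda$-linear second row of $A(\lambda)$ keeps the result linear in $\lambda$ and produces exactly the second line of \ref{eq:P20_JMpair} after the same rescaling. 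Since the generalized Laplace transform is invertible, the compatibility of \ref{eq:P20_HTWpair}---which is equivalent to \ref{eq:P20sys}---passes automatically to the new pair, so no independent zero-curvature check is required. Within this scope the computation is routine; the genuine obstacle, which the paper itself flags in Remark~\ref{rem:countour}, is the justification that the contour $L$ can be chosen so that the integration-by-parts boundary terms vanish, and I would assume this throughout exactly as in Proposition~\ref{thm:P2_HTW_JM}.
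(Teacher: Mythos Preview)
Your proposal is correct and follows essentially the same route as the paper's proof: solve \eqref{eq:Ccond} using the rank-one structure of $A_1$ and $A_{-1}$, invoke Lemma~\ref{thm:uniqC} to normalize $C_1,C_2$, invert the triangular $C(\lambda)$, and rescale the spectral parameter. The paper writes the rescaling as $\lambda\mapsto\tfrac12\lambda$ rather than your $\lambda\mapsto 2\lambda$, but this is just the inverse convention; your explicit treatment of the $\partial_x$-equation via substitution of $\partial_\lambda Y=A(\lambda)Y$ into $-B_1\partial_\lambda Y+B_0Y$ is a step the paper leaves implicit.
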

\begin{rem}
A gauge-equivalent Lax pair was presented in \cite{Adler_Sokolov_2020_1}. The authors did not study the relation between these two different pairs for the \ref{eq:P20sys} system.
\end{rem}
\begin{proof}

Solving \eqref{eq:Ccond}, we obtain the following matrices $C_k = (c_{k, ij})_{i, j = 1, 2}$:
\begin{gather}
    \begin{aligned}
        C_1
        &= 
        \begin{pmatrix}
        0 & c_{1, 12}
        \\[0.9mm]
        0 & c_{1, 22}
        \end{pmatrix}
        ,
        &&&
        C_2
        &= 
        \begin{pmatrix}
        c_{2, 11} & 0 
        \\[0.9mm]
        c_{2, 21} & 0
        \end{pmatrix}
        ,
        &&&
        C_3
        &= 
        \begin{pmatrix}
        - 2 c_{1, 12}
        & 
        2 c_{1, 12} f + 2 c_{2, 11}
        \\[0.9mm]
        - 2 c_{1, 22}
        & 
        2 c_{1, 22} f + 2 c_{2, 21}
        \end{pmatrix}
        ,
    \end{aligned}
    \\[3mm]
    C_{4}
    = 
    \begin{pmatrix}
    - c_{1, 12} g - 2 c_{2, 11} f
    &
    - \theta c_{1, 12}
    + c_{2, 11} (2 f^2 - g + x + 2 b)
    \\[0.9mm]
    - c_{1, 22} g - 2 c_{2, 21} f
    &
    - \theta c_{1, 22}
    + c_{2, 21} (2 f^2 - g + x + 2 b)
    \end{pmatrix}.
\end{gather}
Note that they can be rewritten as
\begin{gather}
    \begin{aligned}
        C_1
        &= 
        \begin{pmatrix}
        c_{1, 12} & c_{2, 11}
        \\[0.9mm]
        c_{1,22} & c_{2, 21}
        \end{pmatrix}
        \begin{pmatrix}
        0 & 1
        \\[0.9mm]
        0 & 0
        \end{pmatrix}
        ,
        &&&
        C_2
        &= 
        \begin{pmatrix}
        c_{1, 12} & c_{2, 11}
        \\[0.9mm]
        c_{1,22} & c_{2, 21}
        \end{pmatrix}
        \begin{pmatrix}
        0 & 0
        \\[0.9mm]
        1 & 0
        \end{pmatrix}
        ,
    \end{aligned}
    \\[3mm]
    \begin{aligned}
        C_3
        &= 
        \begin{pmatrix}
        c_{1, 12} & c_{2, 11}
        \\[0.9mm]
        c_{1,22} & c_{2, 21}
        \end{pmatrix}
        \begin{pmatrix}
        -2 & 2 f
        \\[0.9mm]
        0 & 2
        \end{pmatrix}
        ,
        &&&
        C_{4}
        &= 
        \begin{pmatrix}
        c_{1, 12} & c_{2, 11}
        \\[0.9mm]
        c_{1,22} & c_{2, 21}
        \end{pmatrix}
        \begin{pmatrix}
        - g & - \theta 
        \\[0.9mm]
        - 2 f & 2 f^2 - g + x + 2 b
        \end{pmatrix}
        ,
    \end{aligned}
\end{gather}
i.e. a solution of equation \eqref{eq:Ccond} is unique up to a factor of the form 
$ \small
        \begin{pmatrix}
        c_{1, 12} & c_{2, 11}
        \\
        c_{1,22} & c_{2, 21}
        \end{pmatrix}
$ (see Proposition \ref{thm:uniqC}). Then the first equation of system \eqref{eq:JMsys} reads
\begin{align} 
    \label{eq:eq1JMP20}
    \begin{pmatrix}
    2 & \lambda - 2 f 
    \\[0.9mm]
    0 & - 2
    \end{pmatrix}
    \, 
    \partial_{\lambda} Y (\lambda, x)
    &=
    \begin{pmatrix}
    g & \theta - 1
    \\[0.9mm]
    \lambda + 2 f & - 2 f^2 + g - x - 2 b
    \end{pmatrix}
    \, Y (\lambda, x).
\end{align}
Using condition \eqref{eq:invmatrix_cond}, one is able to find an inverse matrix for the matrix $C (\lambda) = \small \begin{pmatrix} 2 & \lambda - 2 f \\ 0 & - 2 \end{pmatrix}$:
\begin{gather}
    \label{eq:invGP20}
    C^{-1}
    = \tfrac14
    \begin{pmatrix}
    2 & \lambda - 2 f
    \\[0.9mm]
    0 & - 2
    \end{pmatrix}
    .
\end{gather}
Multiplying \eqref{eq:eq1JMP20} on the left by $C^{-1}$, the system \eqref{eq:JMsys} turns into the \ref{eq:P20_JMpair} pair, where the scaling $\lambda \mapsto \tfrac12 \lambda$ was made, for the \ref{eq:P20sys} system.
\end{proof}

\begin{rem}
\label{rem:fnc_P20}
As it was suggested in the paper \cite{Bobrova_Sokolov_2022}, there is a connection between the commutative and non-commutative ``independent'' variables. Introducing the non-commutative variable $\bar{x} \in \mathcal{R}$ by the formula $\bar x = x + 2 b$, one is able to obtain the system
\begin{align} 
    \label{eq:P2ncsys}
    &&
    &\left\{
    \begin{array}{lcl}
         f'
         &=& - f^2 + g - \tfrac12 \bar x, 
         \\[2mm]
         g'
         &=& f g + g f + \theta,
    \end{array}
    \right.
    &
    f, \,\,
    g, \,\,
    \bar x
    &\in \mathcal{R},
    &
    \theta
    &\in \mathbb{C},
    &&
\end{align}
that is equivalent to the equation
\begin{align} \label{eq:P2nc}
    f''
    &= 2 f^3 
    + \tfrac12 \bar x \, f 
    + \tfrac12 f \, \bar x 
    + \brackets{\theta - \tfrac12}.
\end{align}
This equation was first considered in the paper \cite{Retakh_Rubtsov_2010}, where the authors have assumed that the equation $\partial_{\bar x} (\bar x) = 1$ has a solution in $\mathcal{R}$. The system \eqref{eq:P2ncsys} has the following isomomodromic Lax pair
\begin{gather} \label{eq:P2nc_JMpair}
\begin{aligned}
    A (\lambda, \bar x)
    &= 
    \begin{pmatrix}
    2 & 0
    \\[0.9mm]
    0 & 0
    \end{pmatrix}
    \lambda^2
    + 
    \begin{pmatrix}
    0 & - 2 f^2 + g - \bar x
    \\[0.9mm]
    - 2 & 0
    \end{pmatrix}
    \lambda
    +
    \begin{pmatrix}
    - 2 f^2 + g
    & 
    2 f^3 - f g + f \bar x + \theta - 1
    \\[0.9mm]
    - 2 f 
    & 
    2 f^2 - g + \bar x
    \end{pmatrix}
    ,
    \\[2mm]
    B (\lambda, \bar x)
    &= 
    \begin{pmatrix}
    1 & 0 
    \\[0.9mm] 
    0 & 0
    \end{pmatrix} 
    \lambda
    +
    \begin{pmatrix}
    0 & - f^2 + \frac12 g - \frac12 \bar x
    \\[0.9mm] 
    - 1 & f
    \end{pmatrix}
    ,
\end{aligned}
\end{gather}
that can be obtained from \ref{eq:P20_JMpair} just by the change of the ``independent'' variable.
\end{rem}

\subsection{Case \texorpdfstring{\ref{eq:P21sys}}{P21}}
Let $f$, $g$, and $a$ belong to $\mathcal{R}$, $x \in \mathcal{Z}(\mathcal{R})$, and $a' = 0$. One can verify that the system
\begin{align}
    \label{eq:P21sys}   
    \tag*{$\rm{P}_2^1$}
    \left\{
    \begin{array}{lcl}
         f'
         &=& - f^2 + g - \frac12 x,  
         \\[2mm]
         g'
         &=& 2 f g + a
    \end{array}
    \right.
\end{align}
is equivalent to the compatibility condition of the following linear system of the HTW-type \cite{Adler_Sokolov_2020_1}, \cite{Bobrova_Sokolov_2022}:
\begin{align} \label{eq:P21_HTWpair_2}
    \tag*{$\rm{HTW}_2^1$}
    \left\{
    \begin{array}{lcl}
        \partial_{\mu} W (\mu, x)
        &=&
        \LieBrackets{
        \begin{pmatrix}
        0
        & 
        2
        \\[0.9mm]
        0 
        & 
        0
        \end{pmatrix}
        \mu 
        + 
        \begin{pmatrix}
        - 2 f
        & 
        2 f^2 - g + x
        \\[0.9mm]
        - 2
        & 
        2 f
        \end{pmatrix}
        + 
        \begin{pmatrix}
        0 & 0 
        \\[0.9mm]
        - g & - [f, g] - a
        \end{pmatrix}
        \mu^{-1}
         } W (\mu, x),
        \\[5mm]
        \partial_{x} W (\mu, x)
        &=& \LieBrackets{
        \begin{pmatrix}
        0
        & 
        1
        \\[0.9mm]
        0 
        & 
        0
        \end{pmatrix}
        \mu 
        + 
        \begin{pmatrix}
        - f
        & 
        0 
        \\
        - 1
        & 
        f
        \end{pmatrix}
         } W (\mu, x).
    \end{array}
    \right.
\end{align}
Using this pair, we obtain a pair of the JM-type. 
\begin{prop}
The \ref{eq:P21_HTWpair_2} pair transforms by the map \eqref{eq:genLaplacetransform} to the following pair of the {\rm{JM}}-type{\rm:}
\begin{align} \label{eq:P21_JMpair_2}
    \tag*{$\rm{JM}_2^1$}
    \left\{
    \begin{array}{lcl}
        \partial_{\lambda} Y (\lambda, x)
        &=&
        \left[
        \begin{pmatrix}
        2 & 0
        \\[0.9mm]
        0 & 0
        \end{pmatrix}
        \lambda^2
        + 
        \begin{pmatrix}
        0 & - 2 f^2 + g - x
        \\[0.9mm]
        -2 & 0
        \end{pmatrix}
        \lambda
        \right.
        \\[3mm]
        && \qquad
        \left.
        +
        \begin{pmatrix}
        - 2 f^2 + g 
        & 
        2 f^3 - g f + x f + a - 1
        \\[0.9mm]
        - 2 f
        & 
        2 f^2 - g + x
        \end{pmatrix}
         \right] Y (\lambda, x),
        \\[5mm]
        \partial_{x} Y (\lambda, x)
        &=& \LieBrackets{
        \begin{pmatrix}
        1 & 0 
        \\[0.9mm] 
        0 & 0
        \end{pmatrix} 
        \lambda
        +
        \begin{pmatrix}
        0 & - f^2 + \frac12 g - \frac12 x
        \\[0.9mm] 
        - 1 & f
        \end{pmatrix}
         } Y (\lambda, x).
    \end{array}
    \right.
\end{align}
\end{prop}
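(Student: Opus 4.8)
The plan is to run the four-step algorithm of Section~\ref{sec:HTWtoJM} exactly as in the computation producing \ref{eq:P20_JMpair}, exploiting the fact that the \ref{eq:P21_HTWpair_2} pair is obtained from \ref{eq:P20_HTWpair} merely by dropping the constant $b$ and by replacing the scalar lower-right entry $-\theta$ of the residue matrix $A_{-1}$ with the element $-[f,g]-a$. First I would solve the matrix equation~\eqref{eq:Ccond}: expanding the product and collecting powers of $\mu$, the coefficients of $\mu^2$ and $\mu^{-1}$ give $C_1 A_1 = 0$ and $C_2 A_{-1} = 0$, which fix the column shapes of $C_1$ and $C_2$, while the $\mu^1$- and $\mu^0$-coefficients read off $C_3 = C_1 A_0 + C_2 A_1$ and $C_4 = C_1 A_{-1} + C_2 A_0$. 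By Lemma~\ref{thm:uniqC} the undetermined left factor may be normalized to the identity, and the outcome coincides with the solution obtained for the \ref{eq:P20sys} system, except that the $2b$ in the $(2,2)$-entry of $C_4$ is absent and its $(1,2)$-entry is now $-[f,g]-a$:
\[
    C_4 = \begin{pmatrix} -g & -[f,g]-a \\ -2f & 2f^2-g+x \end{pmatrix}.
\]

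Next I would substitute the $C_k$ into the $Y$-system~\eqref{eq:JMsys}. Since $C(\lambda) = C_1\lambda - C_3 = \brackets{\begin{smallmatrix} 2 & \lambda-2f \\ 0 & -2 \end{smallmatrix}}$ is upper triangular with scalar diagonal, it is invertible in the sense of~\eqref{eq:invmatrix_cond}, with the same inverse $C^{-1} = \tfrac14\,C(\lambda)$ used before. Multiplying the first line of~\eqref{eq:JMsys} on the left by $C^{-1}$ and performing the rescaling $\lambda \mapsto \tfrac12\lambda$, which simultaneously rescales the $\partial_\lambda$ occurring in the second line, should reproduce the $\lambda$-equation of \ref{eq:P21_JMpair_2}; eliminating $\partial_\lambda Y$ from $\partial_x Y = -B_1\,\partial_\lambda Y + B_0\,Y$ by means of that $\lambda$-equation should then yield its purely multiplicative $x$-part. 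I would check the four entries of each coefficient matrix one by one, keeping all products in their prescribed order.

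The only genuinely non-commutative point, and the step I would watch most carefully, lies in the $(1,2)$-entry of the $\lambda$-equation: after the left multiplication by $C^{-1}$ the product $(\lambda-2f)(-2f^2+g-x)$ contributes a term $-2fg$, which must be combined with the $2[f,g]$ entering through the $(1,2)$-entry of $C_4$; the identity $2[f,g]-2fg = -2gf$ is precisely what converts the naive $-fg$ into the ordered term $-gf$ appearing in $2f^3 - gf + xf + a - 1$. The algebra itself is then routine bookkeeping. The genuine obstacle is the one already flagged in Remark~\ref{rem:countour} and in the discussion of Section~\ref{sec:HTWtoJM}: the passage between the two linear problems via~\eqref{eq:genLaplacetransform} presupposes a contour $L$ for which the integration-by-parts boundary terms vanish, and establishing the existence of such a contour is left open; under that assumption the transformation carries \ref{eq:P21_HTWpair_2} to \ref{eq:P21_JMpair_2}.
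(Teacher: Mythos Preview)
Your proposal is correct and follows essentially the same route as the paper's own proof: solve~\eqref{eq:Ccond} for the $C_k$ (the paper records the same normalized solution, with your $C_4$ matching exactly), form $C(\lambda)=C_1\lambda-C_3$, multiply on the left by the inverse~\eqref{eq:invGP20}, and rescale $\lambda\mapsto\tfrac12\lambda$. Your explicit tracking of the non-commutative cancellation $2[f,g]-2fg=-2gf$ in the $(1,2)$-entry and your elimination of $\partial_\lambda Y$ from the $x$-equation are details the paper leaves implicit, but the argument is the same.
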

\begin{proof}
The condition \eqref{eq:Ccond} leads to the following matrices $C_k = (c_{k, ij})_{i, j = 1, 2}$:
\begin{gather}
    \begin{aligned}
        C_1
        &= 
        \begin{pmatrix}
        c_{1, 12} & c_{2, 11}
        \\[0.9mm]
        c_{1,22} & c_{2, 21}
        \end{pmatrix}
        \begin{pmatrix}
        0 & 1 \\[0.9mm] 0 & 0
        \end{pmatrix}
        ,
        &&&
        C_2
        &= 
        \begin{pmatrix}
        c_{1, 12} & c_{2, 11}
        \\[0.9mm]
        c_{1,22} & c_{2, 21}
        \end{pmatrix}
        \begin{pmatrix}
        0 & 0 \\[0.9mm] 1 & 0
        \end{pmatrix}
        ,
    \end{aligned}
    \\[3mm]
    \begin{aligned}
        C_3
        &= 
        \begin{pmatrix}
        c_{1, 12} & c_{2, 11}
        \\[0.9mm]
        c_{1,22} & c_{2, 21}
        \end{pmatrix}
        \begin{pmatrix}
        - 2 & 2 f \\[0.9mm] 0 & 2
        \end{pmatrix}
        ,
        &&&
        C_{4}
        &= 
        \begin{pmatrix}
        c_{1, 12} & c_{2, 11}
        \\[0.9mm]
        c_{1,22} & c_{2, 21}
        \end{pmatrix}
        \begin{pmatrix}
        - g & - [f, g] - a \\[0.9mm]
        - 2 f & 2 f^2 - g + x
        \end{pmatrix}
        .
    \end{aligned}
\end{gather}
Therefore, the first equation of \eqref{eq:JMsys} is
\begin{align} 
    \label{eq:eq1JMP21_2}
    \begin{pmatrix}
    2 & \lambda - 2 f 
    \\[0.9mm]
    0 & - 2
    \end{pmatrix}
    \, 
    \partial_{\lambda} Y (\lambda, x)
    &=
    \begin{pmatrix}
    g & [f, g] + a - 1
    \\[0.9mm]
    \lambda + 2 f & - 2 f^2 + g - x
    \end{pmatrix}
    \, Y (\lambda, x).
\end{align}
Multiplying \eqref{eq:eq1JMP21_2} on the left by \eqref{eq:invGP20}, we get the \ref{eq:P21_JMpair_2} pair (after the scaling $\lambda \mapsto \tfrac12 \lambda$) for the \ref{eq:P21sys} system.
\end{proof}

Another pair for \ref{eq:P21sys} is given by \cite{Bobrova_Sokolov_2022}
\begin{align} \label{eq:P21_HTWpair}
    \tag*{$\rm{HTW'}_2^1$}
    \left\{
    \begin{array}{lcl}
        \partial_{\mu} W (\mu, x)
        &=&
        \LieBrackets{
        \begin{pmatrix}
        0
        & 
        2
        \\[0.9mm]
        0 
        & 
        0
        \end{pmatrix}
        \mu 
        + 
        \begin{pmatrix}
        - 2 f
        & 
        2 f^2 - g + x
        \\[0.9mm]
        - 2
        & 
        2 f
        \end{pmatrix}
        + 
        \begin{pmatrix}
        a & 0 
        \\[0.9mm]
        - g & 0
        \end{pmatrix}
        \mu^{-1}
         } W (\mu, x),
        \\[5mm]
        \partial_{x} W (\mu, x)
        &=& \LieBrackets{
        \begin{pmatrix}
        0
        & 
        1
        \\[0.9mm]
        0 
        & 
        0
        \end{pmatrix}
        \mu 
        + 
        \begin{pmatrix}
        0
        & 
        0 
        \\
        - 1
        & 
        2 f
        \end{pmatrix}
         } W (\mu, x).
    \end{array}
    \right.
\end{align}
This pair tranfroms to a non-polynomial in $f$ and $g$ pair of the JM-type.
\begin{prop}
The \ref{eq:P21_HTWpair} pair transforms by the map \eqref{eq:genLaplacetransform} to the following pair of the {\rm{JM}}-type{\rm:}
\begin{align} \label{eq:P21_JMpair}
    \tag*{$\rm{JM'}_2^1$}
    \left\{
    \begin{array}{lcl}
        \partial_{\lambda} Y (\lambda, x)
        &=&
        \left[
        \begin{pmatrix}
        2 & 2 a g^{-1}
        \\[0.9mm]
        0 & 0
        \end{pmatrix}
        \lambda^2
        + 
        \begin{pmatrix}
        2 a g^{-1} 
        & 
        - 2 f^2 + g - x 
        - 2 a g^{-1} f - 2 f a g^{-1}
        \\[0.9mm]
        - 2 & - 2 a g^{-1}
        \end{pmatrix}
        \lambda
        \right.
        \\[3mm]
        && \qquad 
        \left.
        +
        \begin{pmatrix}
        - 2 f^2 + g - 2 f a g^{-1}
        & 
        2 f^3 - f g + x f - 1 + 2 f a g^{-1} f
        \\[0.9mm]
        - 2 f - 2 a g^{-1}
        & 
        2 f^2 - g + x + 2 a g^{-1} f
        \end{pmatrix}
         \right] Y (\lambda, x),
        \\[5mm]
        \partial_{x} Y (\lambda, x)
        &=& \LieBrackets{
        \begin{pmatrix}
        1 & a g^{-1}
        \\[0.9mm] 
        0 & 0
        \end{pmatrix} 
        \lambda
        +
        \begin{pmatrix}
        f + a g^{-1} & - f^2 + \frac12 g - \frac12 x - a g^{-1} f
        \\[0.9mm] 
        -1 & 2 f
        \end{pmatrix}
         } Y (\lambda, x).
    \end{array}
    \right.
\end{align}
\end{prop}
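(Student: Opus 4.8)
The plan is to follow the same three-step scheme used in the proofs of the previous two propositions: solve the matrix equation \eqref{eq:Ccond} for the data of the \ref{eq:P21_HTWpair} pair, apply the Laplace transform \eqref{eq:genLaplacetransform} to pass to the system \eqref{eq:JMsys}, and then invert $C(\lambda) = C_1\lambda - C_3$ to bring the first equation into JM-form. The only structural novelty is that the residue matrix $A_{-1} = \begin{pmatrix} a & 0 \\ -g & 0\end{pmatrix}$ of this HTW pair now has a nonzero $(1,1)$-entry and a vanishing second column, and it is exactly this feature that will produce the non-polynomial $g^{-1}$-terms in \ref{eq:P21_JMpair}.

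First I would expand the left-hand side of \eqref{eq:Ccond} and collect powers of $\mu$. Matching the $\mu^2$ and $\mu^{-1}$ coefficients yields the two constraints $C_1 A_1 = 0$ and $C_2 A_{-1} = 0$, while the $\mu^1$ and $\mu^0$ coefficients read off $C_3 = C_1 A_0 + C_2 A_1$ and $C_4 = C_1 A_{-1} + C_2 A_0$. Since $A_1 = \begin{pmatrix} 0 & 2 \\ 0 & 0\end{pmatrix}$, the condition $C_1 A_1 = 0$ forces the first column of $C_1$ to vanish. The crucial step is the condition $C_2 A_{-1} = 0$: writing $C_2 = (c_{2,ij})$, it gives $c_{2,11}\,a = c_{2,12}\,g$ and $c_{2,21}\,a = c_{2,22}\,g$, so that---$g$ being invertible in the division ring $\mathcal{R}$---the second column of $C_2$ is forced to be $c_{2,11}\,a g^{-1}$ and $c_{2,21}\,a g^{-1}$. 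This is the origin of the $a g^{-1}$ combinations. After fixing the left-factor ambiguity via Lemma \ref{thm:uniqC} (exactly as in the previous proofs), the canonical choice is $C_1 = \begin{pmatrix} 0 & 1 \\ 0 & 0\end{pmatrix}$ and $C_2 = \begin{pmatrix} 0 & 0 \\ 1 & a g^{-1}\end{pmatrix}$, from which $C_3$ and $C_4$ are computed directly.

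With these matrices one finds $C(\lambda) = C_1\lambda - C_3 = \begin{pmatrix} 2 & \lambda - 2f \\ 0 & -2\end{pmatrix}$, which coincides with the matrix already encountered in the \ref{eq:P20sys} case, so it is invertible in the sense of Definition \ref{def:invmat} with inverse \eqref{eq:invGP20}; the relevant off-diagonal cross terms cancel because the scalar multipliers are $\pm 2$. Multiplying the first equation of \eqref{eq:JMsys} on the left by $C^{-1}$ and performing the scaling $\lambda \mapsto \tfrac12 \lambda$ then yields the first line of \ref{eq:P21_JMpair}, the leading $\lambda^2$-term being quadratic precisely because both $C^{-1}$ and $C_2\lambda - C_1 - C_4$ are linear in $\lambda$. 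For the second line I would substitute $\partial_\lambda Y$, taken from the first JM equation, into the $x$-equation $\partial_x Y = -B_1\,\partial_\lambda Y + B_0\,Y$ of \eqref{eq:JMsys}; since $B_1 = \begin{pmatrix} 0 & 1 \\ 0 & 0\end{pmatrix}$ annihilates the leading coefficient of the first JM matrix (whose second row vanishes), the $\lambda^2$-contribution drops out and one is left with the stated expression, linear in $\lambda$.

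I expect the main obstacle to be computational rather than conceptual: because $a$ and $g$ do not commute, the products $C^{-1}(C_2\lambda - C_1 - C_4)$ and $B_1 C^{-1}(\cdots)$ must be expanded with strict attention to the left/right placement of every factor $a$, $g$, and $g^{-1}$, and one has to verify that the non-polynomial terms land in exactly the positions displayed in \ref{eq:P21_JMpair}. A secondary point worth recording is the standing assumption that $g$ is invertible, which is what makes the resulting pair genuinely of JM-type rather than degenerate, and which is available here since $\mathcal{R}$ is a division ring.
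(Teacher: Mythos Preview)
Your proposal is correct and follows exactly the same route as the paper's proof: solve \eqref{eq:Ccond} to obtain the canonical $C_1 = \begin{pmatrix} 0 & 1 \\ 0 & 0\end{pmatrix}$, $C_2 = \begin{pmatrix} 0 & 0 \\ 1 & a g^{-1}\end{pmatrix}$ (up to the left factor of Lemma~\ref{thm:uniqC}), observe that $C(\lambda)$ coincides with the matrix from the \ref{eq:P20sys} case with inverse \eqref{eq:invGP20}, and then multiply and rescale $\lambda\mapsto\tfrac12\lambda$. Your additional remarks---the explicit explanation of how $C_2 A_{-1}=0$ forces the $a g^{-1}$ terms, and the derivation of the $x$-equation by substituting $\partial_\lambda Y$ back into $\partial_x Y = -B_1\,\partial_\lambda Y + B_0 Y$---are correct elaborations that the paper leaves implicit.
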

\begin{proof}
The given matrices $C_k = (c_{k, ij})_{i, j = 1, 2}$ solve \eqref{eq:Ccond}:
\begin{gather}
    \begin{aligned}
        C_1
        &= 
        \begin{pmatrix}
        c_{1, 12} & c_{2, 11}
        \\[0.9mm]
        c_{1,22} & c_{2, 21}
        \end{pmatrix}
        \begin{pmatrix}
        0 & 1 \\[0.9mm] 0 & 0
        \end{pmatrix}
        ,
        &&&
        C_2
        &= 
        \begin{pmatrix}
        c_{1, 12} & c_{2, 11}
        \\[0.9mm]
        c_{1,22} & c_{2, 21}
        \end{pmatrix}
        \begin{pmatrix}
        0 & 0 \\[0.9mm] 1 & a g^{-1}
        \end{pmatrix}
        ,
    \end{aligned}
    \\[3mm]
    \begin{aligned}
        C_3
        &= 
        \begin{pmatrix}
        c_{1, 12} & c_{2, 11}
        \\[0.9mm]
        c_{1,22} & c_{2, 21}
        \end{pmatrix}
        \begin{pmatrix}
        - 2 & 2 f \\[0.9mm] 0 & 2
        \end{pmatrix}
        ,
        &&&
        C_{4}
        &= 
        \begin{pmatrix}
        c_{1, 12} & c_{2, 11}
        \\[0.9mm]
        c_{1,22} & c_{2, 21}
        \end{pmatrix}
        \begin{pmatrix}
        - g & 0 \\[0.9mm]
        - 2 a g^{-1} - 2 f 
        & 
        2 a g^{-1} f 
        + 2 f^2 - g + x
        \end{pmatrix}
        .
    \end{aligned}
\end{gather}
Then the first equation of \eqref{eq:JMsys} reads as
\begin{align} 
    \label{eq:eq1JMP21}
    \begin{pmatrix}
    2 & \lambda - 2 f 
    \\[0.9mm]
    0 & - 2
    \end{pmatrix}
    \, 
    \partial_{\lambda} Y (\lambda, x)
    &=
    \begin{pmatrix}
    g & -1
    \\[0.9mm]
    \lambda + 2 f + 2 a g^{-1}
    & 
    a g^{-1} \brackets{
    \lambda - 2 f
    }
    - 2 f^2 + g - x
    \end{pmatrix}
    \, Y (\lambda, x).
\end{align}
Multiplying \eqref{eq:eq1JMP21} on the left by \eqref{eq:invGP20} and making the scaling $\lambda \mapsto \tfrac12 \lambda$, we obtain the \ref{eq:P21_JMpair} pair for the \ref{eq:P21sys} system.
\end{proof}

\subsection{Case \texorpdfstring{\ref{eq:P22sys}}{P22}}
Let $f$, $g$, $a$, and $b$ be elements of $\mathcal{R}$, $x \in \mathcal{Z}(\mathcal{R})$, and $a' = b' = 0$. The following system
\begin{align}
    \label{eq:P22sys}   
    \tag*{$\rm{P}_2^2$}
    &&
    &\left\{
    \begin{array}{lcl}
         f'
         &=& - f^2 + g - \frac12 x - b,  
         \\[2mm]
         g'
         &=& 3 f g - g f + 2 b f - 2 f b + a,
    \end{array}
    \right.
    &
    [b, a]
    &= 2 b.
    &&
\end{align}
has the HTW pair of the form \cite{Adler_Sokolov_2020_1}, \cite{Bobrova_Sokolov_2022}
\begin{align} \label{eq:P22_HTWpair}
    \tag*{$\rm{HTW}_2^2$}
    \left\{
    \begin{array}{lcl}
        \partial_{\mu} W (\mu, x)
        &=&
        \left[
        \begin{pmatrix}
        0
        & 
        2
        \\[0.9mm]
        0 
        & 
        0
        \end{pmatrix}
        \mu 
        + 
        \begin{pmatrix}
        - 2 f
        & 
        2 f^2 - g + x + \frac23 b
        \\[0.9mm]
        - 2
        & 
        2 f
        \end{pmatrix}
        \right.
        \\[3mm]
        && \qquad 
        \left.
        + 
        \begin{pmatrix}
        \frac23 b f + \frac12 a 
        & 
        \frac13 b g 
        - \frac23 b f^2
        - \frac13 x b 
        - \frac49 b^2
        \\[0.9mm]
        - g + \frac23 b
        & 
        - [f, g]
        - \frac43 b f + \frac23 f b - \frac12 a
        \end{pmatrix}
        \mu^{-1}
         \right] W (\mu, x),
        \\[5mm]
        \partial_{x} W (\mu, x)
        &=& \LieBrackets{
        \begin{pmatrix}
        0
        & 
        1
        \\[0.9mm]
        0 
        & 
        0
        \end{pmatrix}
        \mu 
        + 
        \begin{pmatrix}
        0
        & 
        - \frac13 b
        \\[0.9mm]
        - 1
        & 
        2 f
        \end{pmatrix}
         } W (\mu, x).
    \end{array}
    \right.
\end{align}

It turns out that this pair maps to a degenerate pair of the JM type. Namely, we have
\begin{prop}
The \ref{eq:P21_HTWpair} pair transforms by the map \eqref{eq:genLaplacetransform} to the degenerate pair of the {\rm{JM}}-type{\rm:}
\begin{align} \label{eq:P22_JMpair}
    \tag*{$\rm{JM}_2^2$}
    \left\{
    \begin{array}{rcl}
        \begin{pmatrix}
        2 & \lambda - 2 f
        \\[0.9mm]
        0 & 0 
        \end{pmatrix}
        \partial_{\lambda} Y (\lambda, x)
        &=&
        \begin{pmatrix}
        g - \frac23 b
        &
        [f, g] + \frac43 b f - \frac23 f b + \frac12 a - 1
        \\[0.9mm]
        0 & 0 
        \end{pmatrix}
        Y (\lambda, x),
        \\[5mm]
        \partial_{x} Y (\lambda, x)
        &=& 
        \begin{pmatrix}
        0 & -1 \\[0.9mm]
        0 & 0
        \end{pmatrix}
        \partial_{\lambda} Y (\lambda, x)
        + 
        \begin{pmatrix}
        0 & - \frac13 b \\[0.9mm]
        - 1 & 2 f
        \end{pmatrix}
        Y (\lambda, x).
    \end{array}
    \right.
\end{align}
\end{prop}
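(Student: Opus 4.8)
The plan is to run the scheme of Section~\ref{sec:HTWtoJM} exactly as in the two previous propositions, and then to isolate the single structural feature of the \ref{eq:P22sys} system that forces the output to be a degenerate pair. First I would read off the coefficient matrices of the \ref{eq:P22_HTWpair} pair. From the $\partial_\mu$-equation,
\[
A_1 = \begin{pmatrix} 0 & 2 \\ 0 & 0 \end{pmatrix}, \qquad
A_0 = \begin{pmatrix} -2f & 2f^2 - g + x + \tfrac23 b \\ -2 & 2f \end{pmatrix},
\]
together with the full matrix $A_{-1}$ read off from the $\mu^{-1}$-coefficient, while from the $\partial_x$-equation $B_1 = \left(\begin{smallmatrix} 0 & 1 \\ 0 & 0 \end{smallmatrix}\right)$ and $B_0 = \left(\begin{smallmatrix} 0 & -\frac13 b \\ -1 & 2f \end{smallmatrix}\right)$. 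The $\partial_x$-part of \eqref{eq:JMsys} is then immediate: substituting these $B_1$, $B_0$ into $\partial_x Y = -B_1\,\partial_\lambda Y + B_0\, Y$ reproduces the second line of \ref{eq:P22_JMpair} with no further work.

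Next I would solve \eqref{eq:Ccond}. Expanding $(C_3\mu + C_4) = (C_1\mu + C_2)(A_1\mu + A_0 + A_{-1}\mu^{-1})$ and matching powers of $\mu$ yields the two constraints $C_1 A_1 = 0$ and $C_2 A_{-1} = 0$, after which $C_3 = C_1 A_0 + C_2 A_1$ and $C_4 = C_1 A_{-1} + C_2 A_0$ are determined. As in the \ref{eq:P20sys} and \ref{eq:P21sys} cases, $C_1 A_1 = 0$ forces the first column of $C_1$ to vanish.

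The key step, and the reason the pair degenerates, is the second constraint $C_2 A_{-1} = 0$. In the three previous HTW pairs the matrix $A_{-1}$ had a zero row or a zero column, so $C_2 A_{-1}=0$ admitted a nonzero $C_2$ and $C(\lambda)=C_1\lambda-C_3$ came out invertible; here, by contrast, $A_{-1}$ is a full matrix and is generically invertible in $\Mat_2(\mathcal R)$, whence $C_2 A_{-1}=0$ forces $C_2 = 0$. I expect this to be the crux: one must argue that for generic data $A_{-1}$ is invertible (so that $C_2=0$ is the only solution), and one must note that the left factor of Lemma~\ref{thm:uniqC} — which in the earlier proofs was the invertible matrix assembled from the columns of $C_1$ and $C_2$ — now acquires a zero column once $C_2=0$ and so fails to be invertible. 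Consequently one cannot strip an invertible factor; instead one simply normalizes $C_1 = \left(\begin{smallmatrix} 0 & 1 \\ 0 & 0 \end{smallmatrix}\right)$ by hand, the residual freedom being irrelevant to the final pair.

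With $C_2=0$ the remaining matrices are $C_3 = C_1 A_0$ and $C_4 = C_1 A_{-1}$, and a direct computation gives
\[
C_1\lambda - C_3 = \begin{pmatrix} 2 & \lambda - 2f \\ 0 & 0 \end{pmatrix}, \qquad
C_2\lambda - C_1 - C_4 = \begin{pmatrix} g - \tfrac23 b & [f,g] + \tfrac43 bf - \tfrac23 fb + \tfrac12 a - 1 \\ 0 & 0 \end{pmatrix},
\]
which is precisely the first line of \ref{eq:P22_JMpair}. Since $C(\lambda)=C_1\lambda-C_3$ has a zero bottom row, it cannot satisfy \eqref{eq:invmatrix_cond} for any $C^{-1}$, so by Definition~\ref{def:invmat} it is not invertible and, in the terminology of the scheme, the linear problem for $Y(\lambda,x)$ is degenerate. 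Combined with the $\partial_x$-equation obtained above, this establishes the \ref{eq:P22_JMpair} pair. Apart from the invertibility argument for $A_{-1}$, the remaining computations are the same bookkeeping as in the previous two proofs.
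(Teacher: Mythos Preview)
Your proof is correct and follows essentially the same route as the paper: solve \eqref{eq:Ccond} by matching powers of $\mu$, obtain $C_1A_1=0$ and $C_2A_{-1}=0$, deduce $C_2=0$, compute $C_3=C_1A_0$ and $C_4=C_1A_{-1}$, and observe that $C(\lambda)=C_1\lambda-C_3$ has a zero bottom row and hence is not invertible. Your explanation of \emph{why} $C_2=0$ (generic invertibility of $A_{-1}$, in contrast to the zero row/column in the three earlier cases) is a helpful addition the paper leaves implicit; one small inaccuracy is your remark that the left factor of Lemma~\ref{thm:uniqC} ``acquires a zero column'': in the paper's parametrization the entries $c_{2,11},c_{2,21}$ simply become free dummies (the base matrices have zero second rows, so that column of the factor is irrelevant), rather than vanishing --- but this does not affect the argument.
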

\begin{rem}
For $a \in \mathbb{C}$\footnote{It implies $b \equiv 0$.}, one can present a non-degenerate polynomial Lax pair of the JM-type. 
\end{rem}
\begin{proof}
Solving \eqref{eq:Ccond}, we obtain the following matrices $C_k = (c_{k, ij})_{i, j = 1, 2}$:
\begin{gather}
    \begin{aligned}
        C_1
        &= 
        \begin{pmatrix}
        0 & c_{1, 12}
        \\[0.9mm]
        0 & c_{1, 22}
        \end{pmatrix}
        ,
        &&&
        C_2
        &= 
        \begin{pmatrix}
        0 & 0 
        \\[0.9mm]
        0 & 0
        \end{pmatrix}
        ,
        &&&
        C_3
        &= 
        \begin{pmatrix}
        - 2 c_{1, 12}
        & 
        2 c_{1, 12} f
        \\[0.9mm]
        - 2 c_{1, 22}
        & 
        2 c_{1, 22} f
        \end{pmatrix}
        ,
    \end{aligned}
    \\[3mm]
    C_{4}
    = 
    \begin{pmatrix}
     c_{1, 12} \left(\tfrac23 b - g\right)
    &
    c_{1, 12} \brackets{
    - [f, g] - \tfrac43 b f + \tfrac23 f b
    - \tfrac12 a
    }
    \\[0.9mm]
    c_{1, 22} \left(\tfrac23 b - g\right) 
    &
    c_{1, 22} \brackets{
    - [f, g] - \tfrac43 b f + \tfrac23 f b
    - \tfrac12 a
    }
    \end{pmatrix}.
\end{gather}
They can be rewritten as
\begin{gather}
    \begin{aligned}
        C_1
        &= 
        \begin{pmatrix}
        c_{1, 12} & c_{2, 11}
        \\[0.9mm]
        c_{1,22} & c_{2, 21}
        \end{pmatrix}
        \begin{pmatrix}
        0 & 1
        \\[0.9mm]
        0 & 0
        \end{pmatrix}
        ,
        &&&
        C_2
        &= 
        \begin{pmatrix}
        c_{1, 12} & c_{2, 11}
        \\[0.9mm]
        c_{1,22} & c_{2, 21}
        \end{pmatrix}
        \begin{pmatrix}
        0 & 0
        \\[0.9mm]
        0 & 0
        \end{pmatrix}
        ,
    \end{aligned}
    \\[3mm]
    \begin{aligned}
        C_3
        &= 
        \begin{pmatrix}
        c_{1, 12} & c_{2, 11}
        \\[0.9mm]
        c_{1,22} & c_{2, 21}
        \end{pmatrix}
        \begin{pmatrix}
        -2 & 2 f
        \\[0.9mm]
        0 & 0
        \end{pmatrix}
        ,
        &&&
        C_{4}
        &= 
        \begin{pmatrix}
        c_{1, 12} & c_{2, 11}
        \\[0.9mm]
        c_{1,22} & c_{2, 21}
        \end{pmatrix}
        \begin{pmatrix}
        \tfrac23 b - g 
        & 
        - [f, g] - \tfrac43 b f + \tfrac23 f b
        - \tfrac12 a
        \\[0.9mm]
        0 & 0
        \end{pmatrix}
        ,
    \end{aligned}
\end{gather}
and, thus, the first equation of system \eqref{eq:JMsys} reads
\begin{align} 
    \label{eq:eq1JMP22}
    \begin{pmatrix}
    2 & \lambda - 2 f 
    \\[0.9mm]
    0 & 0
    \end{pmatrix}
    \, 
    \partial_{\lambda} Y (\lambda, x)
    &=
    \begin{pmatrix}
    g - \tfrac23 b 
    & 
    [f, g] + \tfrac43 b f - \tfrac23 f b
    + \tfrac12 a - 1
    \\[0.9mm]
    0 & 0
    \end{pmatrix}
    \, Y (\lambda, x).
\end{align}
Since the matrix $C (\lambda) = \small \begin{pmatrix} 2 & \lambda - 2 f \\ 0 & 0 \end{pmatrix}$ is not invertible, we get the degenerate \ref{eq:P22_JMpair} pair for \ref{eq:P22sys}.
\end{proof}

\section{Non-abelian monodromy surfaces}
\label{sec:monodromy_surf}

In this section we are going to derive non-abelian generalizations of the well-known monodromy surfaces related to the FN and JM pairs. In order to construct them, we need to determine the monodromy data and verify that it has the isomonodromic property, i.e. it should define the set of first integrals\footnote{A definition of a non-abelian first integral for a given non-abelian ODE can be generalized in the natural way (e.g. \cite{mikhailov2000integrable}).}. 

In Subsection \ref{sec:formsol} we obtain a form of formal solutions near a singular point that is defined uniquely by a linearization. If the monodromy data is isomonodromic, then we are able to provide a non-abelian monodromy surface (see Subsection \ref{sec:monsufr}). 

\subsection{Formal solutions}
\label{sec:formsol}

To derive the non-abelian monodromy surfaces, we need to present a formal solution near a singular point. We use the following generalization of Proposition 2.2~in~\cite{jimbo1981monodromy1} to the non-commutative case, whose particular case was discussed in the paper \cite{Bertola2018}.
\begin{prop}
\label{thm:irrsol}
Set $r \geq 0$\footnote{$r$ is called \textit{the Poincar\'e rank} of an irregular singular point. When $r = 0$, the singular point is Fuchsian.} and $\lambda \in \mathcal{Z} (\mathcal{R})$. Let us consider $n \times n$-matrices $A(\lambda)$, $F(\lambda)$, $D(\lambda)$, $T(\lambda)$ of the~form
\begin{gather}
    \label{eq:irrsys_A}
    \begin{aligned}
    A (\lambda)
    &= \sum_{k \geq - r}^{\phantom{k}} A_{k} \lambda^{- k - 1}
    ,
    &&&
    A_{-r}
    &= diag (\alpha_1, \dots, \alpha_n),
    &
    \alpha_i 
    &\neq \alpha_j,
    \qquad
    i \neq j,
    \end{aligned}
    \\
    \label{eq:irrseries}
    \begin{aligned}
    F (\lambda)
    &= \mathbb{I}
    + \sum_{k \geq 1} F_k \lambda^{-k},
    &&&
    D (\lambda)
    &= \mathbb{I}
    + \sum_{k \geq 1} D_k \lambda^{-k},
    &&&
    \partial_{\lambda} T(\lambda)
    &= \sum_{k = - r}^0 T_k \lambda^{- k - 1},
    \end{aligned}
\end{gather}
where 
\begin{itemize}
    \item[(a)] $A(\lambda) \in \Mat_n (\mathcal{R})$ and $A_{-r} \in \Mat_n (\mathcal{Z}(\mathcal{R}))${\rm;}
    \vspace{1mm}
    
    \item[(b)] $F_k \in \Mat_n (\mathcal{R})$, $k \geq 1$, are off-diagonal matrices{\rm;}
    \vspace{1mm}
    
    \item[(c)] $D_k \in \Mat_n (\mathcal{R})$, $k \geq 1$, are diagonal matrices{\rm;}
    \vspace{1mm}
    
    \item[(d)] $T_k \in \Mat_n(\mathcal{Z}(\mathcal{R}))$, $k = - r, \dots, 1$, and $T_0 \in \Mat_n(\mathcal{R})$ are both diagonal matrices{\rm;}
\end{itemize}
and suppose that 
\begin{itemize}
    \item[(e)]
    the operator $\brackets{k \, \mathbb{I} + \ad_{T_{0}}}: \Mat_n (\mathcal{R}) \to \Mat_n (\mathcal{R})$, $k \geq 1$, is invertible.
\end{itemize}
    \vspace{2mm}

Then the system
\begin{align}
    \label{eq:irrsys}
    \partial_{\lambda} \Phi (\lambda)
    &= A (\lambda) \, \Phi (\lambda)
\end{align}
admits a unique\footnote{Up to a conjugation by an invertible in the sense of Definition \ref{def:invmat} matrix $G \in \Mat_n(\mathcal{R})$.} formal solution near an irregular singular point $\lambda = \infty$ that can be written as
\begin{align}
    \label{eq:formsol}
    &&
    \Phi_{form} (\lambda)
    &= F(\lambda) \, D(\lambda) \,
    \exp \brackets{
    \sum_{k = 1}^{r}
    \tfrac{1}{k} T_{- k} \lambda^{k}
    + \ln (\lambda) \, \, T_0
    }
    &
    \text{as}&
    &
    \lambda 
    &\to 
    \infty.
    &&
\end{align}
\end{prop}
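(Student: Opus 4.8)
The plan is to substitute the ansatz \eqref{eq:formsol} into \eqref{eq:irrsys} and turn the problem into a recursive, order-by-order determination of the coefficients $F_k$, $D_k$, and $T_k$. Write $P(\lambda) = F(\lambda)\,D(\lambda) = \mathbb{I} + \sum_{m \geq 1} P_m \lambda^{-m}$ and set $\Lambda(\lambda) = \sum_{k=1}^{r} \tfrac1k T_{-k}\lambda^{k} + \ln(\lambda)\,T_0$, so that $\Phi_{form} = P\,\exp(\Lambda)$. The first point I would record is that \emph{all} the exponent matrices commute: the $T_{-k}$ with $k \geq 1$ are diagonal over $\mathbb{C}$, hence central, so they commute with one another and with the diagonal $T_0 \in \Mat_n(\mathcal{R})$. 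Therefore $\Lambda$ commutes with $\partial_\lambda \Lambda = \sum_{k=-r}^{0} T_k \lambda^{-k-1}$ and $\partial_\lambda \exp(\Lambda) = (\partial_\lambda\Lambda)\exp(\Lambda)$. Substituting and cancelling the invertible right factor $\exp(\Lambda)$ reduces \eqref{eq:irrsys} to the single relation
\[
    \partial_\lambda P + P \, \partial_\lambda \Lambda = A \, P .
\]

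Next I would expand this in powers of $\lambda$ and match the coefficient of each $\lambda^{-N-1}$, $N \geq -r$. The most singular order $N = -r$ forces $T_{-r} = A_{-r} = \mathrm{diag}(\alpha_1,\dots,\alpha_n)$. For every subsequent order the resulting linear equation for the newly appearing coefficient splits into an off-diagonal and a diagonal part, and these two parts are solved by two different invertibility mechanisms. For the \emph{off-diagonal} part, the unknown enters only through a commutator $[A_{-r}, F_\bullet]$; since $[A_{-r},X]_{il} = (\alpha_i - \alpha_l)\,x_{il}$ and the $\alpha_i$ are distinct \emph{complex} (hence central) scalars, the map $\ad_{A_{-r}}$ is invertible on off-diagonal matrices over $\mathcal{R}$ — one divides the $(i,l)$ entry by the nonzero number $\alpha_i - \alpha_l$. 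This pins down the off-diagonal coefficients $F_k$ uniquely, and is the exact non-commutative counterpart of the classical ``distinct eigenvalue'' splitting.

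For the \emph{diagonal} part the behaviour depends on the order. At the low orders $N \leq 0$ the term $-N P_N$ coming from $\partial_\lambda P$ is absent, and the diagonal projection of the equation simply reads off the exponent coefficients $T_{-r+1}, \dots, T_0$ as the diagonal parts of known quantities (here $T_0$ collects the genuinely $\mathcal{R}$-valued diagonal data, while the $T_{-k}$, $k\geq1$, must be checked to land in $\Mat_n(\mathbb{C})$). At the higher orders $N = k \geq 1$ the self-term $-k\,P_k$ survives, and after collecting the contributions of $T_0$ through the products $P\,\partial_\lambda\Lambda$ and $A\,P$ the diagonal equation takes the form $(k\,\mathbb{I} + \ad_{T_0})(D_k) = (\text{lower-order known data})$. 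By hypothesis (e) this operator is invertible, so each $D_k$ is determined uniquely. The uniqueness of the whole construction then follows step by step from the invertibility of the two operators solved at each order, the only remaining freedom being the conjugation by an invertible $G \in \Mat_n(\mathcal{R})$ recorded in the statement (reflecting the freedom in the diagonalizing frame for $A_{-r}$).

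I expect the genuine obstacle to be precisely the diagonal step producing the $D_k$. In the commutative setting $T_0$ is a scalar diagonal matrix, so $\ad_{T_0}$ annihilates diagonal matrices and one merely divides by $k$; here $T_0$ is $\mathcal{R}$-valued, $\ad_{T_0}$ no longer vanishes on diagonals, and the solvability of $(k\,\mathbb{I}+\ad_{T_0})D_k = \cdots$ is exactly what hypothesis (e) is designed to guarantee. Two further technical points demand care and are where a careless argument would break: one must consistently respect the left/right ordering of factors throughout the recursion, since the $\mathcal{R}$-valued entries do not commute with $T_0$ (this is what converts the naive scalar division into the operator $k\,\mathbb{I}+\ad_{T_0}$), and one must verify that the exponent coefficients $T_{-k}$ with $k \geq 1$ indeed remain in $\Mat_n(\mathbb{C})$, which relies on the structural hypotheses (a)--(d) rather than on (e).
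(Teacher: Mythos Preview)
Your plan is essentially correct and matches the paper's strategy: substitute the ansatz, reduce to a recursion, and solve the off-diagonal part via invertibility of $\ad_{A_{-r}}$ and the diagonal part via invertibility of $k\,\mathbb{I}+\ad_{T_0}$. You also correctly isolate the two places where non-commutativity bites: the commutation of the exponent matrices (guaranteed by (d)) and the appearance of $\ad_{T_0}$ in the diagonal step (handled by (e)).

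The one point where your write-up diverges from the paper and becomes delicate is your choice to lump $P=FD$ together. The paper instead keeps $F$ and $D$ separate and, after multiplying on the right by $D^{-1}$, obtains
\[
\partial_\lambda F + F\bigl(\partial_\lambda D\,D^{-1} + D\,\partial_\lambda T\,D^{-1}\bigr)=AF,
\]
whose off-diagonal projection is $\partial_\lambda F = AF - F(AF)_d$. This equation involves \emph{only} $F$, so the recursion for the $F_k$ is manifestly decoupled from the $D_k$. In your formulation, the off-diagonal part of $P_m$ is $F_m+\sum_{j=1}^{m-1}F_jD_{m-j}$, so at each order the off-diagonal equation ostensibly contains $D$-coefficients (namely $D_N,\dots,D_{N+r-1}$) that have not yet been determined. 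These contributions do cancel --- precisely because of the previous-order relations --- but you have not checked this, and it is not automatic from your equation $\partial_\lambda P+P\,\partial_\lambda\Lambda=AP$. The paper's $D^{-1}$-multiplication makes the cancellation tautological; in your setup it is an extra lemma you would need to prove. The same decoupling is what allows the paper to show, under assumption (d), that the $T_{-r+k}$ for $0\le k\le r$ are given by $\sum_{l=0}^{k}(A_{-r+l}F_{k-l})_d$ with no $D$-dependence whatsoever.

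One small clarification: assumption (d) is a \emph{hypothesis}, not a conclusion. You do not ``verify'' that $T_{-k}\in\Mat_n(\mathbb{C})$ from (a)--(c); rather, (d) is an extra constraint on the system (equivalently on the $A_k$) that must hold for the argument to go through. The paper uses it exactly where you anticipate: to ensure $[D_j,T_{-l}]=0$ for $l\ge 1$, which is what makes the $D$'s drop out of the low-order diagonal equations and isolates $T_{-r+k}$.
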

\begin{rem}
Let us make a few comments on the assumptions in the proposition.
\begin{itemize}
    \item Assumption (a) ensures that the off-diagonal matrices $F_k$ can be determined (see eq. \eqref{eq:ass_a_1}). If the entries of the matrix $A_{-r}$ belong to $\mathcal{R}$, then we cannot solve the equation
    \begin{align}
        \left[A_{-r}, F_k\right]_{ij}
        &= F_{k, ij} \, \alpha_j
        - \alpha_i \, F_{k, ij},
    \end{align}
    because the kernel of the operator $\ad_{A_{-r}}: \Mat_n (\mathcal{R}) \to \Mat_n (\mathcal{R})$ is not empty.
    \newline
    Note also that since $A_{-r} \in \Mat_n(\mathcal{Z}(\mathcal{R}))$, we are able to work with Jordan cells, which can be diagonalized using a transformation of the spectral parameter of the form
    \begin{align}
        \lambda 
        \mapsto \lambda^{1/p},
    \end{align}
    as in the commutative case (such an example is given in Proposition \ref{thm:P2_HTW_FN}). In other words, Proposition~\ref{thm:irrsol} is applicable for the fractional Poincar\'e rank\footnote{In this case, $r$ is called \textit{the Katz invariant}.}.

    \item Assumption (d) allows us to define the diagonal matrices $D_k$ (see eq. \eqref{eq:ass_d_1}), since any diagonal matrix with commutative entries commutes with any diagonal matrix with non-commutative entries. For comparison, in the commutative case, any diagonal matrices commute.
    \newline
    Moreover, we can easily differentiate the function $\exp{\brackets{T(\lambda)}}$, where 
    \begin{align}
        T(\lambda)
        &= \sum_{k = 1}^{r} \tfrac{1}{k} 
        T_{- k} \lambda^{k} 
        + \ln (\lambda) \, \, T_0,
    \end{align}
    because $[T_k, T_0] = 0$, $k = - r, \dots, -1$.
    
    \item Thanks to assumption (e), we are able to use a formal monodromy $T_0$ with non-commutative entries. 
\end{itemize}
\end{rem}
\begin{rem}
The statement can be easily reformulated for any irregular singular point different from $\lambda = \infty$.
\end{rem}
\begin{proof}
We are going to show that under the assumptions of the proposition the coefficients $F_k$, $D_k$, and $T_k$ of series \eqref{eq:irrseries} can be found uniquely by the matrix $A(\lambda)$, given in \eqref{eq:irrsys_A}, of system \eqref{eq:irrsys}. In the proof, we will use the following obvious fact:
\begin{lem}
\label{thm:diag_lem}
    For any diagonal matrix $A \in \Mat_n (\mathcal{Z}(\mathcal{R}))$ and any diagonal matrix $B \in \Mat_n (\mathcal{R})$ we have
    \begin{equation}
        [A, B]
        = 0.
    \end{equation}
\end{lem}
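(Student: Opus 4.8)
The plan is to reduce the commutator to an entrywise computation and then invoke the fact that the scalar field $\mathbb{C}$ sits inside the center of $\mathcal{R}$. First I would write the two matrices in their diagonal form, $A = \mathrm{diag}(\alpha_1, \dots, \alpha_n)$ with $\alpha_i \in \mathbb{C}$ and $B = \mathrm{diag}(\beta_1, \dots, \beta_n)$ with $\beta_i \in \mathcal{R}$. Since the product of two diagonal matrices is again diagonal, both $AB$ and $BA$ have vanishing off-diagonal entries, so the off-diagonal part of the commutator $[A,B] = AB - BA$ is automatically zero regardless of commutativity in $\mathcal{R}$.

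Next I would examine the diagonal entries. The $i$-th diagonal entry of $AB$ is $\alpha_i\,\beta_i$, while that of $BA$ is $\beta_i\,\alpha_i$. The crux of the argument is the single observation that each $\alpha_i$ is a complex scalar, and $\mathcal{R}$ is an associative unital division ring over the field $\mathbb{C}$; hence $\mathbb{C}$ embeds into the center of $\mathcal{R}$, which forces $\alpha_i\,\beta_i = \beta_i\,\alpha_i$ for every index $i$. Consequently the diagonal part of $[A,B]$ also vanishes, and therefore $[A,B] = 0$, as claimed.

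There is essentially no obstacle here beyond making this centrality explicit: the entire content of the statement is that scalars from $\mathbb{C}$ commute with all elements of $\mathcal{R}$, a property built into the definition of $\mathcal{R}$ as a ring over $\mathbb{C}$. It is worth noting in the write-up why the hypothesis $A \in \Mat_n(\mathbb{C})$ cannot be weakened to $A \in \Mat_n(\mathcal{R})$, and thus why the lemma is stated with this deliberate asymmetry between the two factors: two generic diagonal matrices with entries in the non-commutative ring $\mathcal{R}$ need \emph{not} commute, precisely because their diagonal entries may fail to commute pairwise in $\mathcal{R}$. This is exactly the distinction that assumption (d) of Proposition~\ref{thm:irrsol} exploits when it places $T_0$ in $\Mat_n(\mathcal{R})$ while keeping $T_{-r}, \dots, T_{-1}$ in $\Mat_n(\mathbb{C})$.
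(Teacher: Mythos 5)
Your proof is correct and coincides with the paper's intent: the paper states this lemma without proof, labelling it an ``obvious fact,'' and your entrywise computation --- the off-diagonal entries of $AB$ and $BA$ vanish trivially, while the diagonal entries $\alpha_i\,\beta_i$ and $\beta_i\,\alpha_i$ agree because $\mathbb{C}$ acts centrally on the $\mathbb{C}$-algebra $\mathcal{R}$ --- is precisely the justification it leaves implicit. Your added remark on why the hypothesis $A \in \Mat_n(\mathbb{C})$ cannot be relaxed to $A \in \Mat_n(\mathcal{R})$ is likewise consistent with the paper's own commentary on assumption (d) of Proposition \ref{thm:irrsol}.
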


\medskip
\textbullet \,\, 
As we mentioned above, one can get
\begin{align}
    &&
    \partial_{\lambda}
    \brackets{e^{T(\lambda)}}
    &= \brackets{\partial_{\lambda} T(\lambda)} \, \, e^{T(\lambda)},
    &
    T(\lambda)
    &= \sum_{k = 1}^{r} \tfrac{1}{k} 
    T_{- k} \lambda^{k} 
    + \ln (\lambda) \, \, T_0,
    &&
\end{align}
because $[T_k, T_0] = 0$, $k = -r, \dots, - 1$ (see assumption (d) and Lemma \ref{thm:diag_lem}).
Thus, substitution of the formal solution
\begin{align}
    \Phi_{form} (\lambda)
    &= F (\lambda) \, D (\lambda)
    e^{T (\lambda)}
\end{align}
into system \eqref{eq:irrsys} leads to the following condition for series $F(\lambda)$, $D (\lambda)$, $T (\lambda)$:
\begin{align}
    \label{eq:mainform}
    \partial_{\lambda} F
    + F \, \brackets{
    \partial_{\lambda} D \, D^{-1}
    + D \,
    \partial_{\lambda} T \, D^{-1}
    }
    &= A \, F,
\end{align}
where we omitted the explicit dependence on $\lambda$. 
Taking diagonal and off-diagonal parts of \eqref{eq:mainform}, we obtain a recurrent system for $F_k$, $D_{k}$, and $T_k$. 
\begin{lem}
The product of the diagonal $A_d$ and off-diagonal $B_{off}$ $n \times n$-matrices is an off-diagonal $n \times n$-matrix $C_{off}$:
\begin{align}
    A_d \, B_{off}
    &= C_{off}.
\end{align}
\end{lem}
Note that $\brackets{F - \mathbb{I}}$ is an off-diagonal matrix. Hence, this system can be written as
\begin{align}
    &&
    \partial_{\lambda} D &= \brackets{A \, F}_d D
    - D \,
    \partial_{\lambda} T,
    &
    \partial_{\lambda} F
    &= A \, F
    - F \, \brackets{A \, F}_d,
    &&
\end{align}
where the subscript $X_d$ means the diagonal part of the matrix $X$. 

\medskip
\textbullet \,\, Consider the off-diagonal part:
\begin{align}
    \label{eq:offdcond}
    \partial_{\lambda} F
    &= A \, F
    - F \, \brackets{A \, F}_d.
\end{align}
Substituting the matrices $A$ and $F$ into \eqref{eq:offdcond} and comparing the coefficients of the corresponding powers of~$\lambda$, we obtain
\begin{align}
    \sum_{l = 0}^k
    \brackets{
    A_{- r + l} \, F_{k - l}
    - \sum_{m = l}^k
    F_{k - m} (A_{- r + l} F_{m - l})_d
    }
    &= 0,
    &
    0 \leq k
    \leq r,&
    \\[2mm]
    \sum_{l = 0}^k
    \brackets{
    A_{- r + l} \, F_{k - l}
    - \sum_{m = l}^k
    F_{k - m} (A_{- r + l} F_{m - l})_d
    }
    + (k - r) F_{k - r}
    &= 0,
    &
    k
    \geq r + 1.&
\end{align}
Note that $(A_{- r} F)_d = A_{- r}$, then for $0 \leq k \leq r$ we have
\begin{align}
    \sum_{l = 0}^k
    \brackets{
    A_{- r + l} \, F_{k - l}
    - \sum_{m = l}^k
    F_{k - m} (A_{- r + l} F_{m - l})_d
    }
    = 0, 
    \\
    A_{- r} \, F_k
    + \sum_{l = 1}^k A_{- r +l} \, F_{k - l}
    - F_k \, A_{- r}
    - \sum_{l = 1}^k \sum_{m = l}^k F_{k - m} (A_{-r + l} F_{m - l})_d
    = 0,
    \\
    [F_k, A_{-r}]
    = \sum_{l = 1}^k \brackets{
    A_{- r + l} \, F_{k - l}
    - \sum_{m = l}^k F_{k - m} (A_{- r + l} F_{m - l})_d
    }.
\end{align}
Similarly for $k \geq r + 1$ one can obtain
\begin{align}
    [F_k, A_{- r}]
    &= \sum_{l = 1}^{k} \brackets{
    A_{- r + l} \, F_{k - l} 
    - \sum_{m = l}^k F_{k - m} \, (A_{-r + l} \, F_{m - l})_d
    }
    + (k - r) F_{k - r},
\end{align}
or, making the change $k \mapsto k - r$, 
\begin{align}
    &&
    [F_{k + r}, A_{- r}]
    &= \sum_{l = 1}^{k + r} \brackets{
    A_{- r + l} \, F_{k + r - l} 
    - \sum_{m = l}^{k + r} F_{k + r - m} \, (A_{- r + l} \, F_{m - l})_d
    }
    + k \, F_{k},
    &
    k 
    &\geq 1.
    &&
\end{align}
Therefore, the coefficients $F_k$ of the off-diagonal part of the formal solution are given by
\begin{align}
    &&
    [F_k, A_{-r}]
    &= \sum_{l = 1}^k \brackets{
    A_{- r + l} \, F_{k - l}
    - \sum_{m = l}^k F_{k - m} (A_{- r + l} F_{m - l})_d
    },
    &
    0 \leq k 
    \leq r,&
    &&
    \\
    &&
    [F_{k + r}, A_{- r}]
    &= \sum_{l = 1}^{k + r} \brackets{
    A_{- r + l} \, F_{k + r - l} 
    - \sum_{m = l}^{k + r} F_{k + r - m} \, (A_{- r + l} \, F_{m - l})_d
    }
    + k \, F_{k},
    &
    k 
    \geq 1.&
    &&
\end{align}
Recall that $A_{-r} = diag(\alpha_1, \dots, \alpha_n) \in \Mat_n(\mathcal{Z}(\mathcal{R}))$ and $\alpha_i \neq \alpha_j$ for any $i \neq j$. Thus, $F_{k, ij}$ can be found from the equations above, since
\begin{align}
    \label{eq:ass_a_1}
    [F_k, A_{-r}]_{ij}
    &= F_{k, ij} \, \alpha_j 
    - \alpha_i \, F_{k, ij}
    = (\alpha_j - \alpha_i) F_{k, ij}
    \neq 0
    ,
\end{align}
where we have used assumption (a).

\medskip
\textbullet \,\, 
Now we proceed to the diagonal part:
\begin{align}
    \label{eq:dcond}
    \partial_{\lambda} D
    &= \brackets{A \, F}_d \, D
    - D \, \partial_{\lambda} T
    ,
\end{align}
that is equivalent to the following recurrent system
\begin{align}
    \label{eq:dcond_1}
    \sum_{l = 0}^k
    \brackets{
    \sum_{m = l}^k 
    (A_{- r + l} \, F_{k - m})_d D_{m - l}
    - D_{k - l} \, T_{- r + l}
    }
    &= 0,
    &
    0 \leq k
    \leq r,&
    \\
    \label{eq:dcond_2}
    \sum_{l = 0}^k
    \sum_{m = l}^k 
    (A_{- r + l} \, F_{k - m})_d D_{m - l}
    - \sum_{l = 0}^r
    D_{k - l} \, T_{- r + l}
    + (k - r) D_{k - r}
    &= 0,
    &
    k
    \geq r + 1.&
\end{align}

Consider $0 \leq k \leq r$ and rewrite the recurrent equation \eqref{eq:dcond_1} in such a way as to gather the terms that do not contain the coefficients $D_l$, $l \geq 1$.
\begin{align}
    \sum_{l = 0}^k
    \sum_{m = l}^k 
    (A_{- r + l} \, F_{k - m})_d D_{m - l}
    - \sum_{l = 0}^k
    D_{k - l} \, T_{- r + l}
    = 0,
    \\
    \sum_{l = 0}^k
    (A_{- r + l} \, F_{k - l})_d D_{0}
    +
    \sum_{l = 0}^k
    \sum_{m = l + 1}^k 
    (A_{- r + l} \, F_{k - m})_d D_{m - l}
    -
    D_{0} \, T_{- r + k}
    -
    \sum_{l = 0}^{k - 1}
    D_{k - l} \, T_{- r + l}
    = 0,
    \\
    \label{eq:ass_d_1}
    T_{- r + k}
    - \sum_{l = 0}^k 
    (A_{- r + l} \, F_{k - l})_d 
    = 
    \sum_{l = 0}^{k - 1}
    \brackets{
    \sum_{m = l + 1}^k 
    (A_{- r + l} \, F_{k - m})_d D_{m - l}
    -
    D_{k - l} \, T_{- r + l}
    }.
\end{align}
Since $T_l \in \Mat_n (\mathcal{Z}(\mathcal{R}))$, $l = -r, \dots, - 1$ (see assumption (d)) we have
\begin{align}
    \label{eq:condT}
    &&
    [D_k, T_l] 
    &= 0,
    &
    k 
    &\geq 1, 
    &
    l 
    &= - r, \dots, - 1.
    &&
\end{align}
Then the coefficients $T_{- r}$, $T_{- r + 1}$, \dots, $T_0$ are defined uniquely and are given by 
\begin{align}
    \label{eq:coefT}
    &&
    T_{- r + k}
    &= \sum_{l = 0}^k 
    (A_{- r + l} \, F_{k - l})_d
    ,
    &
    0 \leq k
    &\leq r.
    &&
\end{align}
In particular, $T_{-r} = A_{-r}$.
We remark that $T_0$ need not commute with $D_k$, $k \geq 1$.

In the case of recurrent system \eqref{eq:dcond_2}, we first simplify it, by using \eqref{eq:condT} and \eqref{eq:coefT}. The resulting system can be written as
\begin{align}
    k D_k
    - D_k \, T_0
    + \sum_{l = 1}^k \sum_{m = 0}^{k - l} (A_l \, F_{k - l - m})_d D_m
    + \sum_{l = - r}^{0} \sum_{m = 0}^{k} (A_l \, F_{k - l - m})_d D_{m}
    = 0
    ,
\end{align}
or, gathering the terms with $D_{k}$,
\begin{align}
    \label{eq:dcond_22}
    &&
    k D_k + [T_0, D_k]
    &= 
    - \sum_{l = 1}^k \sum_{m = 0}^{k - l} (A_{l} \, F_{k - l - m})_d D_m
    - \sum_{l = - r}^0 \sum_{m = 0}^{k - 1} (A_l \, F_{k - l - m})_d D_m,
    &
    k
    &\geq 1,
    &&
\end{align}
where $T_0 = \dsum_{l = 0}^r (A_{- l} F_{l})_d$.
Since the rhs of the expression contains $D_{l}$, $l < k$, we are able to define~$D_{k}$, as the operator $\brackets{k \, \mathbb{I} + [T_0, \, \cdot \,]}$ is invertible by assumption (e).
\end{proof}

In particular, one can find a formal solution near infinity of the systems \ref{eq:P20sys}, \ref{eq:P21sys}, and \ref{eq:P22sys}. Below we assume that the spectral parameter belongs to $\mathbb{C}$.

\begin{prop}
\label{thm:FNsol}
Let $F(\zeta)$, $D(\zeta)$, $T(\zeta)$ satisfy Proposition \ref{thm:irrsol}. Then there exists a unique formal solution, associated with pairs \ref{eq:P20_FNpair}, \ref{eq:P21_FNpair_2}, \ref{eq:P21_FNpair}, and \ref{eq:P22_FNpair}, in the form 
\begin{align}
    \label{eq:formsol_FN}
    &&
    Z_{form} (\zeta)
    &= F(\zeta) \, D(\zeta)
    \,
    \exp \brackets{ 
    \brackets{
    - \tfrac43 \zeta^3
    + \zeta \, x
    } 
    \begin{pmatrix}
    1 & 0 \\ 0 & -1
    \end{pmatrix}
    + \ln (\zeta) \, T_0
    }
    &
    \text{as}&
    &
    \zeta 
    &\to 
    \infty,
    &&
\end{align}
where
\begin{align}
    \text{\ref{eq:P20_FNpair}}:&&
    T_0
    &= \brackets{
    [f, g] - \theta
    } 
    \begin{pmatrix}
    1 & 0 \\ 0 & 1
    \end{pmatrix},
    &
    b
    &\in \mathbb{C};
    &&&&
    \text{\ref{eq:P21_FNpair_2}}:&
    T_0
    &= a
    \begin{pmatrix}
    1 & 0 \\ 0 & 1
    \end{pmatrix};
    \\[2mm]
    \text{\ref{eq:P21_FNpair}}:&&
    T_0
    &= \brackets{
    [f, g] + a
    } 
    \begin{pmatrix}
    1 & 0 \\ 0 & 1
    \end{pmatrix};
    &&&&&&
    \text{\ref{eq:P22_FNpair}}:&
    T_0
    &= 
    \begin{pmatrix}
    0 & 0 \\ 0 & 0
    \end{pmatrix}.
\end{align}
\end{prop}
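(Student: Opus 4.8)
The plan is to apply Proposition~\ref{thm:irrsol} to each of the four FN-type pairs. Every such pair has the form \eqref{eq:FNpair_form}, $A(\zeta)=A_2\zeta^2+A_1(x)\zeta+A_0(x)+A_{-1}(x)\zeta^{-1}$; writing it as $A(\zeta)=\sum_{k\geq-3}A_k\zeta^{-k-1}$ puts us in the setting of Proposition~\ref{thm:irrsol} with Poincar\'e rank $r=3$ and the identifications $A_{-3}=A_2$, $A_{-2}=A_1$, $A_{-1}=A_0$, $A_0=A_{-1}$. First I would verify hypothesis~(a): that the leading coefficient $A_{-3}$ is a constant diagonal matrix with distinct eigenvalues. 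This can be read off by pushing the Fabry-type map of Proposition~\ref{thm:P2_HTW_FN} through the top-degree terms of the corresponding HTW pair; combining the $\zeta^2$-contributions coming from the $A_1\mu$ and $A_0$ terms produces $A_{-3}=\mathrm{diag}(-4,4)$ for all four pairs, so $\alpha_1=-4\neq4=\alpha_2$ and (a) holds. Consistently, $\tfrac13A_{-3}\zeta^3=-\tfrac43\zeta^3\,\mathrm{diag}(1,-1)$ reproduces the leading exponent in \eqref{eq:formsol_FN}.

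Hypotheses~(b)--(d) are structural and are encoded in the ansatz \eqref{eq:formsol}; what remains of the exponential part is to pin down the exponents $T_{-3}$, $T_{-2}$, $T_{-1}$ through formula \eqref{eq:coefT}, $T_{-3+k}=\sum_{l=0}^k\brackets{A_{-3+l}F_{k-l}}_d$. For $k=0$ this gives $T_{-3}=A_{-3}=\mathrm{diag}(-4,4)$. For $k=1$, since $A_{-3}$ is diagonal and $F_1$ off-diagonal, $\brackets{A_{-3}F_1}_d=0$, so $T_{-2}=\brackets{A_{-2}}_d$; a short check shows that $A_{-2}=A_1$ is off-diagonal, whence $T_{-2}=0$, matching the absence of a $\zeta^2$-term in \eqref{eq:formsol_FN}. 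For $k=2$ one similarly finds $T_{-1}=\brackets{A_{-2}F_1}_d+\brackets{A_{-1}}_d=x\,\mathrm{diag}(1,-1)$. All of $T_{-3}$, $T_{-2}$, $T_{-1}$ lie in $\Mat_2(\mathbb{C})$, as required by~(d).

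The non-abelian content sits in $T_0$, obtained from \eqref{eq:coefT} at $k=r=3$:
\[
    T_0=\brackets{A_{-3}F_3}_d+\brackets{A_{-2}F_2}_d+\brackets{A_{-1}F_1}_d+\brackets{A_0}_d,
\]
in which $\brackets{A_{-3}F_3}_d=0$ because $A_{-3}$ is diagonal and $F_3$ off-diagonal. Thus I only need the off-diagonal coefficients $F_1$ and $F_2$, which are fixed recursively by \eqref{eq:ass_a_1}: at first order $[F_1,A_{-3}]=A_{-2}$ determines $F_1$ entrywise by dividing by $\alpha_j-\alpha_i=\pm8$, and at second order the recurrence collapses to $[F_2,A_{-3}]=(A_{-1})_{\mathrm{off}}$, determining $F_2$ the same way. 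Substituting these and expanding the non-commutative products $A_{-2}F_2$ and $A_{-1}F_1$, the two diagonal entries assemble---after the $f^2$- and $f^3$-type terms cancel---into a single element of $\mathcal{R}$, the surviving cross terms delivering the commutator $[f,g]$; this yields case by case $T_0=\brackets{[f,g]-\theta}\mathbb{I}$ for~\ref{eq:P20_FNpair}, $T_0=a\,\mathbb{I}$ for~\ref{eq:P21_FNpair_2}, $T_0=\brackets{[f,g]+a}\mathbb{I}$ for~\ref{eq:P21_FNpair}, and $T_0=0$ for~\ref{eq:P22_FNpair}. This bookkeeping---tracking the non-commutativity of $f$, $g$, $a$, $b$ so that the two diagonal entries really do coincide---is the main obstacle of the proof.

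It remains to secure hypothesis~(e). In each case $T_0$ is a \emph{scalar} matrix $c\,\mathbb{I}$ with $c\in\mathcal{R}$, so $\ad_{T_0}$ acts on $\Mat_2(\mathcal{R})$ entrywise as $\ad_c$, and the operator $\brackets{k\,\mathbb{I}+\ad_{T_0}}$ is invertible precisely when $k\,\mathrm{id}_{\mathcal{R}}+\ad_c$ is invertible on $\mathcal{R}$ for $k\geq1$. For~\ref{eq:P22_FNpair}, where $c=0$, this is immediate; in the other three cases it is exactly the standing assumption~(e) of Proposition~\ref{thm:irrsol}. Note that no circularity arises, since the exponents $T_{-3+k}$ were determined from the off-diagonal recurrence alone, which uses only~(a): $T_0$ is computed first and only then checked against~(e). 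With all hypotheses~(a)--(e) in force, Proposition~\ref{thm:irrsol} provides a unique formal solution of the form \eqref{eq:formsol}, which is precisely the claimed \eqref{eq:formsol_FN}.
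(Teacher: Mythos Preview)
Your argument has a concrete error at the step where you compute $T_{-1}$. For the pair \ref{eq:P20_FNpair} the constant term $A_0$ (your $A_{-1}$) has diagonal part $(2f^2+x+2b)\,\sigma_3$, not $(2f^2+x)\,\sigma_3$; combining this with $(A_{-2}F_1)_d=-2f^2\sigma_3$ gives
\[
    T_{-1}=(x+2b)\begin{pmatrix}1&0\\0&-1\end{pmatrix},
\]
not $x\,\sigma_3$ as you claim. This is exactly the point at which hypothesis~(d) of Proposition~\ref{thm:irrsol} bites: $T_{-1}\in\Mat_2(\mathbb{C})$ holds if and only if $b\in\mathbb{C}$, and this is how the restriction ``$b\in\mathbb{C}$'' stated in the proposition for \ref{eq:P20_FNpair} is actually \emph{derived}. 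By asserting $T_{-1}=x\,\sigma_3$ uniformly across the four pairs you have skipped over the one place where the non-abelian parameter $b$ obstructs the application of Proposition~\ref{thm:irrsol}, and you have not proved the clause ``$b\in\mathbb{C}$'' at all.

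The paper's proof handles precisely this: it computes $T_{-1}=(x+2b)\sigma_3$ for \ref{eq:P20_FNpair}, invokes assumption~(d) to force $b\in\mathbb{C}$, and only then obtains the claimed $T_0=([f,g]-\theta)\,\mathbb{I}$; without the restriction, the recursion for the diagonal part no longer simplifies via \eqref{eq:condT}, and $T_0$ picks up an extra $[D_1,b]$-term. Your outline is otherwise along the right lines, but this missing step is the substantive content distinguishing the \ref{eq:P20_FNpair} case from the other three, and it needs to be made explicit.
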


\begin{proof}
Consider the case of the \ref{eq:P20_FNpair} pair. In other cases computations are similar.

Using Proposition \ref{thm:irrsol}, one can find the off-diagonal part:
\begin{gather}
    \begin{aligned}
    F_1
    &= - \tfrac12 f 
    \begin{pmatrix}
    0 & -1 \\ 1 & 0
    \end{pmatrix}
    ,
    &&&&&
    F_2
    &= - \tfrac18 \brackets{
    2 f^2 - 2 g + x + 2 b
    }
    \begin{pmatrix}
    0 & 1 \\ 1 & 0
    \end{pmatrix}
    ,
    \end{aligned}
    \\[2mm]
    \begin{aligned}
    F_3
    &= \tfrac18 \brackets{
    - f^3 - b f - f b - x f - (\theta - \tfrac12)
    }
    \begin{pmatrix}
    0 & -1 \\ 1 & 0
    \end{pmatrix},
    &&&&&
    &\dots \,.
    \end{aligned}
\end{gather}
The coefficients $T_k$ are
\begin{gather}
    \begin{aligned}
    T_{-3}
    &= - 4 
    \begin{pmatrix}
    1 & 0 \\ 0 & -1
    \end{pmatrix},
    &&&
    T_{-2}
    &= 
    \begin{pmatrix}
        0 & 0 \\ 0 & 0
    \end{pmatrix},
    &&&
    T_{-1}
    &= \brackets{x + 2 b} 
    \begin{pmatrix}
    1 & 0 \\ 0 & - 1
    \end{pmatrix}
    ,
    \end{aligned}
    \\[2mm]
    T_0
    = \brackets{
    [f, g - b] - \theta
    }
    \begin{pmatrix}
    1 & 0 \\ 0 & 1
    \end{pmatrix}
    - 2 \, [D_1, b]
    \begin{pmatrix}
    1 & 0 \\
    0 & - 1
    \end{pmatrix}
    .
\end{gather}
By assumption (d) in Proposition \ref{thm:irrsol}, $ T_{-1} \in \Mat_2 (\mathcal{Z}(\mathcal{R}))$. This is true iff $b \in \mathbb{C}$. Then the coefficient $D_1$ satisfies the equation:
\begin{align}
    D_{1} + \LieBrackets{\LieBrackets{f, g}, D_{1}}
    = \brackets{
    - \tfrac12 f^2 g - \tfrac12 g f^2
    + \tfrac12 g^2 
    - \tfrac12 x g 
    + (\tfrac12 - \theta) f
    + \tfrac18 x^2
    }
    \begin{pmatrix}
    1 & 0 \\ 0 & -1
    \end{pmatrix}
    .
\end{align}
Since the operator $\mathbb{I} + \ad_{\LieBrackets{f, g}}$ is invertible, $D_1$ can be determined. Remaining coefficients $D_k$, $k > 1$, can be found by \eqref{eq:dcond_22}.
\end{proof}

\begin{prop}
\label{thm:JMsol}
Suppose that the series $F(\lambda)$, $D(\lambda)$, $T(\lambda)$ satisfy Proposition \ref{thm:irrsol}. Then there exists a unique formal solution, associated with pairs \ref{eq:P20_JMpair} and \ref{eq:P21_JMpair_2}, in the form 
\begin{align}
    \label{eq:formsol_JM}
    &&
    Y_{form} (\lambda)
    &= F(\lambda) \, D(\lambda)
    \,
    \exp \brackets{ 
    \brackets{
    \tfrac23 \lambda^3
    + \lambda \, x
    } 
    \begin{pmatrix}
    1 & 0 \\ 0 & 0
    \end{pmatrix}
    + \ln (\lambda) \, T_0
    }
    &
    \text{as}&
    &
    \lambda 
    &\to 
    \infty,
    &&
\end{align}
where
\begin{align}
    \text{\ref{eq:P20_JMpair}}:&&
    T_0
    &= [f, g]
    \begin{pmatrix}
    1 & 0 \\ 0 & 0
    \end{pmatrix}
    + (1 - \theta)
    \begin{pmatrix}
    1 & 0 \\ 0 & -1
    \end{pmatrix},
    &
    b
    &\in \mathbb{C};  
    &&&&&&&&
    \\[2mm]
    \text{\ref{eq:P21_JMpair_2}}:&&
    T_0
    &= [f, g]
    \begin{pmatrix}
    0 & 0 \\ 0 & 1
    \end{pmatrix}
    + (1 - a)
    \begin{pmatrix}
    1 & 0 \\ 0 & -1
    \end{pmatrix}.
\end{align}
\end{prop}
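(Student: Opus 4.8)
The plan is to run the construction of Proposition~\ref{thm:irrsol} on the two JM matrices $A(\lambda,x)$ exactly as was done for the FN pairs in the proof of Proposition~\ref{thm:FNsol}, treating \ref{eq:P20_JMpair} and \ref{eq:P21_JMpair_2} in parallel since their matrices differ only in a few entries. First I would recast each $A(\lambda,x)$ into the shape~\eqref{eq:irrsys_A}. Both are quadratic polynomials in $\lambda$, so at $\lambda=\infty$ the Poincar\'e rank is $r=3$, and the leading coefficient $A_{-3}$ is the coefficient of $\lambda^2$, namely $\begin{pmatrix} 2 & 0 \\ 0 & 0 \end{pmatrix}$ in both cases. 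This is already diagonal with distinct eigenvalues $2\neq 0$, so hypothesis~(a) holds with no preliminary diagonalization; I then read off $A_{-2}$ (the coefficient of $\lambda$) and $A_{-1}$ (the constant term), with $A_k=0$ for $k\geq 0$.

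Next I would generate the off-diagonal coefficients $F_k$ from the solvability relation~\eqref{eq:ass_a_1}; since $\alpha_1-\alpha_2=2\neq 0$, every entry $F_{k,ij}$ is uniquely determined. Feeding these into~\eqref{eq:coefT} produces the exponent coefficients $T_{-3+k}=\sum_{l=0}^{k}(A_{-3+l}F_{k-l})_d$. In particular $T_{-3}=A_{-3}=\begin{pmatrix} 2 & 0 \\ 0 & 0 \end{pmatrix}$, which reproduces the $\tfrac23\lambda^3\begin{pmatrix} 1 & 0 \\ 0 & 0 \end{pmatrix}$ term; because a product of a diagonal and an off-diagonal matrix is off-diagonal, the $k=1$ contribution has vanishing diagonal part, forcing $T_{-2}=0$ (there is no $\lambda^2$ term). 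I then continue the recurrence up to $T_0$.

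The crux is hypothesis~(d): the coefficients $T_{-3},T_{-2},T_{-1}$ must lie in $\Mat_2(\mathbb{C})$, not merely in $\Mat_2(\mathcal{R})$. As in the FN computation, where demanding $T_{-1}\in\Mat_2(\mathbb{C})$ forced $b\in\mathbb{C}$, the same reality check on the sub-leading exponent is what produces the constraint $b\in\mathbb{C}$ recorded for \ref{eq:P20_JMpair}: there the diagonal of $A_{-1}$ carries the term $2b$ in its lower-right slot, so $T_{-1}$ fails to be $\mathbb{C}$-valued unless $b\in\mathbb{C}$. For \ref{eq:P21_JMpair_2} the parameter $a$ enters only off-diagonal data and hence only $T_0$, so the sub-leading exponents are automatically $\mathbb{C}$-valued and no constraint arises. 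Once~(d) is enforced I would read off $T_0$ and match it against the stated formula — the commutator $[f,g]$ occupying the appropriate diagonal slot together with the scalar (resp.\ $a$-dependent) term $(1-\theta)$ (resp.\ $(1-a)$) multiplying $\begin{pmatrix} 1 & 0 \\ 0 & -1 \end{pmatrix}$ — noting that $T_0$ is permitted non-commutative entries precisely by hypothesis~(e).

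Finally, with $T_0$ fixed, the diagonal coefficients $D_k$ follow recursively from~\eqref{eq:dcond_22}, the operator $k\mathbb{I}+\ad_{T_0}$ being invertible by hypothesis~(e); uniqueness of the entire formal solution is then inherited directly from Proposition~\ref{thm:irrsol}. I expect the only genuine obstacle to be the bookkeeping in the third step: tracking which combinations of $f$, $g$, and $b$ survive in $T_{-1}$ (and confirming $T_{-2}=0$), since it is exactly the demand that these be $\mathbb{C}$-valued that simultaneously isolates the admissibility constraint $b\in\mathbb{C}$ and collapses $T_0$ to its stated form.
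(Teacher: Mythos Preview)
Your proposal is correct and follows the same approach as the paper: run the recursion of Proposition~\ref{thm:irrsol} on the explicit JM matrices, read off $T_{-3},T_{-2},T_{-1},T_0$ via~\eqref{eq:coefT}, and let hypothesis~(d) force $b\in\mathbb{C}$. The paper's proof simply carries this out more explicitly for \ref{eq:P20_JMpair}, listing the first few $F_k$ and all the $T_k$ (in particular $T_{-1}=(x+2b)\begin{pmatrix}1&0\\0&0\end{pmatrix}$, so the surviving $2b$ sits in the \emph{upper-left} slot after the $f,g$ contributions cancel, not the lower-right as your heuristic suggests), then writes the concrete equation determining $D_1$ before invoking~\eqref{eq:dcond_22}.
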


\begin{proof}
Consider the case of the \ref{eq:P20_JMpair} pair as an example. The coefficients $F_k$ and $T_{k}$ read as
\begin{gather}
    \begin{aligned}
    F_1
    &= 
    \begin{pmatrix}
    0 & \tfrac12 \brackets{
    2 f^2 - g + x + 2 b
    } 
    \\ - 1 & 0
    \end{pmatrix}
    ,
    &&&&&
    F_2
    &= 
    \begin{pmatrix}
    0 & \tfrac12 \brackets{
    - 2 f^3 + f g - f (x + 2 b) - (\theta - 1)
    } 
    \\ 
    - f & 0
    \end{pmatrix}
    ,
    \end{aligned}
    \\[2mm]
    \begin{aligned}
    F_3
    &= \tfrac12
    \begin{pmatrix}
    0 & 
    2 f^4 - f^2 g - g f^2 + \tfrac12 g^2
    + f^2 (x + 2 b)
    - \tfrac12 g (x + 2 b)
    \\ 
    - 2 f^2 + g & 0
    \end{pmatrix},
    &&&&&
    &\dots \,;
    \end{aligned}
\end{gather}
\begin{gather}
    \begin{aligned}
    T_{-3}
    &= 2
    \begin{pmatrix}
    1 & 0 \\ 0 & 0
    \end{pmatrix},
    &&&
    T_{-2}
    &= 
    \begin{pmatrix}
        0 & 0 \\ 0 & 0
    \end{pmatrix},
    &&&
    T_{-1}
    &= \brackets{x + 2 b} 
    \begin{pmatrix}
    1 & 0 \\ 0 & 0
    \end{pmatrix}
    ,
    \end{aligned}
    \\[2mm]
    T_0
    = \brackets{
    [f, g - 2 b] 
    - 2 \, [D_1, b]
    }
    \begin{pmatrix}
    1 & 0 \\ 0 & 0
    \end{pmatrix}
    + (1 - \theta)
    \begin{pmatrix}
    1 & 0 \\
    0 & - 1
    \end{pmatrix}
    .
\end{gather}
A similar reasoning as in Proposition \ref{thm:FNsol} leads to a restriction for the element $b$, i.e. $b$ should be commutative. Then $D_1$ is defined by the equation
\begin{align}
    D_1
    + \LieBrackets{
    [f, g] 
    \begin{pmatrix}
    1 & 0 \\ 0 & 0
    \end{pmatrix},
    D_1
    }
    = f [f, g]
    \begin{pmatrix}
    1 & 0 \\ 0 & 0
    \end{pmatrix}
    + \brackets{
    g f^2
    - \tfrac12 g^2 
    + \tfrac12 x g 
    - (1 - \theta) f 
    } 
    \begin{pmatrix}
    1 & 0 \\ 0 & -1
    \end{pmatrix}
    .
\end{align}
The coefficients $D_k$ can be found by \eqref{eq:dcond_22}, since the operator $\brackets{k \, \mathbb{I} + \ad_{T_0}}$ is invertible.
\end{proof}

\subsection{Monodromy data}
\label{sec:monsufr}
We will see below that some of the linearizations possess the isomonodromic monodromy data. As a result, the non-abelian monodromy surfaces can be obtained.

Before we start to discuss non-commutative generalizations of the monodromy surfaces related to the JM and FN pairs, let us recall how to construct them in the commutative case.

\subsubsection{The commutative case}
\phantom{}

\textbf{\textbullet \,\, The FN pair.} The commutative FN-type pair is \cite{flaschka1980monodromy}
\begin{align} \label{eq:P2_FNpair}
    \tag*{$\rm{FN}_2$}
    \left\{
    \begin{array}{lcl}
        \partial_{\zeta} Z (\zeta, x)
        &=&
        \left[
        \begin{pmatrix}
        - 4 & 0
        \\[0.9mm]
        0 & 4
        \end{pmatrix} \zeta^2
        + 
        \begin{pmatrix}
        0 & 4 f 
        \\[0.9mm]
        4 f & 0
        \end{pmatrix}
        \zeta
        + 
        \begin{pmatrix}
        2 f^2 + x
        &
        - 2 f^2 + 2 g - x
        \\[0.9mm]
        2 f^2 - 2 g + x
        &
        - 2 f^2 - x
        \end{pmatrix}
        \right.
        \\[3mm]
        && \qquad \qquad \qquad
        \left.
        + 
        \begin{pmatrix}
        - \theta
        & 
        \theta - \tfrac12
        \\[0.9mm]
        \theta - \tfrac12
        & 
        - \theta
        \end{pmatrix}
        \zeta^{-1}
         \right] Z (\zeta, x),
        \\[5mm]
        \partial_{x} Z (\zeta, x)
        &=& \LieBrackets{
        \begin{pmatrix}
        1 & 0
        \\[0.9mm]
        0 & - 1
        \end{pmatrix} 
        \zeta
        +
        \begin{pmatrix}
        0 & - f 
        \\[0.9mm]
        - f & 0
        \end{pmatrix}
         } Z (\zeta, x).
    \end{array}
    \right.
\end{align}
The corresponding formal solution near infinity:
\begin{align}
    \label{eq:formsol_FN2}
    &&
    Z_{form} (\zeta, x)
    &= F(\zeta, x) \, D(\zeta, x)
    \,
    \exp \brackets{ 
    \brackets{
    - \tfrac43 \zeta^3
    + \zeta \, x
    } 
    \begin{pmatrix}
    1 & 0 \\ 0 & -1
    \end{pmatrix}
    - \theta \, \ln (\zeta) \, 
    \begin{pmatrix}
    1 & 0 \\ 0 & 1
    \end{pmatrix}
    }
    &
    \text{as}&
    &
    \zeta 
    &\to 
    \infty.
    &&
\end{align}
The Stokes rays are defined by the equation $Re(\zeta^3) = 0$:
\begin{align}
    &&
    \arg(\zeta)
    &= \frac{\pi}{6} + \frac{\pi \, k}{3},
    &
    k
    &= -1, \dots, 4,
    &&
\end{align}
and, thus, there are six Stokes rays:
\begin{align}
    &&
    l_1
    &:
    &
    \arg(\zeta)
    &= - \frac{\pi}{6},
    &&
    &
    l_2
    &:
    &
    \arg(\zeta)
    &= \frac{\pi}{6},
    &&
    &
    l_3
    &:
    &
    \arg(\zeta)
    &= \frac{\pi}{2},
    &&
    \\[1mm]
    &&
    l_4
    &:
    &
    \arg(\zeta)
    &= \frac{5 \pi}{6},
    &&
    &
    l_5
    &:
    &
    \arg(\zeta)
    &= \frac{7 \pi}{6},
    &&
    &
    l_6
    &:
    &
    \arg(\zeta)
    &= \frac{3 \pi}{2}
    &&
\end{align}
and six Stokes sectors:
\begin{align}
    &&
    l_1 \subset \Omega_1
    &= \brackets{
    - \frac{\pi}{2}, \frac{\pi}{6}
    }
    ,
    &&
    &
    l_2 \subset \Omega_2
    &= \brackets{
    - \frac{\pi}{6}, \frac{\pi}{2}
    }
    ,
    &&
    &
    l_3 \subset \Omega_3
    &= \brackets{
    \frac{\pi}{6}, \frac{5 \pi}{6}
    }
    ,
    &&
    \\[1mm]
    &&
    l_4 \subset \Omega_4
    &= \brackets{
    \frac{\pi}{2}, \frac{7 \pi}{6}
    }
    ,
    &&
    &
    l_5 \subset \Omega_5
    &= \brackets{
    \frac{5 \pi}{6}, \frac{3 \pi}{2}
    }
    ,
    &&
    &
    l_6 \subset \Omega_6
    &= \brackets{
    \frac{7 \pi}{6}, \frac{11 \pi}{6}
    }
    .
    &&
\end{align}
Canonical fundamental solutions $Z_k = Z_k (\zeta, x)$ belong to the corresponding intersections of the Stokes sectors. In our case, we have
\begin{align}
    &&
    Z_1 
    &: &&\Omega_1 \cap \Omega_2 
    = \brackets{- \frac{\pi}{6}, \frac{\pi}{6}},
    &&&
    Z_2 
    &: &&\Omega_2 \cap \Omega_3 
    = \brackets{\frac{\pi}{6}, \frac{\pi}{2}},
    &&&
    Z_3 
    &: &&\Omega_3 \cap \Omega_4 
    = \brackets{\frac{\pi}{2}, \frac{5 \pi}{6}},
    &&
    \\[1mm]
    &&
    Z_4 
    &: &&\Omega_4 \cap \Omega_5 
    = \brackets{\frac{5 \pi}{6}, \frac{7 \pi}{6}},
    &&&
    Z_5 
    &: &&\Omega_5 \cap \Omega_6 
    = \brackets{\frac{7 \pi}{6}, \frac{3 \pi}{2}},
    &&&
    Z_6 
    &: 
    &&\Omega_6 \cap \Omega_1 
    = \brackets{\frac{3 \pi}{2}, \frac{11 \pi}{6}}.
    &&
\end{align}
When passing from one canonical solution $Z_k$ to another $Z_{k + 1}$ around the infinity, the canonical solution $Z_{k}$ is multiplied by the corresponding Stokes matrix $S_k$ on the right:
\begin{align}
    &&
    &&
    Z_{k + 1}
    &= Z_k \, S_k,
    &
    k
    &= 1, \dots, 5,
    &&
    &
    Z_1
    &= Z_6 (\zeta e^{2\pi i}, x) \, S_6.
    &&
\end{align}
By using the condition
\begin{align}
    \label{eq:Stokescond}
    &&
    S_k
    &\equiv Z_k^{-1} \, Z_{k + 1}
    = \lim_{\zeta \to \infty} 
    Z_k^{-1} \, Z_{k + 1}
    ,
    &
    \zeta
    &\in \Omega_k \cap \Omega_{k + 1},
    &&
\end{align}
one can verify that the Stokes matrices are given by
\begin{align}
    S_{1}
    &= 
    \begin{pmatrix}
    1 & 0 \\ s_1 & 1
    \end{pmatrix},
    &
    S_{2}
    &= 
    \begin{pmatrix}
    1 & s_2 \\ 0 & 1
    \end{pmatrix},
    &
    S_{3}
    &= 
    \begin{pmatrix}
    1 & 0 \\ s_3 & 1
    \end{pmatrix},
    &
    S_{4}
    &= 
    \begin{pmatrix}
    1 & s_4 \\ 0 & 1
    \end{pmatrix},
    &
    S_{5}
    &= 
    \begin{pmatrix}
    1 & 0 \\ s_5 & 1
    \end{pmatrix},
    &
    S_{6}
    &= 
    \begin{pmatrix}
    1 & s_6 \\ 0 & 1
    \end{pmatrix}.
\end{align}
\begin{figure}[H]
    \centering
    \scalebox{1.1}{\input{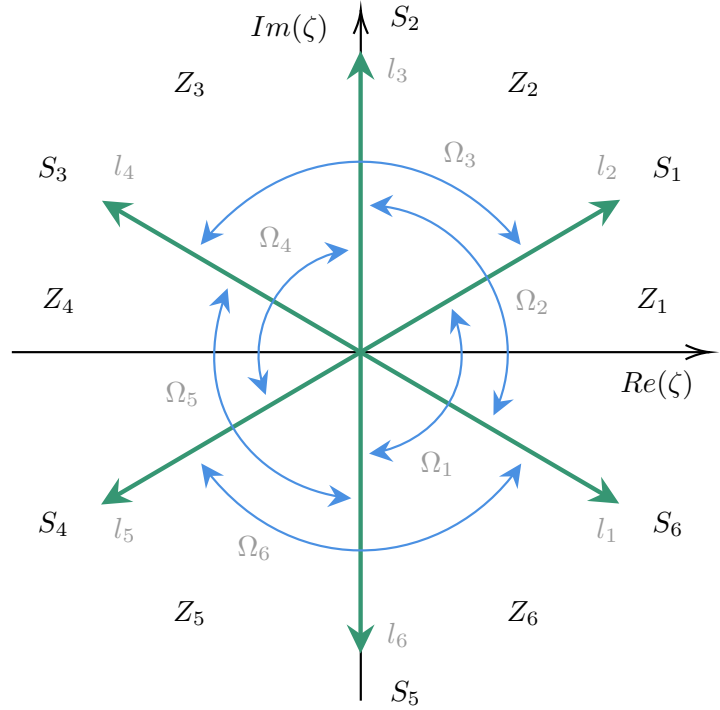}}
    \caption{The Stokes rays $l_k$, Stokes sectors $\Omega_k$, and Stokes matrices $S_k$ and the canonical solutions $Z_k$, where $k = 1, \dots, 6$.}
    \label{pic:FN_pair}
\end{figure}
The fundamental solution of the linear system \ref{eq:P2_FNpair} admits the symmetry $Z (e^{\pi i} \zeta, x) = \sigma_1 \, Z(\zeta, x)\,  \sigma_1$, where $\sigma_1 = \small \begin{pmatrix} 0 & 1 \\ 1 & 0 \end{pmatrix}$ is the Pauli matrix\footnote{We will use the standard notation for \href{https://en.wikipedia.org/wiki/Pauli_matrices}{the Pauli matrices}.}, that leads to the relations
\begin{align}
    \label{eq:FNsym}
    &&
    S_{k + 3}
    &= \sigma_1 \, S_k \, \sigma_1,
    &
    s_{k + 3}
    &= s_k,
    &
    k
    &= 1, 2, 3.
    &&
\end{align}

The topological monodromy around zero can be easily found, since the matrix $A_{-1} \in \Mat_2 (\mathbb{C})$ is diagonalizable:
\begin{align}
    &&
    A_{-1}
    &= 
    \begin{pmatrix}
    - \theta & \theta - \tfrac12 \\
    \theta - \tfrac12 & - \theta
    \end{pmatrix}
    = 
    G
    \begin{pmatrix}
    - 2 \theta + \tfrac12 & 0 \\
    0 & - \tfrac12
    \end{pmatrix}
    G^{-1}, 
    &
    G
    &= 
    \begin{pmatrix}
    - 1 & 1 \\ 1 & 1
    \end{pmatrix}
    .
    &&
\end{align}
Therefore, the topological monodromy $M_0$ and the formal monodromy $\Lambda_0$ are
\begin{align}
    &&
    M_0
    &= e^{2 \pi i \, m_0},
    &
    m_0
    &= 
    \begin{pmatrix}
    - 2 \theta + \tfrac12 & 0 \\
    0 & - \tfrac12
    \end{pmatrix},
    &&&
    \Lambda_0
    &= e^{2 \pi i \, T_0},
    &
    T_0
    &= - \theta \, 
    \begin{pmatrix}
    1 & 0 \\ 0 & 1
    \end{pmatrix}
    .
    &&
\end{align}
The monodromy relation\footnote{
In general, the monodromy relation at a singular point $\nu$ is written as \cite{jimbo1981monodromy1}
\begin{align}
    M^{(\nu)}
    &= \brackets{C^{(\nu)}}^{-1} \, e^{2 \pi i \, T_0^{(\nu)}} \, \brackets{
    S_1^{(\nu)} \, \dots S_{k_{\nu}}^{(\nu)}
    }^{-1} \, C^{(\nu)},
\end{align}
where $S_j^{(\nu)}$, $j = 1, \dots, k_{\nu}$, are Stokes multipliers at the point $\nu$, $T_0^{(\nu)}$, $M^{(\nu)}$ are formal monodromy and topological monodromy, respectively, and $C^{(\nu)} \in~SL_n(\mathbb{C})$.
} reads as
\begin{align}
    S_1 \, S_2 \, S_3 \, S_4 \, S_5 \, S_6 \, e^{- 2 \pi i \, T_0}
    &= C \, e^{- 2 \pi i \, m_0} \, C^{-1},
\end{align}
where $C \, C^{-1} = C^{-1} \, C = \mathbb{I}$ and we have fixed $\arg(\zeta) \in (0, 2 \pi]$. Using condition \eqref{eq:FNsym}, it can be rewritten as
\begin{align}
    &&
    \brackets{S_1 \, S_2 \, S_3 \, \sigma_1}^2
    &= C \, e^{- 2 \pi i \, (\theta - \frac12) \sigma_3} \, C^{-1},
    &
    \sigma_3
    &= 
    \begin{pmatrix}
        1 & 0 \\ 0 & -1
    \end{pmatrix}
    ,
    &&
\end{align}
Setting $\tilde \theta = \theta - \tfrac12$, $C = (c_{i,j})$, and $C^{-1} = (\tilde c_{i, j})$ and expanding the following matrix equation\footnote{This equation is defined up to a sign.}
\begin{align}
    &&
    S_1 \, S_2 \, S_3 \, \sigma_1
    &= C \, e^{- \pi i \, 
    \tilde \theta \sigma_3} \, C^{-1},
    &&
\end{align}
we get
\begin{align}   
    \begin{pmatrix}
    s_2
    &
    1 + s_2 \, s_3
    \\
    1 + s_1 \, s_2
    &
    s_1 + s_3 + s_1 \, s_2 \, s_3
    \end{pmatrix}
    &= 
    \begin{pmatrix}
    c_{11} \tilde c_{11} e^{- \pi i \, \tilde \theta}
    + c_{12} \tilde c_{21} e^{\pi i \, \tilde \theta}
    &
    c_{11} \tilde c_{12} e^{- \pi i \, \tilde \theta}
    + c_{12} \tilde c_{22} e^{\pi i \, \tilde \theta}
    \\
    c_{21} \tilde c_{11} e^{- \pi i \, \tilde \theta}
    + c_{22} \tilde c_{21} e^{\pi i \, \tilde \theta}
    &
    c_{21} \tilde c_{12} e^{- \pi i \, \tilde \theta}
    + c_{22} \tilde c_{22} e^{\pi i \, \tilde \theta}
    \end{pmatrix}.
\end{align}
From this system we deduce immediately that the Stokes multipliers are defined by the equation \cite{flaschka1980monodromy}, \cite{van2009moduli}
\begin{align}
    \label{eq:monsurf_FN}
    s_1 \, s_2 \, s_3 
    + s_1 + s_2 + s_3 
    &= 2 \sin (\pi \, \theta).
\end{align}
So, by $\theta$ and any two Stokes multipliers we can determine the third one. As a result, the space of solutions of the $\rm{P}_2$ equation is parameterized by two Stokes multipliers.

\medskip

\textbf{\textbullet \,\, The JM pair.}
In the commutative case of the JM pair \cite{jimbo1981monodromy},
\begin{align} \label{eq:P2_JMpair}
    \tag*{$\rm{JM}_2$}
    \left\{
    \begin{array}{lcl}
        \partial_{\lambda} Y (\lambda, x)
        &=&
        \left[
        \begin{pmatrix}
        2 & 0
        \\[0.9mm]
        0 & 0
        \end{pmatrix}
        \lambda^2
        + 
        \begin{pmatrix}
        0 & - 2 f^2 + g - x
        \\[0.9mm]
        - 2 & 0
        \end{pmatrix}
        \lambda
        \right.
        \\[3mm]
        && \qquad \qquad \quad \,\,
        \left.
        +
        \begin{pmatrix}
        - 2 f^2 + g
        & 
        2 f^3 - f g + x f + \theta - 1
        \\[0.9mm]
        - 2 f 
        & 
        2 f^2 - g + x
        \end{pmatrix}
         \right] Y (\lambda, x),
        \\[5mm]
        \partial_{x} Y (\lambda, x)
        &=& \LieBrackets{
        \begin{pmatrix}
        1 & 0 
        \\[0.9mm] 
        0 & 0
        \end{pmatrix} 
        \lambda
        +
        \begin{pmatrix}
        0 & - f^2 + \frac12 g - \frac12 x
        \\[0.9mm] 
        - 1 & f
        \end{pmatrix}
         } Y (\lambda, x),
    \end{array}
    \right.
\end{align}
the formal solution near the infinity can be written as
\begin{align}
    \label{eq:formsol_JM2}
    &&
    Y_{form} (\lambda, x)
    &= F(\lambda, x) \, D(\lambda, x)
    \,
    \exp \brackets{ 
    \brackets{
    \tfrac23 \lambda^3
    + \lambda \, x
    } 
    \begin{pmatrix}
    1 & 0 \\ 0 & 0
    \end{pmatrix}
    + (1 - \theta) \, \ln (\lambda) \, 
    \begin{pmatrix}
    1 & 0 \\ 0 & -1
    \end{pmatrix}
    }
    &
    \text{as}&
    &
    \lambda 
    &\to 
    \infty.
    &&
\end{align}

The Stokes rays are defined by the equation $Re(\lambda^3) = 0$ and coincide with those written above. Thus, we have the same picture as in Figure \ref{pic:FN_pair}. In contrast to the FN case, the Stokes matrices are
\begin{align}
    S_{1}
    &= 
    \begin{pmatrix}
    1 & s_1 \\ 0 & 1
    \end{pmatrix},
    &
    S_{2}
    &= 
    \begin{pmatrix}
    1 & 0 \\ s_2 & 1
    \end{pmatrix},
    &
    S_{3}
    &= 
    \begin{pmatrix}
    1 & s_3 \\ 0 & 1
    \end{pmatrix},
    &
    S_{4}
    &= 
    \begin{pmatrix}
    1 & 0 \\ s_4 & 1
    \end{pmatrix},
    &
    S_{5}
    &= 
    \begin{pmatrix}
    1 & s_5 \\ 0 & 1
    \end{pmatrix},
    &
    S_{6}
    &= 
    \begin{pmatrix}
    1 & 0 \\ s_6 & 1
    \end{pmatrix}.
\end{align}
Since this pair has only one singular point, the monodromy relation is the following
\begin{align}
    S_1 \, S_2 \, S_3 \, S_4 \, S_5 \, S_6 \, e^{2 \pi i \, \theta \, \sigma_3}
    &= \mathbb{I},
\end{align}
where we have assumed that $\arg(\lambda) \in (0, 2 \pi]$.
This relation can be rewritten as
\begin{align}
    &&
    S_3 \, S_4 \, S_5 \, S_6
    &= \brackets{\alpha^{- \sigma_3} \, S_1 \, S_2}^{-1}
    ,
    &
    \alpha
    &= e^{- 2 \pi i \, \theta},
    &&
\end{align}
or, in explicit form, 
\begin{align*}
    \begin{pmatrix}
    1 
    + s_3 \, s_4 
    + s_3 \, s_6
    + s_5 \, s_6
    + s_3 \, s_4 \, s_5 \, s_6
    & s_3 + s_5 
    + s_3 \, s_4 \, s_5
    \\
    s_4 + s_6 
    + s_4 \, s_5 \, s_6
    & 
    1 + s_4 \, s_5
    \end{pmatrix}
    &= 
    \begin{pmatrix}
    \alpha & - \alpha^{-1} s_1
    \\
    - \alpha \, s_2
    & \alpha^{-1} \brackets{
    1 + s_1 \, s_2
    }
    \end{pmatrix}
    .
\end{align*}
Note that one is able to eliminate from this system the variables $s_1$ and $s_2$. Thus, the resulting equation is
\begin{align}
    \label{eq:premon_JM}
    \alpha \, (1 + s_4 \, s_5)
    &= 1 
    + s_3 \, s_4 
    + s_4 \, s_5 
    + s_3 \, s_4^2 \, s_5 
    + s_3 \, s_6 
    + s_5 \, s_6 
    + 2 s_3 \, s_4 \, s_5 \, s_6
    + s_4 \, s_5^2 \, s_6 
    + s_3 \, s_4^2 \, s_5^2 \, s_6.
\end{align}
Introducing new variables
\begin{align}
    &&
    x_1
    &= 1 + s_3 \, s_4,
    &
    x_2
    &= 1 + s_4 \, s_5,
    &
    x_3
    &= 1 + s_5 \, s_6
    &&
\end{align}
and using the condition
\begin{align}
    \alpha 
    &= 1 + s_3 \, s_4 + s_3 \, s_6 + s_5 \, s_6 + s_3 \, s_4 \, s_5 \, s_6,
\end{align}
the equation \eqref{eq:premon_JM} becomes \cite{van2009moduli}
\begin{align}
    \label{eq:monsurf_JM}
    x_1 \, x_2 \, x_3
    - x_1 - \alpha \, x_2 - x_3
    + (1 + \alpha)
    &= 0,
\end{align}
and parametrizes the space of solutions of the second \Painleve equation.

\begin{rem}
These two cubics, \eqref{eq:monsurf_FN} and \eqref{eq:monsurf_JM}, are connected with each other by the scaling
\begin{align}
    \label{eq:FNtoJM_monsurf}
    &&
    s_1
    &= i \alpha^{\frac12} \, x_1,
    &
    s_2
    &= i \alpha^{- \frac12} \, x_2,
    &
    s_3
    &= i \alpha^{\frac12} \, x_3.
    &&
\end{align}
\end{rem}

\subsubsection{The non-commutative case}

In this subsection, we are going to derive non-commutative generalizations of the monodromy surfaces \eqref{eq:monsurf_JM} and \eqref{eq:monsurf_FN} associated with the JM and FN type pairs, respectively. 

In the non-abelian case, we have non-commutative topological monodromy and formal monodromy matrices and the question is how do they affect the form of the equations defining the monodromy surfaces. 
Regarding the FN pairs, we will follow the idea of their implementation to the monodromy relation suggested in the paper \cite{Bertola2018}. Namely, assume that we have \tomato{two cuts $\mathbb{R}_{\pm}$} on the complex plane and fix the principal value argument of the spectral parameter in such a way as the formal monodromy will appear in the desired order when going around the infinity. Namely, the modified formal solution reads \cite{Bertola2018}
\begin{align}
    \label{eq:formsol_FN_mod}
    &&
    \tilde Z_{form} (\zeta)
    &= F(\zeta) \, D(\zeta)
    \,
    \exp \brackets{ 
    \brackets{
    - \tfrac43 \zeta^3
    + \zeta \, x
    } 
    \begin{pmatrix}
    1 & 0 \\ 0 & -1
    \end{pmatrix}
    + \brackets{
    \ln (\zeta) + 2 \pi i \, \varepsilon
    } \, T_0
    }
    &
    \text{as}&
    &
    \zeta 
    &\to 
    \infty,
    &&
\end{align}
where $\varepsilon = 0$ and $\varepsilon = 1$ in the upper and lower half planes and $\arg(\zeta) \in (0, 2 \pi]$.
\begin{figure}[H]
    \centering
    \scalebox{1.1}{\input{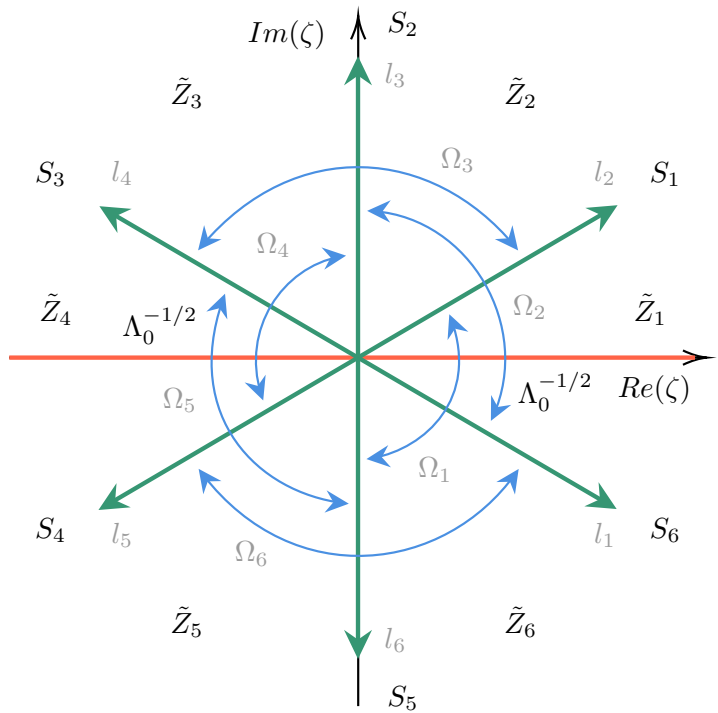}}
    \caption{The Stokes rays $l_k$, Stokes sectors $\Omega_k$, and Stokes matrices $S_k$ and the canonical solutions $\tilde Z_k$, where $k = 1, \dots, 6$, for the modified fundamental solution $\tilde Z_{form} (\zeta)$.}
    \label{pic:FN_pair_nc}
\end{figure}

For the \ref{eq:P20_FNpair} pair such a generalization was derived in \cite{Bertola2018}. There are some misprints\footnote{Typos in the signs of the evaluated matrix exponent imply incorrect right-hand-sides of the relations (4.29) and (4.30) and, as a result, lead to another monodromy surfaces.}, that we correct in the following
\begin{prop}
\label{thm:FN20_monsurf}
The Stokes multipliers $s_k \in \mathcal{R}$, $k = 1, 2, 3$, related to the \ref{eq:P20_FNpair} pair, satisfy the equation
\begin{align}
    \label{eq:monsurf_FN_nc}
    s_1 \, s_2 \, s_3 
    + s_1 
    + s_2 
    + s_3
    &= 2 \sin(\pi \, \theta) \, q^{-1},
\end{align}
where $q := e^{\pi i \, [g, f]}\footnote{The function $e$ can be introduced by the Taylor series allowed in $\mathcal{R}$} \in \mathcal{R}$.
\end{prop}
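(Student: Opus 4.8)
The plan is to mimic the commutative derivation of \eqref{eq:monsurf_FN} step by step, carrying along the single non-central ingredient produced by the non-abelian formal monodromy. By Proposition~\ref{thm:FNsol} the formal solution of the \ref{eq:P20_FNpair} pair near $\zeta=\infty$ has formal monodromy $T_0=([f,g]-\theta)\,\mathbb{I}$ with $b\in\mathbb{C}$, so that $\Lambda_0:=e^{2\pi i\,T_0}=e^{2\pi i([f,g]-\theta)}\,\mathbb{I}$ is a scalar matrix and $\Lambda_0^{-1/2}=e^{\pi i\theta}\,q\,\mathbb{I}$ with $q=e^{\pi i[g,f]}$. The whole point is that $q$ is \emph{not} central in $\mathcal{R}$; everything else in the FN computation is formally identical to the commutative case, and the task is to see how this one non-central factor deforms the right-hand side of the affine cubic.

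First I would record the Stokes geometry, which is dictated only by the leading term: the Stokes rays are again $\mathrm{Re}(\zeta^3)=0$ and the six Stokes matrices keep the alternating lower/upper-triangular shape, now with multipliers $s_k\in\mathcal{R}$. I would then verify the reflection symmetry $Z(e^{\pi i}\zeta,x)=\sigma_1\,Z(\zeta,x)\,\sigma_1$ directly from the coefficients of the \ref{eq:P20_FNpair} pair; this yields, exactly as in \eqref{eq:FNsym}, the reductions $S_{k+3}=\sigma_1 S_k\sigma_1$ and $s_{k+3}=s_k$, so that only $s_1,s_2,s_3$ remain and $M:=S_1S_2S_3\sigma_1$ has the same entries as in the commutative computation. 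In parallel I would diagonalize the residue $A_{-1}$ at $\zeta=0$: since $b,\theta\in\mathbb{C}$ this residue lies in $\Mat_2(\mathbb{C})$ and is diagonalizable over $\mathbb{C}$, so the topological monodromy $C\,e^{2\pi i\,m_0}\,C^{-1}$ and its eigenvalues (functions of $\theta$ only) are obtained precisely as before.

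The non-abelian input enters only through the placement of $T_0$. Following \cite{Bertola2018}, I would use the modified formal solution \eqref{eq:formsol_FN_mod} with cuts along $\mathbb{R}_\pm$ and $\varepsilon\in\{0,1\}$ in the upper and lower half planes (Figure~\ref{pic:FN_pair_nc}); this distributes the formal monodromy symmetrically, inserting a factor $\Lambda_0^{-1/2}$ on each crossing of the real axis, so that the monodromy relation takes the split form $S_1S_2S_3\,\Lambda_0^{-1/2}\,S_4S_5S_6\,\Lambda_0^{-1/2}=C\,e^{-2\pi i\,m_0}\,C^{-1}$. Since $\Lambda_0^{-1/2}=e^{\pi i\theta}q\,\mathbb{I}$ is a scalar matrix it commutes with $\sigma_1$, so the symmetry collapses the left-hand side to $\big(M\,\Lambda_0^{-1/2}\big)^2$, and the half-loop reflection symmetry passes directly to the half-monodromy relation for $M\,\Lambda_0^{-1/2}$ whose right-hand side is governed by $e^{-\pi i\,m_0}$. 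I would then extract the scalar cubic as the non-abelian analog of the trace used for \eqref{eq:monsurf_FN}: summing the two diagonal entries, the common right factor $\Lambda_0^{-1/2}=e^{\pi i\theta}q\,\mathbb{I}$ pulls out to give $(s_1s_2s_3+s_1+s_2+s_3)\,e^{\pi i\theta}q$ on the left, while on the right the eigenvalues of $m_0$ produce the complex scalar $-i(e^{2\pi i\theta}-1)=2\sin(\pi\theta)\,e^{\pi i\theta}$; cancelling the central factor $e^{\pi i\theta}$ and moving $q$ across yields exactly \eqref{eq:monsurf_FN_nc} with $q^{-1}$ on the right.

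The main obstacle is the non-commutative bookkeeping of the single non-central scalar $q$. Because $q$ does not commute with the multipliers $s_k\in\mathcal{R}$, one must track on which side of every product each factor $\Lambda_0^{\pm1/2}$ sits; it is precisely at this step that the sign errors flagged for \cite{Bertola2018} occur, an incorrect sign of the evaluated exponent sending $q^{-1}\mapsto q$ and spoiling the trigonometric prefactor. Two further points require care and should be written out: first, that the half-monodromy relation (equivalently, the square root of $\big(M\,\Lambda_0^{-1/2}\big)^2$) is taken consistently inside $\Mat_2(\mathcal{R})$ along the branch fixed by the cuts $\mathbb{R}_\pm$, and second, that the remaining matrix components impose no constraint beyond the single scalar relation, so that \eqref{eq:monsurf_FN_nc} is indeed the whole content of this monodromy surface. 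Finally, the two ingredients that let the commutative skeleton be reused verbatim, namely the reflection symmetry $Z(e^{\pi i}\zeta,x)=\sigma_1 Z(\zeta,x)\sigma_1$ and the $\mathbb{C}$-diagonalizability of $A_{-1}$ (which uses $b,\theta\in\mathbb{C}$), must be checked directly for the \ref{eq:P20_FNpair} pair.
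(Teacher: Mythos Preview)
Your plan follows the paper's proof essentially step for step: same Stokes geometry, same $\sigma_1$-symmetry reducing to $s_1,s_2,s_3$, same split monodromy relation with $e^{-\pi i T_0}$ inserted on each half (this is exactly the paper's $S_1S_2S_3\,e^{-\pi i T_0}\,S_4S_5S_6\,e^{-\pi i T_0}=C M_0^{-1}C^{-1}$), same collapse to $(S_1S_2S_3\,Q\,\sigma_1)^2$ using that $Q=q\,\mathbb{I}$ commutes with $\Mat_2(\mathbb{C})$, same passage to the half-relation, and the same extraction of the cubic by summing the two diagonal entries.

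There is one genuine gap, and it sits at the very point where the non-abelian content enters. You assert that on the right ``the eigenvalues of $m_0$ produce the complex scalar $2\sin(\pi\theta)e^{\pi i\theta}$''. But the connection matrix $C$ lives in $\Mat_2(\mathcal{R})$, and for non-commutative entries the naive trace is not conjugation-invariant: the diagonal sum of $C\,e^{-\pi i\,\tilde\theta\sigma_3}\,C^{-1}$ equals
\[
(c_{11}\tilde c_{11}+c_{21}\tilde c_{12})\,e^{-\pi i\tilde\theta}
+(c_{12}\tilde c_{21}+c_{22}\tilde c_{22})\,e^{\pi i\tilde\theta},
\]
and neither $CC^{-1}=\mathbb{I}$ nor $C^{-1}C=\mathbb{I}$ directly gives $c_{11}\tilde c_{11}+c_{21}\tilde c_{12}=1$ (the products are in the wrong order). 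The paper closes exactly this gap by invoking the $\tau$-action of Definition~\ref{def:nctr}: since $C^{-1}C=\mathbb{I}$ holds, the $\tau$-transposed entrywise relations hold as well, and the two bracketed quantities above are identified with $\tau\big((C^{-1}C)_{11}\big)=1$ and $\tau\big((C^{-1}C)_{22}\big)=1$, yielding $2\cos(\pi\tilde\theta)=2\sin(\pi\theta)$. Without this (or an equivalent device) your ``non-abelian analog of the trace'' step is an assertion, not a computation; you should insert the $\tau$-argument at precisely that point.
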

\begin{proof}
\phantom{}

\textbullet \,\, 
\textbf{The Stokes data.}
Note that the Stokes data are similar to the commutative case. Therefore, we have six Stokes rays,
\begin{align}
    &&
    \arg(\zeta)
    &= \frac{\pi}{6} + \frac{\pi \, k}{3},
    &
    k
    &= -1, \dots, 4,
    &&
\end{align}
six Stokes sectors, and the Stokes matrices
\begin{align}
    S_{1}
    &= 
    \begin{pmatrix}
    1 & 0 \\ s_1 & 1
    \end{pmatrix},
    &
    S_{2}
    &= 
    \begin{pmatrix}
    1 & s_2 \\ 0 & 1
    \end{pmatrix},
    &
    S_{3}
    &= 
    \begin{pmatrix}
    1 & 0 \\ s_3 & 1
    \end{pmatrix},
    &
    S_{4}
    &= 
    \begin{pmatrix}
    1 & s_4 \\ 0 & 1
    \end{pmatrix},
    &
    S_{5}
    &= 
    \begin{pmatrix}
    1 & 0 \\ s_5 & 1
    \end{pmatrix},
    &
    S_{6}
    &= 
    \begin{pmatrix}
    1 & s_6 \\ 0 & 1
    \end{pmatrix}
\end{align}
satisfying relation \eqref{eq:FNsym}.

\textbullet \,\, 
\textbf{Topological monodromy and formal monodromy.}
The topological monodromy $M_0$ coincides with the commutative one. The formal monodromy $\Lambda_0$ is (see Proposition \ref{thm:FNsol})
\begin{align}
    &&
    \Lambda_0
    &= e^{2 \pi i \, T_0},
    &
    T_0
    &= \brackets{
    [f,g] - \theta
    } \, 
    \begin{pmatrix}
    1 & 0 \\ 0 & 1
    \end{pmatrix}
    .
    &&
\end{align}
Note that the formal monodromy is an integral of motion of the \ref{eq:P20sys} system, i.e. $\partial_x(T_0) = 0$.

\textbullet \,\,
\textbf{The monodromy relation.}
Let $C \in \Mat_n (\mathcal{R})$ be an invertible matrix and $C^{-1}$ be its inverse, i.e. 
$$C \, C^{-1} = C^{-1} \, C = \mathbb{I}.$$ To obtain the square root form of the monodromy relation, we require \tomato{two cuts $\mathbb{R}_{\pm}$} (they are marked in \tomato{red} in Figure \ref{pic:FN_pair_nc}) and fix the principal value argument as $\arg(\zeta) \in (0, 2 \pi]$. Then the formal solution can be modified in a such way (see eq. \eqref{eq:formsol_FN_mod}) as the monodromy relation reads 
\begin{align}
    S_1 \, S_2 \, S_3 
    \, e^{- \pi i \, T_0} \, 
    S_4 \, S_5 \, S_6 
    \, e^{- \pi i \, T_0}
    &= C \, M_0^{-1} \, C^{-1}.
\end{align}
Note that
\begin{lem}
    For any $a \in \mathcal{R}$ and any matrix $B \in \Mat_n{(\mathcal{Z}(\mathcal{R}))}$, we have
    \begin{align}
        [a \, \mathbb{I}, B]
        &= 0.
    \end{align}
\end{lem}
Then, using symmetry condition \eqref{eq:FNsym} and the lemma, this relation can be rewritten as
\begin{align}
    &&
    \brackets{
    S_1 \, S_2 \, S_3 \, Q \, \sigma_1
    }^2
    &= C \, e^{- 2 \pi i \,
    \brackets{\theta - \frac12} \sigma_3} \, C^{-1},
    &
    Q
    := q \, \mathbb{I}.
    &&
\end{align}
Setting $\tilde \theta = \theta - \tfrac12, C = (c_{i,j})$, and $C^{-1} = (\tilde{c}_{i, j})$, where
\begin{align}
    \begin{pmatrix}
        c_{11} \, \tilde c_{11}
        + c_{12} \, \tilde c_{21}
        & 
        c_{11} \, \tilde c_{12}
        + c_{12} \, \tilde c_{22}
        \\
        c_{21} \, \tilde c_{11}
        + c_{22} \, \tilde c_{21}
        & 
        c_{21} \, \tilde c_{12}
        + c_{22} \, \tilde c_{22}
    \end{pmatrix}
    &= 
    \begin{pmatrix}
        \tilde c_{11} \, c_{11}
        + \tilde c_{12} \, c_{21}
        & 
        \tilde c_{11} \, c_{12}
        + \tilde c_{12} \, c_{22}
        \\
        \tilde c_{21} \, c_{11}
        + \tilde c_{22} \, c_{21}
        & 
        \tilde c_{21} \, c_{12}
        + \tilde c_{22} \, c_{22}
    \end{pmatrix}
    = 
    \begin{pmatrix}
        1 & 0 \\ 0 & 1
    \end{pmatrix},
\end{align}
and expanding the following matrix equation
\begin{align}
    S_1 \, S_2 \, S_3 \, Q \, \sigma_1
    &= C \, e^{- \pi i \, 
    \tilde{\theta} \sigma_3} \, C^{-1},
\end{align}
we get
\begin{align}   
    \begin{pmatrix}
    s_2 \, q 
    &
    (1 + s_2 \, s_3) \, q
    \\
    (1 + s_1 \, s_2) \, q
    &
    (s_1 + s_3 
    + s_1 \, s_2 \, s_3) \, q
    \end{pmatrix}
    &= 
    \begin{pmatrix}
    c_{11} \tilde c_{11} e^{- \pi i \, \tilde \theta}
    + c_{12} \tilde c_{21} e^{\pi i \, \tilde \theta}
    &
    c_{11} \tilde c_{12} e^{- \pi i \, \tilde \theta}
    + c_{12} \tilde c_{22} e^{\pi i \, \tilde \theta}
    \\
    c_{21} \tilde c_{11} e^{- \pi i \, \tilde \theta}
    + c_{22} \tilde c_{21} e^{\pi i \, \tilde \theta}
    &
    c_{21} \tilde c_{12} e^{- \pi i \, \tilde \theta}
    + c_{22} \tilde c_{22} e^{\pi i \, \tilde \theta}
    \end{pmatrix}
    .
\end{align}
Since the condition $C \, C^{-1} = C^{-1} \, C = \mathbb{I}$ holds, the $\tau$-transposed conditions for matrix entries hold also. Thus, one can obtain 
\begin{align}
    s_2 \, q 
    + (s_1 + s_3 + s_1 \, s_2 \, s_3) \, q
    &= \brackets{c_{11} \tilde c_{11} + c_{21} \tilde c_{12}}
    e^{- \pi i \, \tilde \theta}
    + \brackets{c_{12} \tilde c_{21} + c_{22} \tilde c_{22}}
    e^{\pi i \, \tilde \theta}
    \\
    &= \tau \brackets{\brackets{C^{-1} \, C}_{11}} \, e^{- \pi i \, \tilde \theta}
    + \tau \brackets{\brackets{C^{-1} \, C}_{22}} \, e^{\pi i \, \tilde \theta}
    = 2 \cos ({\pi \, \tilde \theta})
    = 2 \sin \brackets{\pi \, \theta}.
\end{align}
\end{proof}

Although for the pairs \ref{eq:P21_FNpair_2} and \ref{eq:P22_FNpair} the formal monodromy is also an integral of motion for the systems \ref{eq:P21sys} and \ref{eq:P22sys}, respectively, the topological monodromy is not like that, because of the term $[f, g]$. Furthermore, besides the fact that the matrix $A_{-1}$ can be reduced to the diagonal form by the same gauge transformation as in the commutative case, the topological monodromy is not an element of $\Mat_2 (\mathcal{Z}(\mathcal{R}))$, which contradicts assumption (a) in Proposition \ref{thm:irrsol}.

Considering the JM type pairs, the formal monodromy is a conserved value for the \ref{eq:P20sys} system only. So, a non-ableian generalization of \eqref{eq:monsurf_JM} is given in the following statement.
\begin{prop}
The Stokes multipliers $x_k \in \mathcal{R}$, $k = 1, 2, 3$, related to the \ref{eq:P20_JMpair} pair, satisfy the condition
\begin{align}
    \label{eq:monsurf_JM_nc}
    x_1 \, x_2 \, x_3
    - x_1 
    - x_2 \, (\alpha \, q^2)
    - x_3
    + (1 + \alpha) \, q^2
    &= 0,
\end{align}
where 
$q := e^{\pi i \, [g, f]} \in \mathcal{R}$ 
and 
$\alpha := e^{- 2 \pi i \, \theta} \in \mathbb{C}$.
\end{prop}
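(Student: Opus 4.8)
The plan is to retrace the commutative derivation of \eqref{eq:monsurf_JM}, changing only the single ingredient that becomes non-central. First I would note that the Stokes geometry of the \ref{eq:P20_JMpair} pair is identical to the commutative JM case: the rays are cut out by $\mathrm{Re}(\lambda^3)=0$, there are six sectors, and the six Stokes matrices $S_1,\dots,S_6$ are the alternating unipotent matrices carrying the multipliers $s_1,\dots,s_6\in\mathcal R$. The decisive structural feature is that \ref{eq:P20_JMpair} is polynomial in $\lambda$, so $\lambda=\infty$ is the only singular point; hence the monodromy relation is the single cyclic relation
\begin{align}
    S_1\,S_2\,S_3\,S_4\,S_5\,S_6\,\Lambda_0^{-1} &= \mathbb{I}, \qquad \Lambda_0 := e^{2\pi i\,T_0},
\end{align}
with no conjugation by a connection matrix. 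This is exactly what makes the present computation shorter than the FN one of Proposition \ref{thm:FN20_monsurf}, where a second (Fuchsian) singular point forced the introduction of $C$ and of the square-root trick.

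Next I would evaluate the formal monodromy. By Proposition \ref{thm:JMsol} one has $T_0=\mathrm{diag}\!\left([f,g]+1-\theta,\ \theta-1\right)$, and since a diagonal matrix over $\mathbb C$ commutes with a diagonal matrix over $\mathcal R$ (Lemma \ref{thm:diag_lem}) the exponential is computed entrywise, giving $\Lambda_0=\mathrm{diag}\!\left(\alpha\,q^{-2},\,\alpha^{-1}\right)$, because $e^{2\pi i[f,g]}=q^{-2}$ and $e^{-2\pi i\theta}=\alpha$. At this point I would verify that the data is genuinely isomonodromic by checking $D(T_0)=0$ along \ref{eq:P20sys}: using $b\in\mathbb C$ and the equations of motion one finds $[f,g]'=[f',g]+[f,g']=-[f^2,g]+[f^2,g]=0$, so the multipliers $s_k$ are first integrals and the relation they satisfy is a bona fide surface.

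The computational core then parallels the commutative case. I would rewrite the cyclic relation as $S_3S_4S_5S_6=(S_1S_2)^{-1}\,\Lambda_0$, expand both sides into four $\mathcal R$-valued equations, solve the two off-diagonal ones for $s_1$ and $s_2$ (which now carry factors $\alpha^{\pm1}$ and $q^{\pm2}$ on the appropriate side), and substitute them into the $(2,2)$-equation to obtain the non-abelian analogue of \eqref{eq:premon_JM}. The $(1,1)$-equation supplies the normalisation $1+s_3s_4+s_3s_6+s_5s_6+s_3s_4s_5s_6=\alpha q^{-2}$, and after introducing $x_1=1+s_3s_4$, $x_2=1+s_4s_5$, $x_3=1+s_5s_6$ the relation should collapse to \eqref{eq:monsurf_JM_nc}, which manifestly reduces to \eqref{eq:monsurf_JM} at $q=1$.

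The hard part is the non-commutative bookkeeping in this last step. In the commutative derivation one freely reorders factors to assemble $\alpha(1+s_4s_5)$ and to use $s_3s_4+s_3s_6+s_5s_6+s_3s_4s_5s_6=\alpha-1$; here $q$ need not commute with the Stokes multipliers, so every product must be kept in order and the factors of $q^{\pm2}$ produced by $\Lambda_0$ tracked through each substitution and through the inversion that turns $q^{-2}$ into $q^2$. The real content is to show that these factors can be gathered to one side, so that $q^2$ appears precisely as the right multiplier in $x_2\,(\alpha q^2)$ and in the constant $(1+\alpha)q^2$, while the cubic $x_1x_2x_3$ and the linear terms $x_1,x_3$ remain free of $q$; this relies on the only non-central entry of $\Lambda_0$ sitting in the $(1,1)$-slot together with the conservation of $q$. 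As in Remark \ref{rem:countour}, I assume throughout that the contour in the Laplace transform \eqref{eq:genLaplacetransform} can be chosen so that the integration-by-parts boundary terms vanish.
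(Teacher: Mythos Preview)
Your outline is broadly on target---the Stokes geometry, the single cyclic relation at $\lambda=\infty$, the evaluation of $\Lambda_0$ from Proposition~\ref{thm:JMsol}, and the verification that $D(T_0)=0$ are all exactly as in the paper. But the core of the argument has a genuine gap.

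When you substitute the off-diagonal equations for $s_1,s_2$ directly into the $(2,2)$-equation you will obtain
\[
\alpha\,(1+s_4s_5)\;=\;1\;+\;(s_4+s_6+s_4s_5s_6)\,q^{\pm2}\,(s_3+s_5+s_3s_4s_5),
\]
with the non-central factor $q^{\pm2}$ \emph{sandwiched} between two blocks of Stokes multipliers. Since $q$ does not commute with the $s_k$ (cf.\ the Remark following the proposition), there is no way to ``gather these factors to one side'' by bookkeeping alone; your assertion that this can be done is precisely the step that requires a new idea. The paper supplies it: one applies the $\tau$-action of Definition~\ref{def:nctr} (the anti-automorphism reversing products) to the $(2,2)$-equation, turning $\alpha(1+s_4s_5)=1+s_2s_1$ into $\alpha(1+s_5s_4)=1+s_1s_2$. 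After this transposition the substitution produces $s_1s_2=(\dots)(\dots)\,q^{2}$ with $q^{2}$ on the far right, and the computation closes. This is also why the paper's change of variables is $x_2=1+s_5s_4$, \emph{not} $x_2=1+s_4s_5$ as you wrote; your commutative-looking choice is incompatible with the stated surface.

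Two smaller points. First, your remark invoking Remark~\ref{rem:countour} about the Laplace contour is irrelevant here: the monodromy-surface derivation uses only the \ref{eq:P20_JMpair} pair and Proposition~\ref{thm:JMsol}, not the Laplace transform of Section~\ref{sec:HTWtoJM}. Second, be careful with the sign in $q^{\pm2}$: with $q=e^{\pi i[g,f]}$ one has $e^{2\pi i[f,g]}=q^{-2}$, so reconcile your $(1,1)$-normalisation with the $q^{2}$ appearing in the paper's expanded right-hand side before completing the algebra.
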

\begin{proof}
\phantom{}

\textbullet \,\, 
\textbf{The Stokes data.}
Note that the Stokes data are the same as in the commutative case. In particular, the Stokes matrices are
\begin{align}
    S_{1}
    &= 
    \begin{pmatrix}
    1 & s_1 \\ 0 & 1
    \end{pmatrix},
    &
    S_{2}
    &= 
    \begin{pmatrix}
    1 & 0 \\ s_2 & 1
    \end{pmatrix},
    &
    S_{3}
    &= 
    \begin{pmatrix}
    1 & s_3 \\ 0 & 1
    \end{pmatrix},
    &
    S_{4}
    &= 
    \begin{pmatrix}
    1 & 0 \\ s_4 & 1
    \end{pmatrix},
    &
    S_{5}
    &= 
    \begin{pmatrix}
    1 & s_5 \\ 0 & 1
    \end{pmatrix},
    &
    S_{6}
    &= 
    \begin{pmatrix}
    1 & 0 \\ s_6 & 1
    \end{pmatrix},
\end{align}
where $s_k \in \mathcal{R}$, $k = 1, \dots, 6$.

\textbullet \,\, 
\textbf{Topological monodromy and formal monodromy.}
The topological monodromy $M_0$ is just the unit matrix, since this pair has only one singular point. According to Proposition \ref{thm:JMsol}, the formal monodromy $\Lambda_0$ is
\begin{align}
    &&
    \Lambda_0
    &= e^{2 \pi i \, T_0},
    &
    T_0
    &= [f, g]
    \begin{pmatrix}
    1 & 0 \\ 0 & 0
    \end{pmatrix}
    + (1 - \theta)
    \begin{pmatrix}
    1 & 0 \\ 0 & -1
    \end{pmatrix}.
    &&
\end{align}

\textbullet \,\,
\textbf{The monodromy relation.}
Let us fix the principal value argument as $\arg(\lambda) \in \left(0, 2 \pi\right]$. Then the monodromy relation is
\begin{align}
    S_1 \, S_2 \, S_3 
    \, S_4 \, S_5 \, S_6 \, e^{- 2 \pi i \, T_0}
    &= \mathbb{I}
    ,
\end{align}
or, equivalently,
\begin{align}
    S_3 \, S_4 \, S_5 \, S_6
    &= \brackets{
    e^{- 2 \pi i \, T_0}
    \, S_1 \, S_2
    }^{-1}
    .
\end{align}
Setting $q = e^{\pi i \, [g, f]}$, $\alpha = e^{- 2 \pi i \, \theta}$ and expanding the matrix equation above, we get the system
\begin{align*}
    \begin{pmatrix}
    1
    + s_3 \, s_4
    + s_3 \, s_6
    + s_5 \, s_6
    + s_3 \, s_4 \, s_5 \, s_6
    & 
    s_3
    + s_5
    + s_3 \, s_4 \, s_5
    \\
    s_4
    + s_6
    + s_4 \, s_5 \, s_6
    & 
    1 + s_4 \, s_5
    \end{pmatrix}
    &= 
    \begin{pmatrix}
    \alpha \, q^2
    & 
    - \alpha^{-1} \, s_1
    \\
    - \alpha \, s_2 \, q^2
    & 
    \alpha^{-1} \, (1 + s_2 \, s_1)
    \end{pmatrix}
    ,
\end{align*}
from which one is able to eliminate the variables $s_1$ and $s_2$. Indeed, on the one hand, we have
\begin{align}
    \alpha \,
    (1 + s_4 \, s_5)
    &= 
    1 + s_2 \, s_1
    &
    \overset{\text{by the $\tau$-action}}{\Longleftrightarrow}
    &&
    \alpha \,
    (1 + s_5 \, s_4)
    &= 
    1 + s_1 \, s_2
    \\
    \label{eq:ncmonJM_22cond}
    &&
    \overset{\phantom{\text{by the $T$-action}}}{\Longleftrightarrow}
    &&
    (1 + s_5 \, s_4) (\alpha \, q^2)
    &= 
    (1 + s_1 \, s_2) \, q^2.
\end{align}
On the other hand, $(1 + s_1 \, s_2) \, q^2 = q^2 + s_1 \, s_2 \, q^2 = q^2 + \brackets{- \alpha^{-1} s_1} \, \brackets{- \alpha \, s_2 \, q^2}$ and, thus, the rhs of \eqref{eq:ncmonJM_22cond} can be rewritten as follows:
\begin{align}
    (1 + s_5 \, s_4) (\alpha \, q^2)
    = q^2 
    + \brackets{
    s_3
    + s_5
    + s_3 \, s_4 \, s_5
    } \, 
    \brackets{
    s_4
    + s_6
    + s_4 \, s_5 \, s_6
    },
    \\[2mm]
    \label{eq:premon_JM_nc}
    \begin{aligned}
    (1 + s_5 \, s_4) (\alpha \, q^2)
    = q^2 
    + s_3 \, s_4 
    + s_3 \, s_6 
    + s_5 \, s_4 
    + s_5 \, s_6 
    + s_3 \, s_4 \, s_5 \, s_4 
    \\[1mm]
    + \, 2 s_3 \, s_4 \, s_5 \, s_6
    + s_5 \, s_4 \, s_5 \, s_6 
    + s_3 \, s_4 \, s_5 \, s_4 \, s_5 \, s_6
    .
    \end{aligned}
\end{align}
Introducing the variables
\begin{align}
    &&
    x_1
    &= 1 + s_3 \, s_4,
    &
    x_2
    &= 1 + s_5 \, s_4,
    &
    x_3
    &= 1 + s_5 \, s_6
    &&
\end{align}
and using the condition
\begin{align}
    \alpha \, q^2
    &= 1 
    + s_3 \, s_4 
    + s_3 \, s_6 
    + s_5 \, s_6 
    + s_3 \, s_4 \, s_5 \, s_6,
\end{align}
equation \eqref{eq:premon_JM_nc} becomes
\begin{align}
    x_1 \, x_2 \, x_3
    - x_1 
    - x_2 \, (\alpha \, q^2)
    - x_3
    + (1 + \alpha) \, q^2
    &= 0.
\end{align}
\end{proof}

\begin{rem}
One can verify that the transformation \eqref{eq:FNtoJM_monsurf} cannot be generalized to the non-commutative case, since, in general, the element $q$ does not commute with the Stokes multipliers. In fact, this issue does not concern with the modification of the formal solution $Z_{form} (\zeta)$. Regarding the same cuts as in the FN case and fixing the principal value argument as $\arg(\lambda) \in (0, 2 \pi]$, we arrive at the monodromy relation for the JM pair of the form
\begin{align}
    &&
    S_3 \, Q \, S_4 \, S_5 \, S_6
    &= \brackets{
    e^{2 \pi i \, \sigma_3} \, Q \, S_1 \, S_2
    }^{-1},
    &
    Q
    &= \small
    \begin{pmatrix}
        q & 0 \\ 0 & 1
    \end{pmatrix}.
    &&
\end{align}
It leads to the equation
\begin{equation}
    (x_1 - 1 + q) \, x_2 \, x_3 
    - x_1 - \alpha \, x_2 - q \, x_3 
    + (2 + \alpha - q)
    = 0,
\end{equation}
that is a particular case of \eqref{eq:monsurf_JM}, since, after a change of the variables, one can rewrite it as
\begin{equation}
    x_1 \, x_2 \, x_3 
    - x_1 - x_2 \, (\alpha q^2) 
    - x_3 
    + (1 + \alpha)
    = 0.
\end{equation}
We also note that the non-abelian monodromy surfaces crucially depend on the principal value argument.
\end{rem}

\begin{rem}
Considering the matrix algebra $\Mat_n(\mathbb{C})$ instead of $\mathcal{R}$, the Stokes multipliers have $3 \, n^2$ arbitrary parameters with one condition for $n^2$ parameters. Therefore, the number of the arbitrary parameters is equal to $2 \, n^2$ and parameterizes the space of solutions of the matrix \ref{eq:P20sys} system.
\end{rem}

\subsection{Poisson structure}
\label{sec:PoisStr}

Going towards the papers \cite{mazzocco2012confluence}, \cite{chekhov2017painleve}, we would like to consider a natural Poisson structure related to non-commutative monodromy surfaces. Recall that commutative monodromy surfaces can be regarded as a family of affine cubics. The cubic surface has a volume form defined by the Poincar\'e residue formulas. This 2-form is holomorphic on the non-singular part of the affine cubic and defines the Poisson brackets on the surface in the natural way. In order to generalize this structure to the non-abelian case, we will start with the non-commutative Poisson brackets.

Since obtained monodromy manifolds can be written in a polynomial form as well\footnote{It is easy to rewrite relation \eqref{eq:monsurf_FN} in the polynomial form.}, we prefer to work with an associative free algebra instead of the ring $\mathcal{R}$. Namely, consider $\mathcal{A} = \mathbb{C} \angleBrackets{x_1, x_2, x_3, q}$ that is an unital associative free algebra over a field $\mathbb{C}$ formed by four generators $x_1$, $x_2$, $x_3$, and $q$.
\begin{defn}
Let $D : \mathcal{A} \to \mathcal{A}$ be an $\mathbb{C}$-linear map satisfying the proprieties
    \begin{enumerate}
        \item $D(f \, g) = D(f) \, g + f \, D(g)$ for any $f$, $g \in \mathcal{A}$,
        \vspace{1mm}
        
        \item $D(\alpha) = 0$ for any $\alpha \in \mathbb{C}$,
        \vspace{1mm}
        
        \item $D(q) = 0$.
    \end{enumerate}
    Such a map we will call \textit{a derivation} of the algebra $\mathcal{A}$. 
\end{defn}

In our case, we have an extra information given by a relation of the form
\begin{equation}
    \varphi (x_1, x_2, x_3, q)
    = 0,
\end{equation}
where $\varphi = \varphi (x_1, x_2, x_3, q)$ is polynomial in all its variables and $\deg \varphi (x_1, x_2, x_3) \leq 3$. Thus, we will introduce Poisson brackets on $\quot{\mathcal{A}}{\angleBrackets{\varphi}}$\footnote{For the shortness, we will omit the ideal.} as follows. 

\begin{defn}
\label{def:PoissonBr}
    Let $\PoissonBrackets{ \, \cdot \, , \, \cdot \, }: \mathcal{A} \times \mathcal{A} \to \mathcal{A}$ be an $\mathbb{C}$-linear map defined as
    \begin{enumerate}
        \item $\PoissonBrackets{x_i, x_j} = \varepsilon_{ijk} \, \partial_k \varphi$,
        where $\varepsilon_{ijk}$ is the antisymmetric symbol and $\partial_k = D_{x_k}$,
        \vspace{1mm}

        \item $\PoissonBrackets{x_i, q} = 0$ for any $x_i \in \mathcal{A}$,
        \vspace{1mm}

        \item $\PoissonBrackets{x_i, x_j} = - \PoissonBrackets{x_j, x_i}$ for any $x_i$, $x_j \in \mathcal{A}$,
    \end{enumerate}
    and satisfying \textit{the left Leibniz rule}
    $
        \PoissonBrackets{f \, g, h}
        = f \, \PoissonBrackets{g, h}
        + g \, \PoissonBrackets{f, h}
    $
    for any $f$, $g$, $h \in \mathcal{A}$.
\end{defn}

\begin{prop}
\label{thm:jacobi}
The map given in Definition {\rm\ref{def:PoissonBr}} satisfies the Jacobi identity
\begin{equation}
    \PoissonBrackets{
    \PoissonBrackets{f, g},
    h
    }
    + 
    \PoissonBrackets{
    \PoissonBrackets{g, h},
    f
    }
    +
    \PoissonBrackets{
    \PoissonBrackets{h, f},
    g
    }
    = 0,
\end{equation}
for any $f$, $g$, $h \in \mathcal{A}$.
\end{prop}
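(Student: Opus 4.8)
The plan is to follow the standard route for showing that a biderivation defines a Poisson bracket: reduce the Jacobi identity to the generators $x_1, x_2, x_3, q$ and then verify it there by a direct computation. Write the Jacobiator as
\begin{align}
    J(f, g, h)
    &= \PoissonBrackets{\PoissonBrackets{f, g}, h}
    + \PoissonBrackets{\PoissonBrackets{g, h}, f}
    + \PoissonBrackets{\PoissonBrackets{h, f}, g};
\end{align}
it is $\mathbb{C}$-linear and invariant under cyclic permutations of its arguments, so by induction on the length of a monomial it suffices to understand how $J$ behaves on products and then to evaluate it on triples of generators.

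First I would complement the left Leibniz rule of Definition \ref{def:PoissonBr} with the companion rule in the second argument obtained from antisymmetry, and expand $J(f_1 f_2, g, h)$ by means of both. After regrouping, the expansion splits into contributions in which $f_1$ or $f_2$ is carried outside a copy of the Jacobiator, namely $J(f_2, g, h)$ or $J(f_1, g, h)$, together with cross-terms in which one inner bracket is differentiated against $f_1$ and the other against $f_2$. The crucial point of this step is that the cross-terms cancel once antisymmetry is invoked; this makes $J$ a derivation in its first argument and, by cyclicity, in each of the three. Consequently $J \equiv 0$ on all of $\mathcal{A}$ as soon as it vanishes on generators.

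On generators the identity collapses drastically. Because $\PoissonBrackets{x_i, q} = 0$, and hence $\PoissonBrackets{q, x_i} = 0$, together with $D(q) = 0$, every value of $J$ involving $q$ is zero, so only the triple $(x_1, x_2, x_3)$ has to be treated; for it
\begin{align}
    J(x_1, x_2, x_3)
    &= \PoissonBrackets{\partial_3 \varphi, x_3}
    + \PoissonBrackets{\partial_1 \varphi, x_1}
    + \PoissonBrackets{\partial_2 \varphi, x_2}.
\end{align}
Each summand is computed by letting the derivation $\PoissonBrackets{\,\cdot\,, x_l}$ act on $\partial_l \varphi$: at every occurrence of a generator $x_m$ inside $\partial_l \varphi$ it inserts $\PoissonBrackets{x_m, x_l} = \varepsilon_{mln} \, \partial_n \varphi$ in place of $x_m$. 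Summing over the occurrences turns the answer into a sum of the shape $\varepsilon_{mln} \, (\partial_m \partial_l \varphi) \, (\partial_n \varphi)$, and I would use the commutativity of the formal derivations, $\partial_m \partial_l = \partial_l \partial_m$, together with the antisymmetry of $\varepsilon_{mln}$ in $m$ and $l$ to kill the whole sum, exactly as in the classical Nambu--Jacobi identity for a single function.

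The main obstacle is the ordering of the non-commuting factors. Unlike the commutative case, the inserted factor $\partial_n \varphi$ does not sit next to $\partial_m \partial_l \varphi$ but is wedged between the two pieces of $\partial_l \varphi$ surrounding the differentiated occurrence of $x_m$, so the naive cancellation of the symmetry of mixed derivatives against the antisymmetry of $\varepsilon$ no longer applies verbatim. The task is therefore to check that, after summing the three cyclic terms, the surviving monomials carry their factors in matching order, so that the would-be commutator remainders drop out. The degree bound $\deg \varphi (x_1, x_2, x_3) \leq 3$ keeps this bookkeeping finite, since then each $\partial_l \varphi$ is at most quadratic and only a handful of orderings can occur; this, the commutativity of the $\partial_i$, and the relation $D(q) = 0$ are exactly the ingredients that force the cancellation. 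This ordering analysis is the only genuinely non-trivial part of the proof.
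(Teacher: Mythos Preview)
Your overall plan matches the paper's: reduce the Jacobi identity to the single triple $(x_1,x_2,x_3)$ and verify it there by a direct computation, using $\deg\varphi(x_1,x_2,x_3)\le 3$ to keep the check finite. The paper simply asserts this reduction and then appeals to ``a straightforward computation''; your identification of the degree bound as the ingredient that makes the generator-level check tractable is exactly right.

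Where your proposal adds to the paper is in trying to \emph{justify} the reduction, and this is where there is a genuine gap. With the left Leibniz rule $\{fg,h\}=f\{g,h\}+g\{f,h\}$ of Definition~\ref{def:PoissonBr} and its companion $\{h,fg\}=f\{h,g\}+g\{h,f\}$ obtained from antisymmetry, expanding $J(f_1f_2,g,h)$ gives
\begin{align}
J(f_1f_2,g,h)
=f_1\,J(f_2,g,h)+f_2\,J(f_1,g,h)
+\bigl[\{f_2,g\},\{f_1,h\}\bigr]+\bigl[\{f_1,g\},\{f_2,h\}\bigr],
\end{align}
so the cross-terms are honest commutators in the free associative algebra $\mathcal{A}$, not zero. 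Antisymmetry of the bracket does not remove them; in the classical argument you are invoking it is the \emph{commutativity of the ambient ring} that kills these terms, and that is precisely what is absent here. Hence $J$ is not a derivation in each slot and your induction on monomial length does not close as stated. (With the ordinary rule $\{fg,h\}=f\{g,h\}+\{f,h\}g$ the analogous cross-terms do cancel, which is presumably the computation you had in mind.) The paper does not address this point either; it bypasses the issue by declaring the reduction and performing only the finite generator check.
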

\begin{proof}
    It is enough to prove the Jacobi identity for the generators of the algebra with non-vanishing Poisson brackets, i.e. the identity
\begin{equation}
    \PoissonBrackets{
    \PoissonBrackets{x_1, x_2},
    x_3
    }
    + 
    \PoissonBrackets{
    \PoissonBrackets{x_2, x_3},
    x_1
    }
    +
    \PoissonBrackets{
    \PoissonBrackets{x_3, x_1},
    x_2
    }
    = 0,
\end{equation}
where $\varphi = \varphi (x_1, x_2, x_3, q)$ is a polynomial in all its variables and $\deg \varphi (x_1, x_2, x_3) \leq 3$. That can be done by a straightforward computation.
\end{proof}

\begin{rem}
    Using the $\tau$-action, one can modify Definition \ref{def:PoissonBr} to introduce the right Leibniz rule.
\end{rem}

Thanks to Proposition \ref{thm:jacobi}, the defined brackets are Poisson. We will use them in the further study of the non-abelian volume forms related to the non-abelian cubics. 

\section{Discussion}
Proposition \ref{thm:irrsol} does not allow to work with non-commutative coeffitient $A_{-r}$. Its generalization to a fully-non-commutative case will be helpful to consider a wider class of systems of the form \eqref{eq:irrsys}. We expect that one should change the form of the formal solution near a singular point, avoiding the exponential part. On the other hand, this proposition is still useful to construct non-abelian monodromy surfaces for other types of the non-commutative \Painleve equations, as the isomonodromic Lax pairs are presented for them (see, e.g., \cite{Kawakami_2015}, \cite{Bertola2018}). In addition, this non-commutative cubics can be applied for determining a discrete dynamics that will help to derive non-abelian analogs for the discrete Painlev\'e equations. 

One more interesting question is related to the canonical dynamics on the space of monodromy data (see, e.g., \cite{paul2023dynamics})\footnote{The author is so grateful to Vladimir Rubtsov, who pointed to this paper and this problem.}. These dynamics can be derived from the wild dynamics on the \Painleve foliation \cite{klimes2016wild}. In order to do this, it is necessary to construct a correct definition of log-canonical coordinates on the space of monodromy data. Note that, in the non-abelian setting, $x_1$, $x_2$, $x_3$, and $q$ are generators of the free unital associative algebra $\mathcal{A}$ and, thus, it is not clear how to define the log-canonical symplectic structure. However, in the commutative case, the Poisson bracket defines the Poison tensor. If it is non-singular, then its inverse determines the symplectic structure. Probably, we can introduce a non-abelian symplectic structure in a similar way, by using an appropriate generalization of the $\log$-function to the non-commutative case.

\appendix
\section{FN-type pairs}
\label{ap:FNpairs}

\subsection{Case \texorpdfstring{\ref{eq:P20sys}}{P20}}
\begin{align} \label{eq:P20_FNpair}
    \tag*{$\rm{FN}_2^0$}
    \left\{
    \begin{array}{lcl}
        \partial_{\zeta} Z (\zeta, x)
        &=&
        \left[
        \begin{pmatrix}
        - 4 & 0
        \\[0.9mm]
        0 & 4
        \end{pmatrix} \zeta^2
        + 
        \begin{pmatrix}
        0 & 4 f 
        \\[0.9mm]
        4 f & 0
        \end{pmatrix}
        \zeta
        + 
        \begin{pmatrix}
        2 f^2 + x + 2 b
        &
        - 2 f^2 + 2 g - x - 2 b
        \\[0.9mm]
        2 f^2 - 2 g + x + 2 b
        &
        - 2 f^2 - x - 2 b
        \end{pmatrix}
        \right.
        \\[3mm]
        && \qquad
        \left.
        + 
        \begin{pmatrix}
        - \theta
        & 
        \theta - \tfrac12
        \\[0.9mm]
        \theta - \tfrac12
        & 
        - \theta
        \end{pmatrix}
        \zeta^{-1}
         \right] Z (\zeta, x),
        \\[5mm]
        \partial_{x} Z (\zeta, x)
        &=& \LieBrackets{
        \begin{pmatrix}
        1 & 0
        \\[0.9mm]
        0 & - 1
        \end{pmatrix} 
        \zeta
        +
        \begin{pmatrix}
        0 & - f 
        \\[0.9mm]
        - f & 0
        \end{pmatrix}
         } Z (\zeta, x).
    \end{array}
    \right.
\end{align}

\subsection{Case \texorpdfstring{\ref{eq:P21sys}}{P21}}
\begin{align} \label{eq:P21_FNpair_2}
    \tag*{$\rm{FN}_2^1$}
    \left\{
    \begin{array}{lcl}
        \partial_{\zeta} Z (\zeta, x)
        &=&
        \left[
        \begin{pmatrix}
        - 4 & 0
        \\[0.9mm]
        0 & 4
        \end{pmatrix} \zeta^2
        + 
        \begin{pmatrix}
        0 & 4 f 
        \\[0.9mm]
        4 f & 0
        \end{pmatrix}
        \zeta
        + 
        \begin{pmatrix}
        2 f^2 + x & - 2 f^2 + 2 g - x
        \\[0.9mm]
        2 f^2 - 2 g + x & - 2 f^2 - x
        \end{pmatrix}
        \right.
        \\[3mm]
        && \qquad
        \left.
        +
        \begin{pmatrix}
         [g, f] - a 
         & 
         - [g, f] + a - \frac12
         \\[0.9mm]
        - [g, f] + a - \frac12
        & 
        [g, f] - a
        \end{pmatrix}
        \zeta^{-1}
         \right] Z (\zeta, x),
        \\[5mm]
        \partial_{x} Z (\zeta, x)
        &=& \LieBrackets{
        \begin{pmatrix}
        1 & 0
        \\[0.9mm]
        0 & - 1
        \end{pmatrix} 
        \zeta
        +
        \begin{pmatrix}
        0 & - f 
        \\[0.9mm]
        - f & 0
        \end{pmatrix}
         } Z (\zeta, x).
    \end{array}
    \right.
\end{align}
\begin{align} \label{eq:P21_FNpair}
    \tag*{$\rm{FN'}_2^1$}
    \left\{
    \begin{array}{lcl}
        \partial_{\zeta} Z (\zeta, x)
        &=&
        \left[
        \begin{pmatrix}
        - 4 & 0
        \\[0.9mm]
        0 & 4
        \end{pmatrix} \zeta^2
        + 
        \begin{pmatrix}
        0 & 4 f 
        \\[0.9mm]
        4 f & 0
        \end{pmatrix}
        \zeta
        + 
        \begin{pmatrix}
        2 f^2 + x & - 2 f^2 + 2 g - x
        \\[0.9mm]
        2 f^2 - 2 g + x & - 2 f^2 - x
        \end{pmatrix}
        \right.
        \\[3mm]
        && \qquad
        \left.
        +
        \begin{pmatrix}
         a 
         & 
         a - \frac12
         \\[0.9mm]
        a - \frac12
        & 
        a
        \end{pmatrix}
        \zeta^{-1}
         \right] Z (\zeta, x),
        \\[5mm]
        \partial_{x} Z (\zeta, x)
        &=& \LieBrackets{
        \begin{pmatrix}
        1 & 0
        \\[0.9mm]
        0 & - 1
        \end{pmatrix} 
        \zeta
        +
        \begin{pmatrix}
        f & - f 
        \\[0.9mm]
        - f & f
        \end{pmatrix}
         } Z (\zeta, x).
    \end{array}
    \right.
\end{align}

\subsection{Case \texorpdfstring{\ref{eq:P22sys}}{P22}}
\begin{align} \label{eq:P22_FNpair}
    \tag*{$\rm{FN}_2^2$}
    \left\{
    \begin{array}{lcl}
        \partial_{\zeta} Z (\zeta, x)
        &=&
        \left[
        \begin{pmatrix}
        - 4 & 0
        \\[0.9mm]
        0 & 4
        \end{pmatrix} \zeta^2
        + 
        \begin{pmatrix}
        0 & 4 f 
        \\[0.9mm]
        4 f & 0
        \end{pmatrix}
        \zeta
        + 
        \begin{pmatrix}
        2 f^2 + x
        & 
        - 2 f^2 + 2 g - x - \frac43 b
        \\[0.9mm]
        2 f^2 - 2 g + x + \frac43 b
        &
        - 2 f^2 - x
        \end{pmatrix}
        \right.
        \\[3mm]
        && \qquad
        \left.
        + \left[ g - \frac23 b, f \right]
        \begin{pmatrix}
        1 & - 1
        \\[0.9mm] 
        - 1 & 1
        \end{pmatrix}
        \zeta^{-1}
        +
        \brackets{
        \frac43 b f + a - \tfrac12
        }
        \begin{pmatrix}
        0 & 1 \\ 1 & 0
        \end{pmatrix}
        \zeta^{-1}
        \right.
        \\[3mm]
        && \qquad \qquad
        \left.
        +
        \brackets{\frac23 b f^2 - \frac13 b g + \frac13 x b + \frac49 b^2}
        \begin{pmatrix}
        1 & -1
        \\[0.9mm] 
        1 & -1
        \end{pmatrix}
        \zeta^{-2}
         \right] Z (\zeta, x),
        \\[5mm]
        \partial_{x} Z (\zeta, x)
        &=& \LieBrackets{
        \begin{pmatrix}
        1 & 0 
        \\[0.9mm] 
        0 & - 1
        \end{pmatrix}
        \zeta
        + 
        \begin{pmatrix}
        f & - f
        \\[0.9mm]
        - f & f
        \end{pmatrix}
        + \tfrac16 
        \begin{pmatrix}
        b & -b
        \\[0.9mm] 
        b & - b
        \end{pmatrix}
        \zeta^{-1}
         } Z (\zeta, x).
    \end{array}
    \right.
\end{align}

    \medskip 
    \subsubsection*{Availability of data and materials}
    Not applicable.
    
    \subsubsection*{Competing interests}
    The authors declare that they have no competing interests.

    \subsubsection*{Funding}
    International Laboratory of Cluster Geometry HSE, RF Government grant № 075-15-2021-608; 
    Young Russian Mathematics award.

    \bibliographystyle{alpha}
    \bibliography{bib}
\end{document}